\newcommand{\dP}{\mathrm{P}}
\newcommand{\dQ}{\mathrm{Q}}
\newcommand{\bPP}[1]{{\dP_{#1}}}
\newcommand{\bQQ}[1]{{\dQ_{#1}}}
\newcommand{\bPr}[1]{{\mathbb{P}}\left(#1\right)}
\newcommand{\bP}[2]{\mathrm{P}_{#1}\left({#2}\right)}
\newcommand{\bEE}{{\mathbb{E}}}
\newcommand{\cA}{{\mathcal A}}
\newcommand{\cB}{{\mathcal B}}
\newcommand{\cE}{{\mathcal E}}
\newcommand{\cK}{{\mathcal K}}
\newcommand{\cM}{{\mathcal M}}
\newcommand{\mN}{{\mathbb N}}
\newcommand{\cO}{{\mathcal O}}
\newcommand{\cP}{{\mathcal P}}
\newcommand{\cR}{{\mathcal R}}
\newcommand{\cT}{{\mathcal T}}
\newcommand{\cV}{{\mathcal V}}
\newcommand{\bx}{\mathbf{x}}
\newcommand{\cX}{{\mathcal X}}
\newcommand{\by}{\mathbf{y}}
\newcommand{\ep}{\epsilon}
\newcommand{\ttlvrn}[2]{\left\| #1 - #2\right\|}
\newtheorem{theorem}{Theorem}
\newtheorem{corollary}[theorem]{Corollary}
\newtheorem*{corollary*}{Corollary}
\newtheorem{lemma}[theorem]{Lemma}
\newtheorem*{lemma*}{Lemmas}
\theoremstyle{remark}
\newtheorem*{remark*}{Remark}
\newtheorem*{remarks*}{Remarks}
\theoremstyle{definition}
\newtheorem{definition}{Definition}
\newtheorem{remark}{Remark}
\newtheorem{example}{Example}
\newenvironment{protocol}[1][htb]
  {%
   \begin{algorithm}[#1]%
  }{\end{algorithm}}
\newcommand{\bH}{{\mathbf H}}
\newcommand{\mH}{{\mathbb H}}
\newcommand{\oH}{\mathbb{H}}
\newcommand{\bR}{\mathbf{R}}
\newcommand{\rde}{\mathrm{RDE}}
\newcommand{\omn}{\mathrm{OMN}}
\newcommand{\Hsig}[1]{H_{\sigma_{#1}}}
\newcommand{\Rsig}[1]{R_{\sigma_{#1}}}
\newcommand{\Rco}[1]{\mathcal{R}_{\tt CO}\left(\displaystyle#1\right)} 
\newcommand{\Rcod}[1]{\mathcal{R}^{\Delta}_{\tt CO}\left(\displaystyle#1\right)} 
\newcommand{\dec}{\mathrm{DEC}}
\newcommand{\bRin}{\mathbf{R}^{\tt in}}
\newcommand{\Rin}[1]{{R}^{\tt in}_{#1}}
\newcommand{\bRout}{\mathbf{R}^{\tt out}}
\newcommand{\Rout}[1]{{R}^{\tt out}_{#1}}
\newcommand{\sigout}{\sigma^{\tt out}}
\newcommand{\Rstar}[1]{R^*_{#1}}
\newcommand{\err}{\mathtt{ERR}}
\newcommand{\nack}{\mathtt{NACK}}
\newcommand{\ack}{\mathtt{ACK}}
\newcommand{\hbX}{\widehat{\bf X}}
\begin{document}

\setlength\textfloatsep{0pt}

\title{Universal Multiparty Data Exchange and Secret Key Agreement}

\author{ \IEEEauthorblockN{Himanshu Tyagi$^\dag$} \and
  \IEEEauthorblockN{Shun Watanabe$^\ddag$} }

\maketitle

{\renewcommand{\thefootnote}{}\footnotetext{
\noindent$^\dag$Department of Electrical Communication Engineering,
Indian Institute of Science, Bangalore 560012, India.  Email:
htyagi@ece.iisc.ernet.in.

\noindent$^\ddag$Department of Computer and Information Sciences,
Tokyo University of Agriculture and Technology, Tokyo 184-8588, Japan.
Email: shunwata@cc.tuat.ac.jp.  }}

\maketitle

\renewcommand{\thefootnote}{\arabic{footnote}}
\setcounter{footnote}{0}

\begin{abstract}
Multiple parties observing correlated data seek to recover each
other's data and attain {omniscience}. To that end, they
communicate interactively over a noiseless broadcast channel -- each bit
transmitted over this channel is received by all the parties. 
We give
a universal interactive communication protocol, termed the {\em recursive data exchange protocol} (RDE), which
attains omniscience for any sequence of data
observed by the parties and provide an {individual sequence}
guarantee of performance. As a by-product, for observations of length
$n$, 
we show the universal rate optimality of RDE up to an
$\cO(n^{-1/2}\sqrt{\log n})$ term in a generative setting where the
data sequence is independent and identically distributed (in time). Furthermore, 
drawing on the duality between
omniscience and secret key agreement due to Csisz\'ar and Narayan, we obtain a universal protocol for generating a
multiparty secret key of rate at most $\cO(n^{-1/2}\sqrt{\log n })$
less than the maximum rate possible. A key feature of RDE
is its {\em recursive structure} whereby when a subset $A$ of parties
recover each-other's data, the rates appear as if the parties have been
executing the protocol in an alternative model where the parties in
$A$ are collocated.
\end{abstract} 

\section{Introduction}
An $m$ party omniscience protocol is an interactive communication
protocol that enables $m$ parties to recover each other's data. The
communication is error-free and is in a broadcast mode wherein the
transmission of each party is received by all the other parties. Such
protocols were first considered in \cite{OrlEl84b} in a two-party
setup, where bounds for the number of bits communicated on average and
in the worst-case were derived for the case when no error is
allowed. The $m$ party version, and the omniscience terminology, was
proposed in \cite{CsiNar04} where the collective observations of the
parties was assumed to be an independent and identically distributed
(IID) sequence generated from a known distribution\footnote{Throughout
  we shall restrict to finite random variables and use the phrase
  probability distribution interchangeably with probability mass
  function (pmf).} $\bPP{X_1\cdots X_m}$. It was shown in
\cite{CsiNar04} that a simultaneous communication protocol based on
sending random hash bits of appropriate rates attains the optimal
sum-rate $R\left(\bPP{X_1\cdots X_m}\right)$.  
A
common feature of these prior works is that the protocol relies on the
knowledge of the underlying distribution $\bPP{X_1\cdots X_m}$. Note that the protocol proposed in
\cite{CsiNar04} relies on the classic multiterminal source coding
scheme given in \cite{CsiKor80}. Thus, it inherits the
following universality feature from that scheme: If for $1\leq i \leq
m$ the $i$th party communicates rate $R_i$, the protocol attains
omniscience for any source distribution $\bPP{X_1\cdots X_m}$ for
which the rate vector $(R_1, \ldots, R_m)$ lies in the omniscience
rate-region corresponding to $\bPP{X_1\cdots X_m}$. Nevertheless, this
provides no guarantee of universal optimality for the sum-rate $(R_1+
\cdots +R_m)$ for an arbitrary source $\bPP{X_1\cdots X_m}$.

 A naive protocol entails using the first
$n^\prime$ samples to estimate the entropies 
involved and then applying the optimal
   protocol of \cite{CsiNar04} with rates satisying the entropy
   constraints. Specifically, by using the estimator for entropy
   proposed in \cite{WuY16}, we can estimate the entropy to within an
   approximation error of $\cO(1/\sqrt{n^\prime})$ using $n^\prime$ samples,
   where the constants implied by $\cO$ depend on the support size of
   the distribution. This results in an universally sum-rate optimality protocol, but for observations of length $n$, the overall excess rate of communication over
   the optimal rate is $\cO(n^\prime/n + 1/\sqrt{n^\prime})$, which is at best  $\cO(n^{-1/3})$. Furthermore, there is
   no guarantee of performance for this protocol for a fixed sequence
   $(\bx_1, ..., \bx_m)$ observed by the parties.

  In this paper, we
present a protocol for omniscience, termed the {\em recursive data exchange protocol} ($\rde$), that is {
  universal} and works for {individual sequences} of data in the spirit of \cite{ZivLem78}, namely
it attains omniscience with
probability close to $1$ for every specific data sequence.
 For a given sequence $(\bx_1, ..., \bx_m)$ of data consisting of $n$
 length observations, $\rde$ attains
   an excess communication rate of $\cO(n^{-1/2})$ over
   $R\left(\bPP{\bx_1\cdots \bx_m}\right)$ where $\bPP{\bx_1\cdots
     \bx_m}$ denotes the joint type of the
   observations. As a consequence, we show that for the generative model where the data of the parties is IID,
$\rde$ is universally sum-rate optimal
with an excess rate of $\cO(n^{-1/2}\sqrt{\log n})$.  Note that
even for the case when the underlying distribution is known, the
optimal rate can only be achieved asymptotically and an excess rate is
often needed. In particular, for\footnote{For $m >2$, a
  variant of $\rde$ is shown in \cite{TyaWat17} to attain the optimal second-order asymptotic term, which is $O(n^{-1/2})$, for worst-case rates when the distribution is known.} $m=2$, the precise leading asymptotic
term in excess worst-case rate was established in \cite{TVW15} and was shown to be $\cO(n^{-1/2})$.

An interesting application of $\rde$ appears in {\em
  secret key} (SK) agreement \cite{Mau93, AhlCsi93,
  CsiNar04}. Specifically, Csisz\'ar and Narayan showed in
\cite{CsiNar04} that an optimum-rate SK can be generated by first
attaining omniscience and then extracting secure bits from the
recovered data. We follow the same procedure here with $\rde$ in place of the omniscience protocol of \cite{CsiNar04} and
obtain a universal SK of rate at most $\cO(n^{-1/2}\sqrt{\log n})$
less than the optimal average and the worst-case rate. Note that for
the case $m=2$ with known distribution, the precise leading asymptotic
term in the gap to optimal worst-case rate was established in
\cite{HayTW14} and was shown to be $\cO(n^{-1/2})$.  Therefore, for
multiparty data exchange as well as SK agreement $\rde$ can
roughly attain the worst-case performance for the case of known
distributions, without requiring the knowledge of the
distribution. Also, for average rate, the universal
$O(n^{-1/2}\sqrt{\log n})$ gap to optimal rates attained by $\rde$ is to our knowledge the best-known. 

It was shown in \cite{YanH10} that interaction enables an $\ack-\nack$
based universal variable-length coding scheme for the Slepian-Wolf
problem, where only party $1$ needs to send its data to party $2$. Our
protocol, too, is interactive in a similar spirit, but it relies on
carefully increasing the rate of communication for each party.  Note
that while for $m=2$ a simple extension of the protocol in
\cite{YanH10} works for the data exchange problem as well, this is not
the case when $m>2$. For $m>2$, the order in which the parties
communicate must be carefully chosen. We give a very simple criterion
for choosing this order of communication and show that the resulting
protocol is universally rate-optimal. Specifically, the encoders in $\rde$ send random hash bits
  corresponding their inputs, while the decoders, which use a variant
  of minimum entropy decoding, try to decode the observations of
  any subset of communicating parties. A key feature of $\rde$
is its {\em recursive structure} whereby when a subset $A$ of parties
recover each-other's data, the rates appear as if the
parties have been executing the protocol in an alternative model where
the parties in 
$A$ are collocated from the start. To enable this, the parties communicate in the order of the entropies of
their empirical types, with the highest entropy party communicating
first, followed by the next highest entropy party, and so on. The
delay in communication between the parties is chosen to ensure that for every pair of communicating parties, the difference of their rates
of communication, at any instance, is equal to the difference of the entropies of their marginal types. We follow this policy and increase the rate in
steps until a subset of parties can attain {\em local omniscience}, $i.e.$, recover each other's data.

Our encoders are easy to implement, but the decoders
are theoretical constructs which use type classes to form a list of
guesses for the data of other parties. Furthermore, since we try to
decode the data of every possible subset of communicating parties, 
the complexity of our decoder is exponential in $m$. Nevertheless, 
we believe that $\rde$ is a stepping-stone towards a practical protocol for the multiparty data exchange problem.

There is a rich literature relating to the problems considered
here. Following the seminal work of Slepian and Wolf \cite{SleWol73},
which introduced fixed-length distributed source coding for two
parties, universal error-exponents for the multiparty extension of
this problem were considered in \cite{CsiKor81b, Csi82, OohHan94}. For
the case of two parties, universal {variable length} protocols with
optimal average rate were proposed in \cite{Draper04, YanH10}. In
particular, the protocol used in \cite{YanH10} has excess rate less
than\footnote{For $m=2$ even $\rde$ has excess rate
  less than $\cO(n^{-1/2})$. The extra $\cO(\sqrt{\log n})$ factor for
  a general $m$ appears since the optimal sum-rate may not be a
  concave function of $\bPP{X_1\cdots X_m}$ for $m>2$, and we take
  recourse to a Taylor approximation of the sum-rate function.}
$\cO(n^{-1/2})$, which is the best-known. A related protocol was used
in \cite{TVW15} in a single-shot setup which, when applied to IID
observations with a known distribution, was shown to be of optimal
worst-case length even up to the second-order asymptotic term. A
slight variant of the data exchange or omniscience problem, which assumes the data of
the parties to be elements of a finite field and requires exact
recovery using linear communication, has been considered in
\cite{RouSprSad10, SprSadBooRou10, CouXieWes10, MilPawRouGasRam,
  MilPawRouGasRam16}. While $\rde$ doesn't directly
relate to these works, we propose it as an alternative approach for
ensuring data exchange in these settings.

The remainder of this paper is organized as follows: The next section
contains the formal description of the omniscience problem. We first
describe an idealized version of $\rde$, $\rde_{\tt id}$, in
Section~\ref{s:protocol_description_ideal} where we assume that the
rates can be continuously increased and an ideal decoder is
available. We also illustrate the working of $\rde_{\tt id}$ with
examples. Ideal assumptions are removed in the subsequent section which
contains a complete description of $\rde$ and our
main results about its performance. The SK agreement problem and our universal SK agreement
protocol based on $\rde$ are described in Section~\ref{s:problem_description_SK}. All
the proofs are given in Section~\ref{s:technical}. Our proofs rely on
technical properties of the formula for minimum communication for
omniscience. Some of these properties are new and maybe of independent
interest.

{\em Notations.} We start by recalling the standard notations: We
consider discrete random variables $X$ taking values in a finite set
$\cX$ and with pmf $\bPP{X}$. {Denote the set $\{1, ...,
  m\}$ of all parties by $\cM$.} For random variables $(X_i : i \in
\cM)$ and $A\subseteq \cM$, $X_A$ denotes the collection $(X_i : i \in
A)$. Also, $X_A^n$ denotes the sequence of IID random variables
$\{X_{A,t}\}_{t=1}^n$, where $X_{A,t}= (X_{i,t} : i \in A)$.
Similarly, $\cX_A$ denotes the product set $\prod_{i \in A}\cX_i$ and
$\cX^n=\cX_1\times \cdots \times \cX_n$.  For given distributions
$\bPP{}$ and $\bQQ{}$, their variational distance is denoted by $\|
\bPP{} - \bQQ{}\| = \frac{1}{2}\sum_x | \bPP{}(x) - \bQQ{}(x)|$.
While our protocols are conceptually simple, the analysis is
notationally heavy and relies on some bespoke notations.  For easy
reference, we summarize all nonstandard notations used in this paper
in Table~\ref{t:notation}. We often need to think of a subset of
parties as a single party and use natural extensions of our notations
to indicate such cases. For instance, for a partition $\sigma$ of $A\subseteq \cM$ or of $\cM$,
the notation $\left(R^*_{\sigma_i}(A_\sigma) : 1\leq i\leq
|\sigma|\right)$ extends $R_i^*(A)$ given in Table~\ref{t:notation}
and denotes the solution $\left(R_1,\ldots, R_{|\sigma|}\right)$ for
equations
\[
\sum_{j \neq i}R_j = H\left(X_A| X_{\sigma_i}\right), \quad \forall\,
1\leq i\leq |\sigma|.
\]
{Note that we have abused the subscript notation, with
  different connotations in different contexts. For instance, we use the
 notation $A_\sigma$ for a
  partition $\sigma$ of $A$, which represents the set $A$ as a
  collection of elements $\sigma_i\in \sigma$. However, the specific
  connotation should be clear from the context.}

\begin{table}[h]
\centering
\begin{tabularx}{0.9\linewidth}{p{0.3\linewidth}Xp{0.6\linewidth}}
\toprule Notation& Description \\ \midrule $\Sigma(A)$& Set of all
nontrivial partitions of $A$ \\ $|\sigma|$& Number of parts in the
partition $\sigma$ \\ $\sigma_f(A)$& The finest partition $\{\{i\} :
i\in A\}$ of $A$ \\ $\sigma_B(A)$, $B\subsetneq A$& The partition
$\{\{A\setminus B\},\{i\} : i \in B\}$ of $A$ \\ $R_A$& Sum rate
$\sum_{i\in A}R_i$ \\ $\Rco{A}$& Set of all vectors $(R_i : i \in A)$
s.t. $R_B\geq H(X_B|X_{A\setminus B})$, $\forall\, B\subsetneq A$
\\ $\Rcod{A}$& Set of all vectors $(R_i : i \in A)$ s.t. $R_B\geq
H(X_B|X_{A\setminus B})+|B|\Delta$, $\forall\, B\subsetneq A$
\\ $R_{\tt CO}(A)$& Minimum of $R_A$ over all $\bR \in \Rco{A}$
\\ $\mH_{\sigma}(A)$, $\sigma\in \Sigma(A)$&
$\frac1{|\sigma|-1}\sum_{i=1}^{|\sigma|}H\left(X_{A}|
X_{\sigma_i}\right)$ \\ $R_i^*(A), i\in A$& Solution of $\sum_{j\neq
  i}R_j = H(X_A|X_i)$, $\forall\, i \in A$ \\ $A_\sigma$, $\sigma\in
\Sigma(A)$& $\{A_{\sigma_i} : 1\leq i \leq |\sigma|\}$, where
$A_{\sigma_i} = \sigma_i$ treated as a single party\\ $|\pi|$& Maximum
number of bits communicated in any execution of the protocol $\pi$
\\ $|\pi|_{\tt av}$& Expected value of the number of bits communicated
in an execution of the protocol $\pi$ \\ \bottomrule
\end{tabularx}
\caption{Summary of notations used in the paper.}
\label{t:notation}
\end{table}

\section{Omniscience}\label{s:problem_description_omniscience}
We begin with the description of the problem for IID
observations. Specifically, parties in a set $\cM = \{1, \ldots, m\}$
observe an IID sequence $X_\cM^n= \left(X_{\cM 1},\ldots, X_{\cM
  n}\right)$, with the $i$th party observing $\left\{X_{i
  t}\right\}_{t=1}^n$ and $X_{\cM t} = (X_{i t} : i \in \cM ) \sim
\bPP{X_\cM}$ denoting the collective data at the $t$th time
instance. The parties have access to shared public randomness (public
coins) $U$ such that $U$ is independent jointly of $X_\cM^n$.
Furthermore, the $i$th party, $i\in \cM$, has access to private
randomness (private coins) $U_i$ such that $U_\cM$, $U$, and $X_\cM^n$
are mutually independent.  Thus, the $i$th party observes $(X_i^n,
U_i, U)$.

For simplicity, we restrict our exposition to {\it tree-protocols}
($cf.$~\cite{KushilevitzNisan97}) described below.
A tree-protocol $\pi$ for $\cM$ consists of a binary tree, termed the
{\it protocol-tree}, with the vertices labeled by the elements of
$\cM$. The protocol starts at the root and proceeds towards the
leaves. When the protocol is at vertex $v$ with label $i_v$, party
$i_v$ communicates a bit $b_v$ based on its local observations
$(X_{i_v}^n, U_{i_v}, U)$. The protocol proceeds to the left- or the
right-child of $v$, respectively, if $b_v$ is $0$ or $1$. The protocol
terminates when it reaches a leaf, at which point each party produces
an output based on its local observations and the bits communicated
during the protocol, namely the transcript $\Pi=\pi(X_\cM^n, U_\cM,
U)$.  Note that for tree-protocols the set of possible transcripts is
prefix-free. Also, note that the output is not included in
  the transcript of the protocol, but is computed locally at each
  party. The literature on distributed function computation often
  focuses on Boolean functions and includes the $1$-bit output as a
  part of the protocol transcript
  (cf.~\cite{KushilevitzNisan97}). This results in a negligible
  $1$-bit loss in communication. However, including the output in the
  transcript in our setup makes the data exchange problem trivial
  since the optimal protocol shall entail each party declaring its
  observation.

Figure~\ref{f:tree_protocols} shows an example of a protocol tree for
$m=3$. The label of each node represents the party which
  determines the communicated bit at that node; the final boxes
  represent the termination of the protocol, at which point an output
  is produced by each party.

\begin{figure}[h]
\centering \includegraphics[scale=0.3]{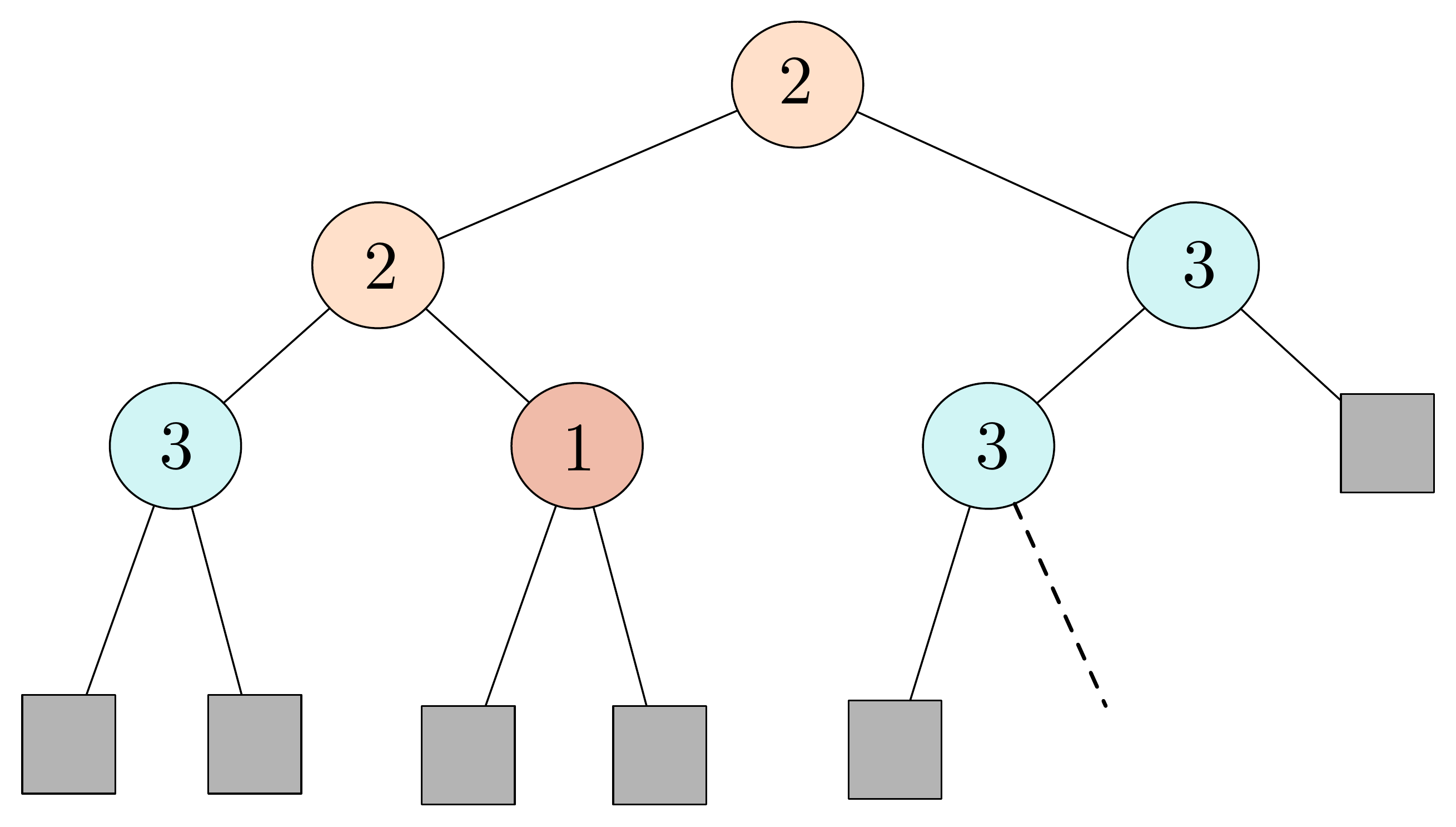}
\caption{A multiparty protocol tree.}
\label{f:tree_protocols}
\end{figure}

The (worst-case) length $|\pi|$ of a protocol $\pi$ is the maximum
number of bits that are transmitted in any execution of the protocol
and equals the depth of the protocol-tree.  Also, the average length
$|\pi|_{\tt av}$ is given by the expected value of the number of bits
transmitted in an execution of the protocol $\pi$.

In the omniscience problem, the parties engage in interactive
communication to recover each other's data. A protocol $\pi$
constitutes an $\ep$-omniscience protocol if, at the end of the
protocol, the $i$th party can output an estimate $\hbX_i=\hbX_i(X_i^n,
U_i, U, \Pi)\in \cX_\cM^n$ such that
\begin{align}
\bPr{\hbX_i= X_\cM^n : i \in \cM}\geq 1-\ep.  \nonumber
\end{align}
\begin{definition}[Communication for omniscience] Given  IID observations with a common distribution $\bPP{X_\cM}$ as above, for $0\leq  
  \ep<1$, a rate $R\geq 0$ is an $\ep$-achievable omniscience rate if
  there exists an $\ep$-omniscience protocol $\pi$ with length $|\pi|$
  less than $nR$, for all $n$ sufficiently large. The infimum over all
  $\ep$-achievable omniscience rates is denoted by
  $R_\ep(\bPP{X_\cM})$. The {\it minimum rate of communication for
    omniscience} $R(\bPP{X_\cM})$ is given by
\begin{align}
R(\bPP{X_\cM}) = \lim_{\ep \to 0} R_\ep(\bPP{X_\cM}).  \nonumber
\end{align}

The {\it minimum average rate of communication for omniscience}
$R^{\tt av}(\bPP{X_\cM})$ is defined similarly by replacing length
$|\pi|$ with average length $|\pi|_{\tt av}$.
\end{definition}
{The fundamental quantity $R(\bPP{X_\cM})$ was characterized in \cite{CsiNar04} as 
\begin{align}
R(\bPP{X_\cM}) 
&= \min\left\{ \sum_{i=1}^m R_i : \sum_{i\in B}R_i \geq H(X_B|X_{B^c}), \quad \forall\, B \subsetneq\cM \right\}.
\label{e:rco_formula-LP}
\end{align}
Following \cite{CsiNar04}, the collection of all rate vectors $\bR =
(R_1,\ldots,R_m)$ satisfying the constraints in
\eqref{e:rco_formula-LP}, termed the CO region, will be denoted by
$\Rco{\cM|\bPP{X_\cM}}$, and the minimum sum-rate by $R_{\tt
  CO}\left(\cM|\bPP{X_\cM}\right)$. When the distribution
$\bPP{X_\cM}$ is clear from the context, we shall omit it from the
notation and simply use $\Rco{\cM}$ and $R_{\tt CO}\left(\cM\right)$.}

{While the result in \cite{CsiNar04} was shown to hold only for
$R(\bPP{X_\cM})$, the same characterization holds for $R^{\tt
  av}(\bPP{X_\cM})$ as well. Indeed, note that the set of distinct transcripts of a tree protocol $\pi$ is prefix-free. Therefore, the lengths of these transcripts satisfy Kraft's inequality, and so, $H(\Pi)
\leq |\pi|_{\tt av}$.  By proceeding exactly as in \cite{CsiNar04}, we can see that 
$R^{\tt av}(\bPP{X_\cM}) \geq R_{\tt
  CO}(\cM|\bPP{X_\cM})$. On the other hand, clearly $R^{\tt
  av}(\bPP{X_\cM})\leq R(\bPP{X_\cM}) = R_{\tt CO}(\cM
|\bPP{X_\cM})$, whereby for every distribution $\bPP{X_\cM}$, we have
\[
R^{\tt av}(\bPP{X_\cM}) = R_{\tt CO}(\cM
|\bPP{X_\cM}).
\]
An alternative expression
for $R_{\tt CO}(\bPP{X_\cM})$ was obtained in \cite{CsiNar04} by looking at its dual
form. In fact, by leveraging on the complementary slackness property,
\cite{Cha08, ChaZhe10} showed that the optimization in   
the dual form can be restricted to the partitions of $\cM$ and showed   
that\footnote{An alternative proof of \eqref{e:rco_formula} was provided in 
\cite{ChanBEKL15} by using techniques from submodular
optimization.}
\begin{align}
R_{\tt CO}(\cM|\bPP{X_\cM})&= \max_{\sigma\in \Sigma(\cM)}
\mH_{\sigma}(\cM|\bPP{X_\cM}),
\label{e:rco_formula}
\end{align}
where $\Sigma(\cM)$ denotes the set of partitions of $\cM$, and, for
each $\sigma\in \Sigma(\cM)$,
\begin{align}
\mH_{\sigma}(\cM|\bPP{X_\cM}) =
\frac1{|\sigma|-1}\sum_{i=1}^{|\sigma|}H\left(X_{\cM}|
X_{\sigma_i}\right).
\label{e:H_sigma_definition}
\end{align}
Note that the fact that $R_{\tt CO}(\cM|\bPP{X_\cM})$ is lower bounded by the right-side of \eqref{e:rco_formula} was shown earlier in \cite{CsiNar04}.  
$\rde$ directly achieves the right-side of \eqref{e:rco_formula}, thereby providing an
alternative, ``operational" proof for the tightness of this lower
bound for $R_{\tt CO}(\cM|\bPP{X_\cM})$ from \cite{CsiNar04}.}

While there can be several maximizers of $\mH_\sigma$, there exists a
maximizing partition which is a further partition of any other
maximizing partition \cite[Theorem 5.2]{ChanBEKL15}, the finest
maximizing partition; we shall call this finest maximizer of
$\mH_\sigma$ in \eqref{e:rco_formula} the {\it finest dominant
  partition} (FDP), which was called {\em fundamental partiion} in \cite{ChanBEKL15}. The {\it finest partition} $\sigma_f(\cM) :=
\left\{\{i\}, i\in\cM\right\}$ plays a particularly important role in
$\rde$. Note that when the finest partition is FDP, the optimal
rate assignment is uniquely given by the solution $\bR^*=
(R_1^*,\ldots,R_m^*)$ of
\begin{align} 
\sum_{i \in \cM\backslash \{j \}} R_i = H(X_\cM |
X_j),~~~j=1,\ldots,m.
\label{e:R_star_definition}
\end{align}
\section{Universal protocol for omniscience under ideal assumptions}\label{s:protocol_description_ideal}
We give a universal protocol for omniscience, which, when a sequence
$\bx_\cM$ is observed, will transmit communication of rate no more
than $R_{\tt CO}\left(\cM| \bPP{\bx_\cM}\right)$.  To present the main
idea behind $\rde$, we first describe it assuming the following ideal assumptions.

Specifically, we make two assumptions:
\begin{enumerate}
\item[(a)] {\it Continuous rate assumption:} Communication-rate, defined as 
the total number of bits of communication up to a certain
  time divided by $n$, can be increased continuously in time\footnote{Clearly, this does not hold in
    practice since the number of bits of communication can be increased only in steps of discrete sizes. The continuous rate assumption allows us to examine,
    loosely speaking, the ``fluid limit'' behavior of $\rde$.}; and
\item[(b)] {\it Ideal decoder assumption:} We assume the availability
  of an error-free, ideal decoder $\dec_{\tt id}$ which correctly
  decodes a sequence once sufficient communication has been sent and
  declares a $\nack$ otherwise.\footnote{In analysis of the ideal
    protocol, we do not account for the rate needed to send
    $\nack$s. In practice, each $\nack$ symbol counts for a bit of
    communication and the size $\Delta$ of discrete increments must be
    chosen carefully to render the rate needed to send $\nack$s
    negligible.}
\end{enumerate} 
A standard universal decoder used in source coding is the minimum
entropy decoder which, given side-information $\by$ and an $nR$-bit\footnote{$nR$ is required
  to be an integer. When this is not the case, we simply use $\lceil
  nR \rceil$ bits in place of $nR$. This convention will be used
  throughout this paper and will be accounted for in our analysis.}
random hash\footnote{A ``random hash" of $X^n$ is a bit
    sequence produced by a function $f:{\cal X}^n \to \{0,1\}^{nR}$
    which is chosen randomly (using public randomness) from a class of
    functions satisfying the $2$-universal property \cite{CarWeg79}.
    For instance, the class of all functions satisfies the
    $2$-universal property and, therefore, standard ``random binning"
    (cf. \cite{CovTho06}) produces a random hash.}  of
$X^n$, searches for the unique sequence $\bx$ such that the joint type
$\bPP{\overline{X}\,\overline{Y}}=\bPP{\bx\by}$ satisfies
$H\left(\overline X \middle| \overline Y\right)\leq R$ and the hash of
$\bx$ matches the received hash bits. The decoder that we prescribe in
the next section works on a similar principle except that it searches
for any possible subset of sequences it can decode with the current
rate. To avoid the additional complications due to decoding error, we
first assume the availability of an ideal decoder $\dec_{\tt id}$
which enables omniscience for all parties $j\in A$ as soon as the rate
received from the parties in $A$ is sufficient. That is, the ideal
decoder guarantees that each party $i \in A$ can recover the correct
sequence $\bx_A$ if the rates of communication $\bR = (R_i : i \in
\cA)$ satisfy $\bR \in \Rco{A|\bPP{\bx_A}}$. Furthermore, if
$\bR\notin\Rco{A|\bPP{\bx_A}}$, the ideal decoder does not mistakenly
output a wrong sequence $\bx^\prime_A$, but declares a $\nack$
instead. Protocol~\ref{p:dec_id} summarizes our assumed ideal decoder $\dec_{\tt id}$.
\begin{protocol}[h]
\caption{Ideal decoder $\dec_{\tt id}(j, \sigma, \bR)$} \KwIn{An index
  $1\leq j\leq m$, a partition $\sigma\in \Sigma(\cM)$, a rate vector
  $\bR=(R_1, \ldots, R_m)$.}  \KwOut{An $\ack$ message $(\ack, A)$ or
  a $\nack$ message}
\begin{enumerate}
\item For $\sigma_i$ such that $j\in\sigma_i$, search for the maximal
  set $A\subseteq \cM$ such that $\sigma_i\subsetneq A$ and $(R_l :
  l\in A) \in \Rco{A\mid \bPP{\bx_A}}$, and reveal $\bx_A$ to party
  $j$.
\item \uIf{If such an $A$ was found in Step 1}{return $(\ack,A)$.}
  \Else{return $\nack$.}
\end{enumerate}
\label{p:dec_id}
\end{protocol}

With this ideal decoder at our disposal, under the continuous rates
assumption, finding a universal protocol is tantamount to finding a
policy for increasing the rates $(R_1, \ldots, R_m)$ such that when
the rate vector enters $\Rco{\cM| \bPP{\bx_\cM}}$ for the first time,
the sum-rate is $R_{\tt CO}\left(\cM| \bPP{\bx_\cM}\right)$.  Note
that initially the marginal types $\bPP{\bx_i}$ are available to each
party and can be transmitted using $\cO(\log n)$ bits, since there are
only polynomially many types. Also, if a subset $A$ attains local
omniscience in the middle of the protocol, any $j\in A$ upon
recovering $\bx_A$ can transmit $\bPP{\bx_A}$ in $\cO(\log n)$ bits to
all the parties, who in turn can use it to compute $H(\bPP{\bx_A})$.

As an illustration, consider the simple case when $m=2$. Parties first
share $\bPP{\bx_1}$ and $\bPP{\bx_2}$; suppose $H(\bPP{\bx_1}) \geq
H(\bPP{\bx_2})$. Then, party $1$ starts communicating and increases
its rate $R_1$ at slope\footnote{The slope is defined as the
  derivative of rate with respect to the time under the continuous
  rate assumption.} $1$. When the rate $R_1$ reaches $H(\bPP{\bx_1}) -
H(\bPP{\bx_2})$, party $2$ starts communicating at slope $1$ as
well. Throughout the protocol, each party is trying to decode the
other using the ideal decoder $\dec_{\tt id}$ and they keep on
communicating as long as the ideal decoders output $\nack$s. The
parties will decode each other as soon as $(R_1, R_2)$ enters
$\Rco{\{1,2\}|\bPP{\bx_1,\bx_2}}$, $i.e.$, when
\begin{align}
R_1\geq H(\overline{X}_1|\overline{X}_2)\text{ and } R_2\geq
H(\overline{X}_2|\overline{X}_1), \nonumber
\end{align}
where $(\overline{X}_1, \overline{X}_2)\sim \bPP{\bx_1, \bx_2}$. Note
that once both parties start communicating, the difference $R_1-R_2$
is maintained as $H(\overline{X}_1)-H(\overline{X}_2)$. Thus, when
$(R_1, R_2)$ enters $\Rco{\{1,2\}}$, it holds that
\begin{align*}
R_1= H(\overline{X}_1| \overline{X}_2) \text{ and } R_2=
H(\overline{X}_2| \overline{X}_1);
\end{align*}
the red line in Figure~\ref{f:ex_2party} illustrates\footnote{It is
  also possible to proceed along the blue line for the $m=2$
  case. However, its extension to a general $m$ is not clear.} this
evolution of rates.

\begin{figure}[t]
\centering \includegraphics[scale=0.3]{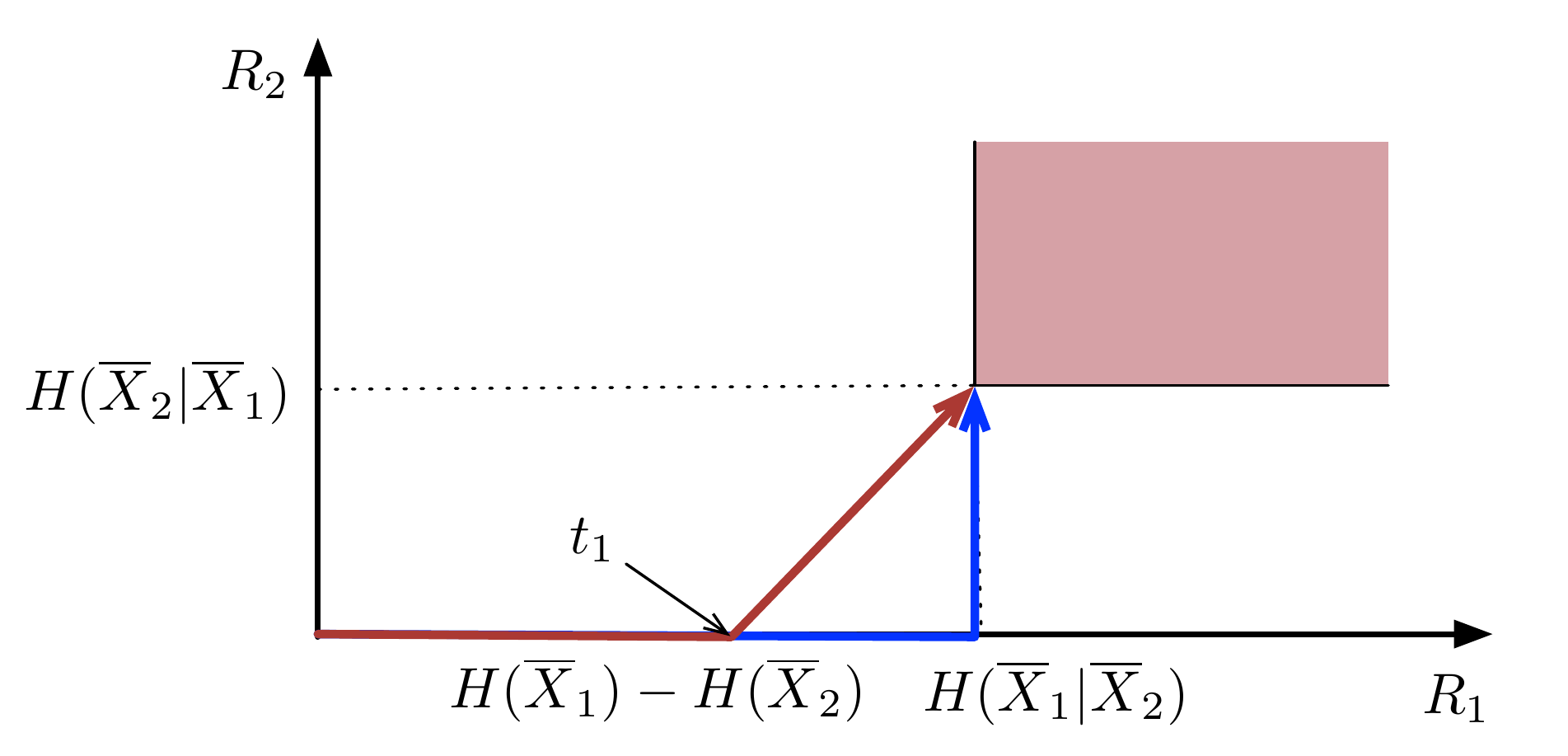}
\caption{Illustration of protocol for $m=2$. The transition point
  $t_1$ depends only on the marginal types $\bPP{\bx_1}$ and
  $\bPP{\bx_2}$. }
\label{f:ex_2party}
\end{figure}

$\rde$ extends the idea above to a general $m$. We design $\rde$ so that the first subset $A$ which attains local omniscience
does so by using communication only from the parties in $A$ and of sum
rate
\begin{align}
R_A = \mH_{\sigma_f(A)}(A|\bPP{\bx_A}) = \sum_{i\in A}R_i^*(A);
\label{e:star_sum}
\end{align}
see \eqref{e:rstar_sum_formula2} in Lemma~\ref{l:rstar_char} given in
Section~\ref{s:technical} below for the second equality. To that end,
we note (see Lemma~\ref{l:rstar_char} for a proof) that for every $A$
\begin{align}
R_i^*(A) - R_j^*(A)=H(\overline{X}_i)-H(\overline{X}_j).
\label{e:R_star_diff}
\end{align}
A key point here is that for $\bPP{\bx_\cM}$ this difference can be
computed using only the marginal types $\bPP{\bx_i}$ and
$\bPP{\bx_j}$. $\rde$ ensures that for every pair $(i,j)$ of
communicating parties, the rate of communication
\[
R_i - R_i^*(A) = R_j - R_j^*(A),
\]
which by \eqref{e:R_star_diff} in turn can be ensured if the {\it constant difference}
property, namely
\begin{align}
R_i - R_j = H(\overline{X}_i) - H(\overline{X}_j),
\label{e:constant_diff}
\end{align}
is maintained throughout the protocol for every pair of communicating
parties.  Thus, all communicating parties $i$ reach the rate
$R_i^*(A)$ at the same time. Specifically, we first arrange parties in
decreasing order of the entropy of the empirical distribution of their
local observations, which are shared in $\cO(\log n)$-bits.  Assuming
$H(\bPP{\bx_1})\geq H(\bPP{\bx_2})\geq \cdots \geq H(\bPP{\bx_m})$,
party $1$ starts communicating, and the $i$th party starts
communicating when $R_1\geq H(\bPP{\bx_1}) - H(\bPP{\bx_i})$. This
ensures the constant difference property \eqref{e:constant_diff} for
every pair $(i,j)$ of communicating parties.  {For
  notational convenience, we assign $-1$ to $R_i$ when the $i$th party has not started communicating; the rate vector
  $(0,-1,-1,\ldots,-1)$ indicates that party 1 starts communicating
  and every one else remains quiet.}  When a subset $A$ attains local
omniscience, we decrease the rate-slope for each party $i\in A$ to
$1/|A|$, thereby ensuring that collectively parties in $A$ increase
the rate of communication $R_A$ at slope $1$. Note that since parties
in $A$ have recovered $\bx_A$, any one party $i \in A$ can compute the
type $\bPP{\bx_A}$ and transmit it using $\cO(\log n)$ bits. Our main
observation is that at this point the rates appear as if the parties
in $A$ were collocated to begin with and have been executing the
protocol as a single party. In particular, $R_A - R_j =
H(\overline{X}_A) - H(\overline{X}_j)$ for any communicating party $j$ outside $A$.  The second crucial observation is that for the first
subset $A$ which attains local omniscience, $(R_i^*(A) : i \in A)\in
\Rco{A}$. Since by \eqref{e:star_sum} $\sum_{i\in A}R_i^*(A)$ is a lower bound for
$\Rco{A}$, the parties in $A$ cannot attain local omniscience before they
communicate at sum-rate $\sum_{i\in A}R_i^*(A)$. Further, $\rde$ 
ensures that all parties in $A$ reach the rate $R_i^*(A)$ at the same
time. Thus, the parties in $A$ must have communicated at
sum-rate
\begin{align}
R_A = \sum_{i\in A}R_i^*(A) = \mH_{\sigma_f(A)}(A|\bPP{\bx_A})
\label{e:R_star_sum}
\end{align}
when they attain local omniscience. As the protocol proceeds,
subsets of parties keep attaining local omniscience and start behaving
as a single party. Proceeding recursively, it follows that when all
parties attain omniscience, the rate of communication must equal
$\mH_\sigma(\cM|\bPP{\bx_\cM})$ for some $\sigma \in \Sigma(\cM)$,
which in view of \eqref{e:rco_formula} is no more than $R_{\tt
  CO}(\cM| \bPP{\bx_\cM})$ and must be optimal in the limit as
$n\rightarrow \infty$.

To help the reader build heuristics for the complete protocol and its
analysis, we provide a sketch of the analysis for the ideal
situation and consider the ideal version $\rde_{\tt id}$. The formal proofs for the ideal case closely follow those 
for the results for the actual protocol and have been omitted. As
mentioned, $\rde_{\tt id}$ proceeds recursively by increasing the
rates with fixed slopes until a subset attains omniscience, at which
point the slopes are changed so that the parties in an omniscience
attaining subset behave as if they are collocated. We describe the
one-step omniscience protocol $\omn_{\tt id}$ in
Protocol~\ref{p:omn_ideal}. The protocol takes as input a partition
$\sigma$ such that parties in any one part are behaving as collocated
parties, a vector $\bH = \left(H_{\sigma_i}, 1\leq i\leq
|\sigma|\right)$ consisting of estimates of entropy for marginal
distribution of parties in any part of $\sigma$, and a rate vector
$\bR=(R_1, \ldots, R_m)$ of rates of communication sent by all the
parties up to this point.
 
\begin{protocol}[t]
\caption{$\omn_{\tt id}(\sigma, \bH,\bR)$} \KwIn{A partition
  $\sigma\in \Sigma(\cM)$ with $|\sigma|=k$, an entropy estimate
  vector $\bH = \left(H_{\sigma_i} : 1\leq i\leq k\right)$, a rate
  vector $\bR=(R_1, \ldots, R_m)$; we assume that $\bH$ is sorted,
  $i.e.$, $H_{\sigma_1}\geq H_{\sigma_2}\geq \cdots \geq
  H_{\sigma_k}$.}  \KwOut{A rate vector $\bRout$, a family of subsets
  $\cO$ that have attained omniscience.}

\begin{enumerate}
\item Initialize {$s:= \max\{i: R_{\sigma_i} \geq 0\}$}.
\item All parties $j$ such that $j\in \sigma_i$ for some
  {$1 \leq i\leq s$} increase their rates $R_j$ at slope
  $1/|\sigma_i|$.
\item \uIf{There exists $i>s$ such that $R_{\sigma_1}\geq
  H_{\sigma_1}- H_{\sigma_i}$}{ set $R_j = 0$ for all $j\in \sigma_i$,
  and set {$s= \max\{i: R_{\sigma_i} \geq 0\}$}.  }
\item For all $j$ such that $j\in \sigma_i$ for some $1\leq i \leq s$,
  execute $\dec_{\tt id}(j, \sigma, \bR)$, which outputs $\nack$ or
  $(\ack, A_j)$.
\item \uIf{All parties send a $\nack$} { return to Step 2.  }
  \Else{Identify the omniscience family
\[
\cO = \{B\subset\cM: \text{ all } j\in B \text{ returned } (\ack,
B)\}.
\] 
Set $\bRout = \bR$ and return $(\bR, \cO)$.  }
\end{enumerate}
\label{p:omn_ideal}
\end{protocol}

Note that a ``valid'' rate vector should reflect that parties in any
one part have communicated enough to attain local omniscience. Also,
since we shall recursively call $\omn_{\tt id}$, the only rate vectors
$\omn_{\tt id}$ encounters are those which can arise by increasing the
rates in the manner of $\rde$. We call the collection of rate
vectors satisfying the two conditions above $(\sigma,
\bH)$-valid. Formally,
\begin{definition}\label{d:valid_id}
For $\sigma\in \Sigma(\cM)$ with $|\sigma|=k$ and $\bH = (\Hsig 1,
\ldots, \Hsig k)$ with $\Hsig 1\geq \Hsig 2\geq \cdots \geq \Hsig k$,
a rate vector $(R_1, \ldots, R_m)$ is $(\sigma, \bH)$-valid if
\[
(R_j, j\in\sigma_i) \in \Rco{\sigma_i},\quad \forall\, i \text{
  s.t. }|\sigma_i|\geq 2,
\]
and $(R_{\sigma_i}, 1\leq i\leq k)$ can be obtained by starting with
$(0, -1, -1, \ldots, -1)$ and incrementing the rates as in Protocol
\ref{p:omn_ideal} when the parties in each part $\sigma_i$ are
collocated, $i.e.$, each part $\sigma_i$ starts increasing its rate at
slope $1$ once $R_{\sigma_1} \geq \Hsig 1 - \Hsig i$.
\end{definition}
{As mentioned earlier, instead of initializing all rates with $0$ in $\rde$, and in the definition of a valid rate vector, we
  distinguish between rate $0$ and rate $-1$ for a technical reason. A
  rate of $-1$ indicates that the party is not participating in the
  protocol yet and will not even attempt to decode. In contrast, a $0$
  rate indicates that the party has not yet communicated any bits, but
  has started decoding and will increment its communication rate in
  each step from here on.}

The result below shows a recursive property of $\omn_{\tt id}$ that
renders $\rde$ universally rate-optimal. Specifically, it
shows that if $\bR$ is $(\sigma,\bH)$-valid then, when $\omn_{\tt
  id}(\sigma, \bH,\bR)$ terminates, the output rate vector is
$(\sigma^{\tt out}, \bH^{\tt out})$-valid where $\sigma^{\tt out}$
{is a sub-partition of $\sigma$ which is obtained by
  combining the parts that have achieved local omniscience; $\bH^{\tt
    out}$ is the corresponding estimate for entropies of the marginals
  of parts of $\sigma^{\tt out}$.} Furthermore, for every set $A$ that
attains local omniscience, the sum-rate $R_A$ at the end of $\omn_{\tt
  id}$ is exactly $\mH_{\sigma_f(A_\sigma)}(A_\sigma)$.\footnote{When
  $A = \cup_{l=1}^c \sigma_{i_l}$, by our convention
  $\mH_{\sigma_f(A_\sigma)}(A_\sigma) =
  \mH_{\{\sigma_{i_1}|\cdots|\sigma_{i_c}\}}\left(A|\bPP{\bx_A}\right)$.}
\begin{theorem}\label{t:recursion_id}
For $\sigma\in \Sigma(\cM)$ with $|\sigma|=k$ and $\bH = (\Hsig 1,
\ldots, \Hsig k)$ with $\Hsig 1 \geq \Hsig 2 \geq \cdots \Hsig k$, let
$\bRin = (\Rin 1, \ldots,\Rin m)$ be $(\sigma,\bH)$-valid. Then, if
$\omn_{\tt id}(\sigma,\bH,\bRin)$ is executed, the final rates
$\bRout$ and the omniscience family $\cO$ satisfy the following:

1) Every $A\in \cO$ consists of parts of $\sigma$, $i.e.$,
\[
A = \bigcup_{l=1}^c\sigma_{i_l}
\]
for some $\{i_1,\ldots,i_c\} \subseteq \{1,\ldots,|\sigma|\}$, and the
sum-rate $\Rout A$ satisfies
\begin{align}
\Rout A =
\mH_{\{\sigma_{i_1}|\cdots|\sigma_{i_c}\}}\left(A|\bPP{\bx_A}\right).
\nonumber
\end{align}

2) Let $\sigout\in\Sigma(\cM)$ be the partition obtained by combining
the parts in $\sigma$ that belong to the same $A$ in $\cO$.  Let
$H_{\sigout_i}$ denote the entropy of the type of
$\bx_{\sigout_i}$. Then, with $\bH^{\tt out}= \left(H_{\sigout_i},
1\leq i \leq |\sigout|\right)$, $\bRout$ is $(\sigout, \bH^{\tt
  out})$-valid.
\end{theorem}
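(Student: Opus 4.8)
The plan is to track the trajectory of the rate vector under $\omn_{\tt id}$ and to exploit the constant-difference property to reduce everything to a statement about a single "collocated" super-party. The key observation is that Definition~\ref{d:valid_id} forces the sum-rates $R_{\sigma_i}$ to evolve exactly as if each part $\sigma_i$ were a single party with marginal entropy $\Hsig i$; consequently, once we establish that the first subset to attain local omniscience is a union of parts $\sigma_{i_1},\ldots,\sigma_{i_c}$, we can invoke the characterization of $R_i^*(\cdot)$ in the model where these parts are atoms.

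\emph{Step 1 (structure of $\cO$).} First I would show part 1). Since the decoder $\dec_{\tt id}(j,\sigma,\bR)$ searches over sets $A$ with $\sigma_i \subsetneq A$ (so $A$ always contains whole parts of $\sigma$), and since validity guarantees $(R_l : l\in\sigma_i)\in\Rco{\sigma_i}$ for every part, any set $A$ returned by the decoder must be a union of parts of $\sigma$ — a part is either entirely inside $A$ or disjoint from it. This is essentially immediate from the definition of $\dec_{\tt id}$ together with $(\sigma,\bH)$-validity.

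\emph{Step 2 (sum-rate value at local omniscience).} Next I would compute $\Rout A$. Write $A = \bigcup_{l=1}^c\sigma_{i_l}$ and work in the collocated model $A_\sigma = \{\sigma_{i_1},\ldots,\sigma_{i_c}\}$. The constant-difference property \eqref{e:R_star_diff} — here read at the level of super-parties, $R_{\sigma_{i_l}}^*(A_\sigma) - R_{\sigma_{i_{l'}}}^*(A_\sigma) = \Hsig{i_l} - \Hsig{i_{l'}}$ — combined with the validity condition (each part's sum-rate increases at slope $1$ once $R_{\sigma_1}\geq \Hsig 1 - \Hsig{i}$) shows that the vector $(R_{\sigma_{i_l}} : 1\leq l\leq c)$ at any time differs from $(R_{\sigma_{i_l}}^*(A_\sigma) : l)$ by a constant vector (equal coordinates). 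Therefore all the super-parts of $A$ hit their target rates $R_{\sigma_{i_l}}^*(A_\sigma)$ simultaneously, and the sum $R_A = \sum_l R_{\sigma_{i_l}}^*(A_\sigma)$ at that instant. By \eqref{e:rco_formula-LP}–\eqref{e:rco_formula} applied to the collocated source $\bPP{\bx_A}$ with atoms $\sigma_{i_l}$, and by \eqref{e:star_sum}/\eqref{e:R_star_sum} (equivalently the forthcoming Lemma~\ref{l:rstar_char}), $\sum_l R_{\sigma_{i_l}}^*(A_\sigma) = \mH_{\sigma_f(A_\sigma)}(A_\sigma) = \mH_{\{\sigma_{i_1}|\cdots|\sigma_{i_c}\}}(A|\bPP{\bx_A})$, which is exactly the claimed value. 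I also need that $A$ does not overshoot: since $\sum_l R_{\sigma_{i_l}}^*(A_\sigma)$ is the LP lower bound $R_{\tt CO}(A_\sigma)$ for membership in $\Rco{A|\bPP{\bx_A}}$, the decoder cannot succeed before this sum-rate is reached, and by the simultaneity argument it succeeds exactly at it.

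\emph{Step 3 (output vector is valid).} Finally I would verify part 2). Let $\sigout$ be obtained by merging the parts of $\sigma$ that lie in a common $A\in\cO$. For each merged super-part $\sigout_i$ corresponding to some $A\in\cO$ with $|\sigout_i|\geq 2$, we must check $(R_j^{\tt out} : j\in\sigout_i)\in\Rco{\sigout_i|\bPP{\bx_{\sigout_i}}}$: this holds because $A$ attained local omniscience, which by the ideal decoder's guarantee means precisely that the received rates from parties in $A$ lie in $\Rco{A|\bPP{\bx_A}}$. For the singleton parts of $\sigout$ (parts of $\sigma$ that did not merge), the condition is vacuous. It remains to check the incremental-history condition: that $(R^{\tt out}_{\sigout_i} : i)$ is reachable from $(0,-1,\ldots,-1)$ by the rate-increment rule with super-parts $\sigout_i$ of marginal entropy $H_{\sigout_i}$. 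Here I would use that before the merge, $(R_{\sigma_i})$ satisfied the increment rule with thresholds $\Hsig 1 - \Hsig i$, and that upon merging $A = \bigcup_l \sigma_{i_l}$, the new slope for $A$ is $\sum_l \frac{1}{|\sigma_{i_l}|}\cdot\text{(as super-part, slope }1/|A_\sigma|\text{ per atom)}$ — more to the point, by Step 2 the merged sum-rate equals $\mH_{\sigma_f(A_\sigma)}(A_\sigma)$, and one checks this equals the value $H_{\sigout_i}^{\tt out}$-consistent starting point so that $A$ behaves as a single party of marginal entropy $H_{\sigout_i}$ thereafter, with $R_A - R_{\sigout_{i'}} = H_{\sigout_i}^{\tt out} - H_{\sigout_{i'}}^{\tt out}$ for the other (possibly merged) parts. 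This last identity — that the merged rate "looks like" a collocated single party from the start — is the recursive heart of the statement; it follows by combining \eqref{e:R_star_diff} at the super-part level with the constant-difference property \eqref{e:constant_diff}, but assembling it cleanly for all pairs of merged/unmerged parts simultaneously is where the bookkeeping is heaviest.

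\emph{Main obstacle.} I expect the technical crux to be Step 3's verification that the merged super-part's sum-rate is consistent with the increment history generated by the \emph{new} partition $\sigout$ and entropy vector $\bH^{\tt out}$ — i.e., showing the post-merge rate vector is genuinely "reachable" in the collocated model, not merely that its current coordinates satisfy the CO constraints. This requires showing that $R_A$ at the merge instant equals the value the increment rule would have produced had $A$ been a single party of entropy $H_{\sigout_i}$ all along, which reduces to the algebraic identity between $\mH_{\sigma_f(A_\sigma)}(A_\sigma)$ and the appropriate difference of marginal entropies; the needed identities are collected in Lemma~\ref{l:rstar_char}, so the obstacle is largely one of careful invocation rather than new ideas.
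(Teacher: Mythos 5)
Your plan is structurally aligned with the paper's proof sketch, and Steps~1 and~3 are on the right track (the paper also establishes validity of $\bRout$ by showing $\Rout{A} - \Rout{B} = H(X_A) - H(X_B)$ for $A, B \in \cO$, using \eqref{e:rstar_diff_formula} and \eqref{e:rstar_sum_formula3} from Lemma~\ref{l:rstar_char}). However, Step~2 has a genuine gap.

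You correctly argue that the constant-difference property forces all members of $A$ to hit $R^*_{\sigma_{i_l}}(A_\sigma)$ simultaneously, and that $\sum_l R^*_{\sigma_{i_l}}(A_\sigma) = \oH_{\sigma_f(A_\sigma)}(A_\sigma)$ is a lower bound on the sum rate needed for $\Rco{A|\bPP{\bx_A}}$. But from these two facts you conclude that ``by the simultaneity argument it succeeds exactly at it,'' which is a non-sequitur. What must still be shown is that $(R^*_{\sigma_{i_l}}(A_\sigma) : l) \in \Rco{A_\sigma}$ at all --- equivalently, that $\sigma_f(A_\sigma)$ is the finest dominant partition for $A_\sigma$. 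This is false in general: $\oH_{\sigma_f(A_\sigma)}$ is merely a lower bound on $R_{\tt CO}(A_\sigma)$ and equals it only when the finest partition maximizes $\oH_\sigma$. The paper closes this gap with Lemma~\ref{lemma:induction}, which guarantees that for any $A_\sigma$ there is some $B \subseteq A_\sigma$, $|B| \geq 2$, with $(R^*_{\sigma_{i_l}}(A_\sigma) : l \in B) \in \Rco{\{\sigma_{i_l} : l \in B\}}$; combined with the constant-difference property and the hypothesis that $A$ is the \emph{first} set to attain local omniscience (so no proper subset of $A$ can have hit its own omniscience region earlier --- which, since $R^*_{\sigma_{i_l}}(B) < R^*_{\sigma_{i_l}}(A_\sigma)$ when $\sigma_f$ is not optimal, rules out $\sigma_f(A_\sigma)$ failing to be the FDP), this yields that the first $A$ to attain local omniscience does so precisely when its rates reach $(R^*_{\sigma_{i_l}}(A_\sigma))$. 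Your proposal identifies the right conclusion but does not supply --- or even name --- the inductive ingredient that makes it true, so as written it conflates ``the sum-rate lower bound is reached'' with ``the rate vector enters the CO region.''
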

In fact, Theorem~\ref{t:recursion_id} is a special case of
Theorem~\ref{t:recursion}, and the proof of the former follows from
that of the latter given below. {However, we provide a
  brief sketch of the proof of Theorem~\ref{t:recursion_id} here to
  highlight the key ideas and, also, to clarify the technical proof of
  Theorem~\ref{t:recursion}.}

{\noindent {\it Proof sketch.} For simplicity, assume that
  $\sigma$ consists of singletons, $i.e.$, $\sigma =
  \sigma_f(\cM)$. The main component of our proof is the following
  claim:} 
{\begin{center} {\em {\bf Claim:} The parties in
      a subset $A$ attain local omniscience exactly when each $R_i$,
      $i\in A$, reaches $\Rstar {i}(A)$.}
\end{center}
As mentioned before, all communicating parties $i\in A$ reach
$\Rstar{i}(A)$ simultaneously, and the parties in $A$ cannot attain
local omniscience before this happens.  The proof of the claim follows
from Lemma~\ref{lemma:induction} given in Section~\ref{s:technical},
since no subset of $A$ has attained local omniscience before $A$. }

{The theorem follows. Indeed, the first assertion holds by
  \eqref{e:R_star_sum}.  For the second assertion, we need to show
  that for two subsets $A$ and $B$ in $\cO$, $\Rout A - \Rout B =
  H(X_A) - H(X_B)$.  The complete proof considers various cases
  depending on if $A$ (or $B$) contains a party $i$ with nonnegative
  $\Rin i$. We illustrate the proof for a case when there exist $i\in
  A$ and $j\in B$ with $\Rin i, \Rin j\geq 0$. Since $\bRin$ is valid
  for $\sigma = \sigma_f(\cM)$ and the communicating parties maintain
  the difference of their rates, it follows from the claim above that
\begin{align*}
\Rout A - \Rout B &= \Rout {A\setminus \{i\}} - \Rout {B\setminus
  \{j\}} + \Rout i - \Rout j \\ &= \Rout {A\setminus \{i\}} - \Rout
      {B\setminus \{j\}} + \Rin i - \Rin j \\ &= \Rout {A\setminus
        \{i\}} - \Rout {B\setminus \{j\}} + H(X_i) - H(X_j) \\ &=
      \sum_{l\in A\setminus \{i\}} \Rstar{l}(A) - \sum_{k\in
        B\setminus \{j\}} \Rstar{k}(B) + H(X_i) - H(X_j) \\ &=H(X_A|
      X_i) - H(X_B| X_j) + H(X_i) - H(X_j) \\ &= H(X_A) - H(X_B).
\end{align*}
Other cases can be handled similarly. Therefore, $\bRout$ is valid for
$\sigout$.  \qed}

Thus, if we proceed by recursively calling $\omn_{\tt id}$, each time
with $(\sigma^{\tt out}, \bH^{\tt out}, \bR^{\tt out})$ obtained from
the previous call, we shall ultimately attain omniscience using the
sum rate $\mH_\sigma(\cM)$ for some partition $\sigma$. Since
$\mH_\sigma(\cM)$ is a lower bound for $\Rco{\cM}$ by \eqref{e:rco_formula}, this rate must be
optimal. We summarize the overall ideal
protocol in Protocol~\ref{p:main_id}.
\begin{protocol}[h]
\caption{$\rde_{\tt id}$: The recursive data exchange protocol under ideal conditions}
\begin{enumerate}
\item Initialize $\sigma= \sigma_f(\cM)$, $\bR = (0,-1,-1, 
  \ldots,-1)$, $k =|\sigma|$.
\item \While{$k>1$}{
\begin{enumerate}
\item[(i)] For $1\leq i \leq k$, a party $j\in \sigma_i$ computes
  $\bPP{\bx_{\sigma_i}}$ and broadcasts it.\\ Each party computes
  $H_{\sigma_i} = H\left(\bPP{\bx_{\sigma_i}}\right)$, $1\leq i\leq
  k$.

\item[(ii)] Let $\bH$ be the sorted version of $(H_{\sigma_i} : 1\leq
  i \leq k)$, $i.e.$, assume $H_{\sigma_1}\geq H_{\sigma_2}
  \geq\cdots\geq H_{\sigma_k}$. \\ Call $\omn_{\tt id}(\sigma, \bH,
  \bR)$.  \\ Let $(\bRout, \cO)$ be its output.

\item[(iii)] Let $\sigout = \{\sigma_i:\sigma_i \in \sigma \text{
  s.t. } \sigma_i \not\subset A~\forall A \in \cO\}\bigcup\{A: A\in
  \cO\}$.  \\ Update $\bR= \bRout$, $\sigma=\sigout$, and
  $k=|\sigout|$.
\end{enumerate}
}
\end{enumerate}
\label{p:main_id}
\end{protocol}

\begin{remark}
Recently, it was shown in \cite{ChanB15} that if a set $A$ corresponds to a
part in the partition that attains the maximum in
\eqref{e:rco_formula}, then omniscience can be attained in such a
manner that the parties in $A$ can attain omniscience along the way
from the communication of the parties in $A$.  $\rde$ explicitly
has this feature and attains omniscience for each part of the
maximizing partition along the way.
\end{remark}

We conclude this section with a few illustrative examples to
demonstrate the working of the ideal version $\rde_{\tt id}$. The first example is
for $m=3$ and exhibits a case where $\sigma_f(\cM)$ is the FDP.
\begin{example}\label{ex:1}
Let $X_1 \sim \mathsf{Ber}(1/2)$, $X_3 \sim \mathsf{Ber}(q)$, and $X_2
= X_1 \oplus X_3$.  In this case, $\cR_{\tt CO}(\{1,2,3\})$ is given
by rate vectors satisfying the following linear constraints:
\begin{align*}
R_1+R_2 &\ge 1, \\ R_2+R_3 &\ge h(q), \\ R_1+R_3 &\ge h(q).
\end{align*}
When $\frac{1}{2} < h(q) \le 1$, the finest partition is the FDP, and
\begin{align*}
R_{\tt CO}(\{1,2,3\}) = \mH_{\{1|2|3\}} = \frac{1+2h(q)}{2}.
\end{align*}
The CO region is depicted in Figure~\ref{f:region}.  As can be seen
from the figure, $R_{\tt CO}(\{1,2,3\})$ is achieved by the unique
rate assignment $\bR^* = (1/2,1/2, (2h(q)-1)/2)$. In $\rde_{\tt id}$,
parties 1 and 2 communicate first and increase their rates at slope
$1$ until $R_1= R_2 = H(X_1)-H(X_3) = H(X_2) - H(X_3) = 1-h(q)$. At
this point, party 3 starts communicating and all the parties increase
their rates at slope $1$. Owing to the initial lead of $R_1$ and $R_2$
over $R_3$, all the parties reach $\bR^*$ simultaneously.
\end{example}

\begin{figure}[t]
\begin{center}
\includegraphics[scale=0.28]{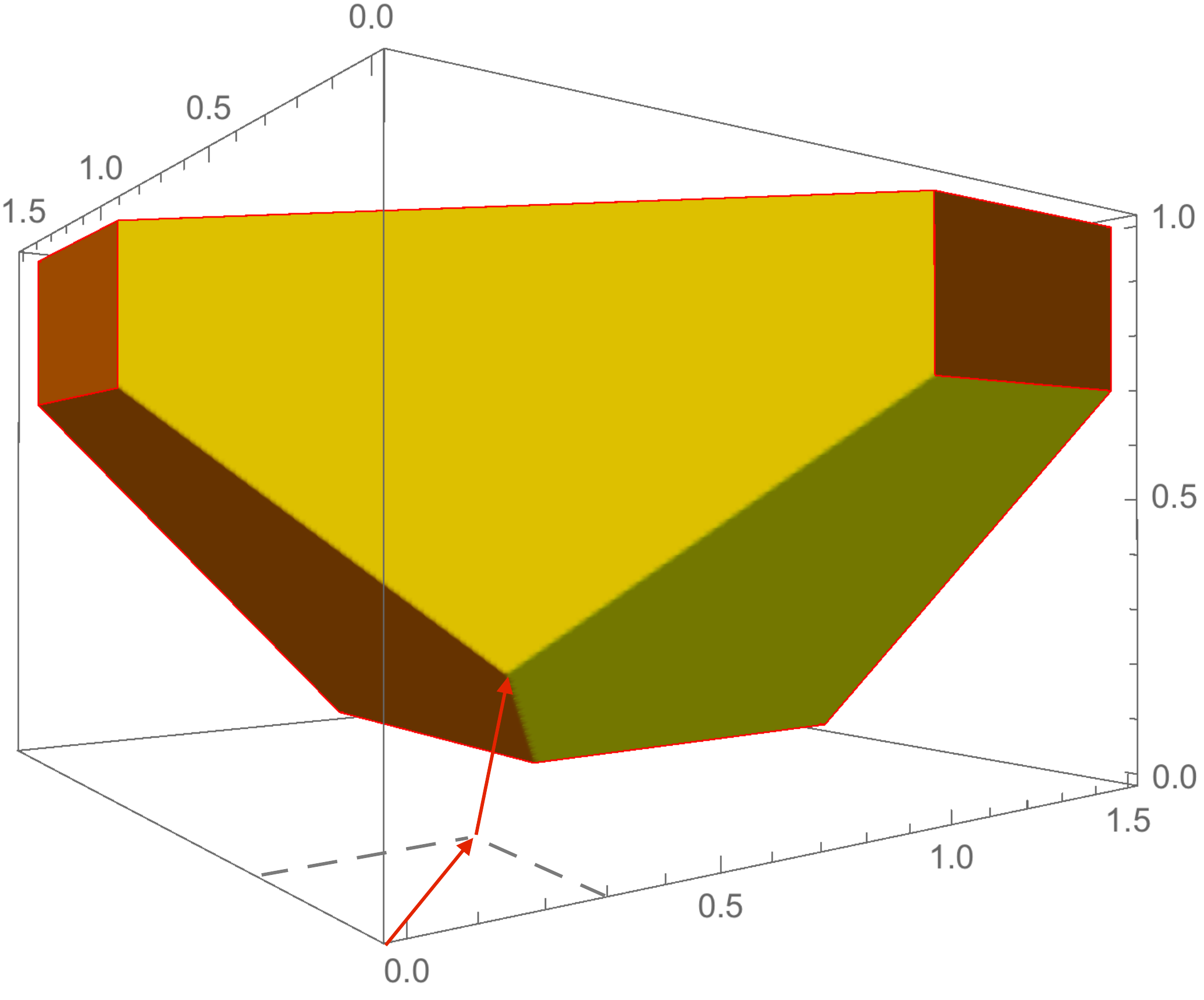}
\caption{Illustration of $\cR_{\tt CO}(\{1,2,3\})$ for
  Example~\ref{ex:1}.}
\label{f:region}
\end{center}
\end{figure}

When $\mH_\sigma$ is maximized by a partition $\sigma$ other than the
finest partition $\sigma_f(\cM)$, as $\rde_{\tt id}$ proceeds, the
parties in parts of $\sigma$ attain local omniscience, along the way,
before all the parties attain omniscience. Consider the following
example, again for $m=3$.
\begin{example}\label{ex:2}
Let $W_1,W_2 \sim \mathsf{Ber}(1/2)$ and $V_1,V_2 \sim
\mathsf{Ber}(q)$ for some $0 < q < \frac{1}{2}$, and let $X_1 =
(W_1,W_2)$, $X_2 = (W_1 \oplus V_1,W_2)$, and $X_3 = W_2 \oplus
V_2$. In this case, the partition $\{12|3\}$ is the FDP, $\mathbb{H}_{\{12|3\}} = 1 + 3h(q)$, and
$\rde_{\tt id}$ proceeds as follows: Parties 1 and 2 start increase their
rates at slope $1$. When their rates reach $h(q)$, they attain local
omniscience. At this point they start increasing their rates at slope
$1/2$ and continue doing so until $R_1+R_2$ reaches $H(X_1,X_2) -
H(X_3) = 1 + h(q)$. Now, party 3 starts communicating at slope
$1$. When all the parties reach $((1+2h(q))/2, (1+2h(q))/2,h(q))$,
they attain omniscience.
\end{example}
Note that $\{1,2\}$ attain local omniscience even before $3$ starts
communicating, illustrating the recursive structure of $\rde_{\tt id}$
wherein a subset attaining local omniscience start behaving as if the
parties in it were collocated to begin with. In fact, this recursive
property holds even when only a subset of communicating parties
attains omniscience, as our final example with $m=4$
illustrates. {The situation for $m=4$ captures the typical
  case for our general analysis -- establishing the recursive nature
  of the protocol at situations similar to that illustrated by the
  point $t_3$ in Figure~\ref{f:four-teminal-example} constitutes the
  main step in our analysis.}

\begin{example}\label{ex:3}
Let $W_1,W_2,W_3 \sim \mathsf{Ber}(1/2)$ and $V_1,V_2 \sim
\mathsf{Ber}(q)$ for some $0 < q < \frac{1}{2}$, and let $X_1 =
(W_1,W_2)$, $X_2 = (W_1 \oplus V_1,W_2)$, $X_3 = W_2 \oplus V_2$, and
$X_4 = W_3$.  Note that the observations of subset $\{1,2,3\}$ are
exactly as in Example~\ref{ex:2}. In this case, the partition $\{123|4\}$ is the FDP, $\mathbb{H}_{\{123|4\}} = 3 + 2h(q)$, and $\rde_{\tt id}$ proceeds 
as in Figure~\ref{f:four-teminal-example}.  At $t_1$, parties $1$ and
$2$ attain local omniscience and change the slopes of $R_1$ and $R_2$
to $1/2$.  At $t_2$, parties 3 and 4 start communicating. At $t_3$,
parties in $\{1,2,3\}$ attain local omniscience and change their slope
to $1/3$.  Note that up to $t_3$ the evolution of $(R_1,R_2,R_3)$ is
exactly the same as that in Example~\ref{ex:2}. Also, at $t_3$ the
rate difference $(R_1 + R_2 + R_3 - R_4)$ equals $H(X_1,X_2,X_3) -
H(X_4) = 1 + 2h(q)$. Thus, after $t_3$ the rate pair
$(R_1+R_2+R_3,R_4)$ behaves as if the parties in $\{1,2,3\}$ were
collocated to begin with. Finally, all parties attain omniscience at
$t_4$.
\end{example}

\begin{figure}[t]
\begin{center}
\includegraphics[scale=0.6]{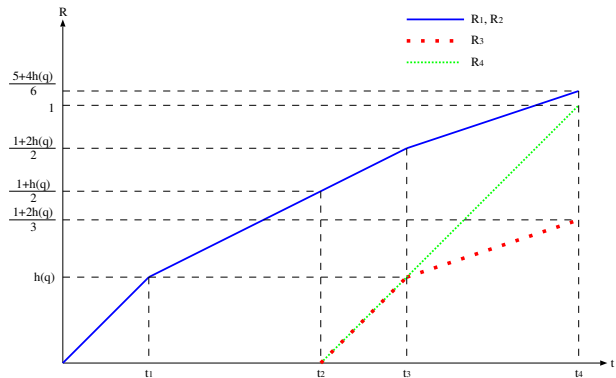}
\caption{The evolution of rates for Example~\ref{ex:3}.}
\label{f:four-teminal-example}
\end{center}
\end{figure}

\section{Universal protocol for omniscience: Full description}
Moving now to the real world, rates must be increased in discrete
increments and a positive decoding error probability must be
tolerated. To that end, the parties incrementally transmit independent
hash bits, $n\Delta$ at a time. The ideal decoder of the previous
section is replaced with a typical decoder $\dec(j, \sigma, \bR)$
which searches for the maximal set $A$ such that there exists a unique
sequence $\bx_A$ that contains the current rate vector in its CO
region and is consistent with the local observation and the received
hash values. In fact, instead of working with the original CO region
$\Rco{A}$, we use the more restrictive region $\Rcod{A}$ consisting of
vectors $(R_i, i\in \cA)$ such that
\begin{align*}
R_B \geq H(X_B\mid X_{A\setminus B}) + |B|\Delta, \quad \forall\,
B\subsetneq A.
\end{align*}
The complete decoder is described in Protocol~\ref{p:dec}.
\begin{protocol}[h]
\caption{$\dec(j, \sigma, \bR)$} \KwIn{An index $1\leq j\leq m$, a
  partition $\sigma\in \Sigma(\cM)$, a rate vector $\bR=(R_1, \ldots,
  R_m)$} \KwOut{A $\nack$ message, an $\ack$ message $(\ack, A)$, or
  an error message $\err$.}
\begin{enumerate}
\item {For $\sigma_i$ such that $j\in\sigma_i$, find the
  maximal set $A\subseteq \cM$ such that $\sigma_i\subsetneq A$ and
  there exists a unique sequence $\hat{\bx}_A$ such that the hashes of
  $\hat{\bx}_A$ match all the previously received hashes from parties
  in $A\setminus\{j\}$ and the joint type $\bPP{{\overline X}_A}$ of
  $\hat{\bx}_A$ satisfies the following:}
\begin{itemize}
\item[(i)] $\bPP{{\overline X}_j} = \bPP{\bx_j}$, and

\item[(ii)] $(R_i : i \in A) \in \Rcod{A\mid \bPP{{\overline X}_A}}$.
\end{itemize}
\item \uIf{there is a unique maximal $A$ found in Step 1}{return
  $(\ack,A)$.}  \ElseIf{there is no sequence found in Step 1 for any
  set $A$}{return $\nack$.}  \ElseIf{{there are multiple
    $A$s found or multiple sequences $\hat{\bx}_A$ are found for any $A$
    in Step 1}} {return $\err$.}
\end{enumerate}
\label{p:dec}
\end{protocol}

Note that the decoder declares $(\ack, A)$ if it can find a unique
maximal set $A$ and a unique sequence $\bx_A$, declares $\nack$ if it
finds no such set, or an $\err$ otherwise. In fact, an error may occur
even when it is not detected, $i.e.$, when $\err$ is not transmitted.
However, we can identify an event $\cE$ (described formally in
Section~\ref{s:proof_recursion}) of small probability such that under
$\cE^c$ the real decoder $\dec$ behaves exactly like $\dec_{\tt id}$,
but with $\Rco{\cA}$ replaced with $\Rcod{\cA}$. Therefore,
omniscience can be achieved in a similar manner as the ideal protocol
of the previous section.

The main component of $\rde$ is the one step
omniscience protocol $\omn$ described in Protocol~\ref{p:omn}, which
uses $\dec$ for decoding. Protocol $\omn$ proceeds very much like the
ideal protocol except that a new party $i$ starts communicating when
$R_1\geq H(\bPP{\bx_1})- H(\bPP{\bx_i}) + \alpha\Delta$, where
$\alpha\in \mN$ is an increasing threshold parameter which is updated
as the protocol proceeds. Throughout the protocol, a rate $R_i=-1$
indicates that the $i$th party is not yet transmitting and only
parties with $R_i\geq 0$ communicate. The decoder tries to attain
omniscience only among the communicating parties.
\begin{protocol}[t]
\caption{$\omn(\sigma, \alpha, \bH,\bR)$} \KwIn{A partition $\sigma\in
  \Sigma(\cM)$ with $|\sigma|=k$, an $\alpha\in \mN$, an entropy
  estimate vector $\bH = \left(H_{\sigma_i} : 1\leq i\leq k\right)$, a
  rate vector $\bR=(R_1, \ldots, R_m)$; we assume that $\bH$ is
  sorted, $i.e.$, $H_{\sigma_1}\geq H_{\sigma_2}\geq \cdots \geq
  H_{\sigma_k}$} \KwOut{A rate vector $\bRout$, a family of subsets
  $\cO$ that have attained omniscience.}

\begin{enumerate}
\item Initialize {$s:= \max\{i: R_{\sigma_i} \geq 0\}$}.
\item All parties $j$ such that $j\in \sigma_i$ for some
  {$1 \leq i\leq s$} send $\lceil n\Delta/|\sigma_i|\rceil$
  random hash bits. \\ Update $R_j\rightarrow R_j +
  \Delta/|\sigma_i|$. \\

\item \uIf{There exists $i>s$ such that $R_{\sigma_1}\geq
  H_{\sigma_1}- H_{\sigma_i} + \alpha\Delta$}{ set $R_j = 0$ for all
  $j\in \sigma_i$, and set {$s= \max\{i: R_{\sigma_i}\geq
    0\}$}.  }
\item For all $j$ such that $j\in \sigma_i$ for some $1\leq i \leq s$,
  execute $\dec(j, \sigma, \bR)$, which outputs $\nack$, $(\ack,
  A_j)$, or $\err$.
\item \uIf{All parties send a $\nack$} { return to Step 2.  }
  \uElseIf{No party declares an $\err$ and some parties declare an
    $\ack$,}{Identify the omniscience family
\[
\cO = \{B\subset\cM: \text{ all } j\in B \text{ returned } (\ack,
B)\}.
\]
\uIf{$\cO$ is nonempty}{Set $\bRout = \bR$, and return $(\bR,
  \cO)$.}\Else{declare an error.}} \Else{declare an error.}
\end{enumerate}
\label{p:omn}
\end{protocol}

The ideal protocol of the previous section works due to its recursive
structure whereby when a subset $A$ attains local omniscience, the
rate vector appears as if the parties in $A$ have been collocated from
the start. Moreover, the first subset to attain local omniscience does
so by using a communication of rate $\mH_{\sigma_f(A)}$. Both these
properties were captured by Theorem~\ref{t:recursion_id}. The result
below establishes a similar recursive property of $\omn$. However, the
definition of ``validity'' needs to be modified from
Definition~\ref{d:valid_id} -- in place of the operational definition
in the ideal case, we use the more technical definition below which
captures all the key features that we need.
\begin{definition}\label{d:valid}
For $\alpha\in \mN$, $\sigma\in \Sigma(\cM)$ with $|\sigma|=k$ and
$\bH = (\Hsig 1, \ldots, \Hsig k)$, a rate vector $(R_1, \ldots, R_m)$
is $(\sigma, \bH, \alpha)$-valid if, for $s= \max\{i: \Rsig i \geq
0\}$, the following conditions hold:
\begin{enumerate}[(i)]
\item {\it ({Approximate constant difference})} \label{condition:constant-difference} For $1\leq i,j
  \leq s$,
\begin{align}
\Rsig i - \Rsig j \leq \Hsig i - \Hsig j + \alpha\Delta; \nonumber
\end{align}

\item {\it (Noncommunicating
  parties)} \label{condition:noncommunicating-parties}
\begin{align}
\Rsig 1 < \Hsig 1 - \Hsig {s+1} + \alpha\Delta;
\end{align}

\item {\it (Combined parties)} \label{condition:combined-parties}
  $\forall\,1\leq i \leq k$ such that $|\sigma_i|\geq 2$,
\begin{align}
(R_j : j \in \sigma_i) \in \Rcod{\sigma_i};
\label{e:part_omn_condition}
\end{align}

\item {\it (Separate parts)} \label{condition:separate-parts} for all
  $A\subseteq \{1,\ldots, k\}$ with $|A|\geq 2$,
\begin{align}
(R_j: j \in \sigma_i, i \in A) \notin \Rcod {\bigcup_{i \in
      A}\sigma_i}.  \nonumber
\end{align}
\end{enumerate}
\end{definition}

{The constant difference condition is crucial for ensuring
  the recursive nature of $\rde$ under ideal conditions. In
  general, since the rates must be incremented in discrete steps, the
  approximate version in Condition
  (\ref{condition:constant-difference}) has been introduced in the
  place of the original constant difference condition.  For
  noncommunicating parties, Condition
  (\ref{condition:noncommunicating-parties}) must be satisfied so that
  Condition (\ref{condition:constant-difference}) is maintained for
  those parties in future rounds when they start
  communicating. Condition (\ref{condition:combined-parties}) ensures
  that the current rates are enough for parties in each part to attain
  local omniscience, while Condition (\ref{condition:separate-parts})
  ensures that $\sigma$ is the maximal partition such that the parties
  in each part can attain local omniscience at current rates.}
  
The following theorem captures our key observation about $\omn$; its
proof is given in Section~\ref{s:proof_recursion}.
\begin{theorem}\label{t:recursion}
For $\alpha\in \mN$, $\sigma\in \Sigma(\cM)$ with $|\sigma|=k$ and
$\bH = (\Hsig 1, \ldots, \Hsig k)$ with $\Hsig 1 \geq \Hsig 2 \geq
\cdots \Hsig k$, let $\bRin = (\Rin 1, \ldots,\Rin m)$ be
$(\sigma,\bH, \alpha)$-valid. Then, if $\omn(\sigma,\alpha,
\bH,\bRin)$ is executed and error $\cE$ (defined in
Section~\ref{s:proof_recursion}) does not occur, the final rates
$\bRout$ and the omniscience family $\cO$ satisfy the following:
\begin{enumerate}[(I)]
\item For every $A \in \cO$, it holds that
\begin{enumerate}
\item \label{item:decomposition-of-A} $A$ consists of parts of $\sigma$,
  $i.e.$,
\begin{align*} 
A = \bigcup_{l=1}^c \sigma_{i_l}
\end{align*}
for some $\{ i_1,\ldots, i_c\}$, and

\item \label{item:range-of-Rout} denoting by $A_\sigma$ the set
  $\{\sigma_{i_1},\ldots,\sigma_{i_c} \}$, we have
\[
R^*_{\sigma_{i_l}}(A_\sigma) - 2 \alpha\Delta \le
R^{\mathtt{out}}_{\sigma_{i_l}} \le R^*_{\sigma_{i_l}}(A_\sigma) +
(m+2\alpha) \Delta,~1\le l \le c.
\]
\end{enumerate}

\item \label{item:validity-of-Rout} Let $\sigout\in\Sigma(\cM)$ be the
  partition obtained by combining the parts in $\sigma$ that belong to
  the same $A$ in $\cO$.  Let $H_{\sigout_i}$ denote the entropy of
  the type of $\bx_{\sigout_i}$. Then, with $\bH^{\tt out}=
  \left(H_{\sigout_i}, 1\leq i \leq |\sigout|\right)$, $\bRout$ is
  $(\sigout, \bH^{\tt out}, c^\prime_m\alpha)$-valid, where
  $c^\prime_m$ is a constant depending only on $m$.
\end{enumerate}
\end{theorem}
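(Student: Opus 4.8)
\emph{Proof plan.} The plan is to run the argument in parallel with the proof sketch of Theorem~\ref{t:recursion_id}, while controlling the three sources of slack that are absent in the ideal world: the discrete rate increments of size $\Delta/|\sigma_i|$, the use of the shrunken region $\Rcod{\cdot}$ in place of $\Rco{\cdot}$, and the delay parameter $\alpha$ governing when a fresh part begins transmitting. Throughout, the parts of $\sigma$ are viewed as single ``meta-parties'': Condition~(\ref{condition:combined-parties}) of $(\sigma,\bH,\alpha)$-validity gives $(R_j:j\in\sigma_i)\in\Rcod{\sigma_i}$ at the input, and since in each round every party in $\sigma_i$ raises its rate by the \emph{same} amount $\Delta/|\sigma_i|$, this membership --- hence the legitimacy of treating $\sigma_i$ as one party with entropy $\Hsig{i}$ and sum-rate $\Rsig{i}$ --- persists for the whole run of $\omn$. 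First I would record the consequences of conditioning on $\cE^c$: under $\cE^c$ the decoder $\dec(j,\sigma,\bR)$ behaves like $\dec_{\tt id}$ with $\Rco{\cdot}$ replaced by $\Rcod{\cdot}$, i.e.\ it returns $(\ack,A)$ precisely when $A$ is the unique maximal set with $\sigma_i\subseteq A$ and $(R_l:l\in A)\in\Rcod{A\mid\bPP{\bx_A}}$, and $\omn$ returns at the \emph{first} round in which any such $A$ appears. Combining Condition~(\ref{condition:combined-parties}) with the fact that $\Rcod{\cdot}$-membership is preserved under merging a set with a part it meets, every such maximal $A$ is a union of parts of $\sigma$ and distinct members of $\cO$ are disjoint; this is assertion (I)(\ref{item:decomposition-of-A}).

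\emph{Pinning the rates --- the main obstacle.} This is the discrete counterpart of the \emph{Claim} in the ideal sketch, and the step I expect to be hardest. Fix $A=\bigcup_{l=1}^c\sigma_{i_l}\in\cO$. Since $\omn$ halts at the first omniscience round, no proper sub-union of $\{\sigma_{i_1},\dots,\sigma_{i_c}\}$ had entered its own $\Rcod{\cdot}$ region earlier; feeding this into Lemma~\ref{lemma:induction} gives that $\sigma_f(A_\sigma)$ is $\Delta$-dominant for $A_\sigma$, so that the minimum sum-rate over $\Rcod{A\mid\bPP{\bx_A}}$ is $\mH_{\sigma_f(A_\sigma)}(A_\sigma)+|A|\Delta$, with extreme point $\bigl(\Rstar{\sigma_{i_l}}(A_\sigma)+|\sigma_{i_l}|\Delta : 1\le l\le c\bigr)$ --- here I use $\sum_l\Rstar{\sigma_{i_l}}(A_\sigma)=\mH_{\sigma_f(A_\sigma)}(A_\sigma)$ from Lemma~\ref{l:rstar_char} and that raising the right-hand sides of the linear system defining $R^*$ shifts its solution coordinatewise by the same amount. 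Now combine, at the termination round: (a) $(\Rout{l}:l\in A)\in\Rcod{A}$, so $\sum_l\Rout{\sigma_{i_l}}\ge\mH_{\sigma_f(A_\sigma)}(A_\sigma)+|A|\Delta$, while at the previous round $A\notin\Rcod{A}$, and --- the pairwise differences being within $\cO(\alpha\Delta)$ of their targets by (b) --- that failure is caused only by the sum being short of $\mH_{\sigma_f(A_\sigma)}(A_\sigma)+|A|\Delta$ by at most one round ($c\Delta$) plus $\cO(\alpha\Delta)$, whence $\sum_l\Rout{\sigma_{i_l}}\le\mH_{\sigma_f(A_\sigma)}(A_\sigma)+|A|\Delta+\cO((c+\alpha)\Delta)$; (b) every communicating meta-part has $\Rsig{i}$ increasing by exactly $\Delta$ per round, so Condition~(\ref{condition:constant-difference}), valid at the input with parameter $\alpha$, is preserved, and a part entering mid-run does so at $\Rsig{i}=0$ with $\Rsig{1}$ forced by the Step~3 threshold into an interval of width $\cO(\Delta)$ about $\Hsig{1}-\Hsig{i}+\alpha\Delta$, keeping its offset from $\Hsig{i}-\Hsig{j}$ within $\cO((\alpha+1)\Delta)$; (c) $\Rstar{\sigma_{i_l}}(A_\sigma)-\Rstar{\sigma_{i_{l'}}}(A_\sigma)=\Hsig{i_l}-\Hsig{i_{l'}}$ from Lemma~\ref{l:rstar_char}. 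Expanding $c\,\Rout{\sigma_{i_l}}=\sum_{l'}\Rout{\sigma_{i_{l'}}}+\sum_{l'}\bigl(\Rout{\sigma_{i_l}}-\Rout{\sigma_{i_{l'}}}\bigr)$ and substituting (a)--(c) squeezes each $\Rout{\sigma_{i_l}}$ into $[\Rstar{\sigma_{i_l}}(A_\sigma)-2\alpha\Delta,\ \Rstar{\sigma_{i_l}}(A_\sigma)+(m+2\alpha)\Delta]$, which is (I)(\ref{item:range-of-Rout}).

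\emph{Validity of the output, Conditions (\ref{condition:combined-parties}), (\ref{condition:separate-parts}), (\ref{condition:noncommunicating-parties}).} It remains to verify the four clauses of $(\sigout,\bH^{\tt out},c^\prime_m\alpha)$-validity, after re-sorting $\bH^{\tt out}$ (which only permutes entries within the communicating block, since a combined part has entropy at least $\Hsig{s}\ge\Hsig{s+1}$ and so stays above every non-communicating part). Condition~(\ref{condition:combined-parties}): for $A\in\cO$ this is exactly the membership $(\Rout{j}:j\in A)\in\Rcod{A}$ that triggered omniscience, and for a surviving part of $\sigma$ it follows from the input clause since rates only grew. Condition~(\ref{condition:separate-parts}): were some union of $\ge2$ parts of $\sigout$ in $\Rcod{\cdot}$ at the output, the decoder's maximality would make every party in it return an $\ack$ for that union, placing it in $\cO$ and contradicting the formation of $\sigout$. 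Condition~(\ref{condition:noncommunicating-parties}): a part non-communicating at termination never had its Step~3 threshold triggered, so $\Rout{\sigma_1}<\Hsig{1}-\Hsig{s+1}+\alpha\Delta$; transporting this through the preserved approximate constant difference $\Rout{\sigout_1}-\Rout{\sigma_1}\le H_{\sigout_1}-\Hsig{1}+\cO(\alpha\Delta)$, and noting that $\sigout_1$ is $\sigma_1$ or a combined part of larger entropy while the first non-communicating part of $\sigout$ is $\sigma_{s+1}$, yields the required bound with parameter $\cO(\alpha)$.

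\emph{Validity of the output, Condition (\ref{condition:constant-difference}).} This is the discrete version of the telescoping identity in the ideal sketch. For meta-parts $P,Q$ of $\sigout$, write each as a union of parts of $\sigma$, pick representatives $\sigma_{i_0}\subseteq P$, $\sigma_{j_0}\subseteq Q$, and expand $\Rout{P}-\Rout{Q}$ exactly as in that sketch: replace the combined pieces using clause (I)(\ref{item:range-of-Rout}) together with $\sum_{l\neq l_0}\Rstar{\sigma_{i_l}}(A_\sigma)=H(X_A\mid X_{\sigma_{i_{l_0}}})$, replace the single-part pieces using the preserved approximate constant difference, and simplify with $H(X_A\mid X_{\sigma_{i_{l_0}}})+H(X_{\sigma_{i_{l_0}}})=H(X_A)$. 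Each of the at most $m$ substitutions injects an additive $\cO((\alpha+m)\Delta)$, so the accumulated error is $c^\prime_m\alpha\Delta$ for a suitable $c^\prime_m=c^\prime_m(m)$, completing (II). Apart from the pinning step of the second paragraph --- where one must simultaneously invoke Lemma~\ref{lemma:induction} (its hypothesis being supplied by the first-omniscience-round stopping rule), control the $\cO(m\Delta)$ shift of the region's extreme point caused by passing from $\Rco{\cdot}$ to $\Rcod{\cdot}$, and show the staggered start thresholds still keep every pairwise offset within $\cO((\alpha+1)\Delta)$ of its target --- everything here is the ideal-case bookkeeping carried through with explicit $\cO((\alpha+m)\Delta)$ error terms.
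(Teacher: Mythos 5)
Your proposal follows essentially the same route as the paper: (I)(a) via the merging lemmas (Lemmas~\ref{l:combine_omn} and~\ref{l:completion_omn}) together with the maximality of the decoded set; (I)(b) by combining the approximate constant-difference bound with Lemma~\ref{lemma:induction} and the first-omniscience-round stopping rule; and (II) by the same case analysis over whether each part of $\sigout$ is a surviving part of $\sigma$ or a freshly merged member of $\cO$.

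One caution on the pinning step. Your intermediate assertion that Lemma~\ref{lemma:induction} plus the stopping rule makes $\sigma_f(A_\sigma)$ ``$\Delta$-dominant'' for $A_\sigma$ is not quite literal: dominance of $\sigma_f(A_\sigma)$ is a property of the empirical distribution alone, and the discrete protocol can still put $A$ into $\cO$ when $\sigma_f(A_\sigma)$ fails dominance by an amount smaller than $\cO(\Delta)$ (the rate increments simply jump over the narrow gap, and a proper sub-union reaches its own $\Rcod{\cdot}$ region in the same round rather than a strictly earlier one). The paper sidesteps this by never asserting dominance and instead running the upper bound by contradiction: if some $R^{\tt out}_{\sigma_{i_l}}$ exceeded $R^*_{\sigma_{i_l}}(A_\sigma)+(m+2\alpha+1)\Delta$, the constant-difference bound \eqref{eq:difference-within-2alpha+1} would push \emph{every} coordinate more than $|\sigma_{i_l}|\Delta+\Delta$ above its target, so the sub-collection $B$ guaranteed by Lemma~\ref{lemma:induction} would already have been inside $\Rcod{\{\sigma_{i_l}:l\in B\}}$ one round earlier, contradicting the stopping rule. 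Your conclusion survives because the failure of literal $\Delta$-dominance is itself bounded by $\cO(\Delta)$, but the phrasing should be weakened to ``approximate dominance up to $\cO(\Delta)$'' or the direct contradiction argument should be used, as in the paper.
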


We are now in a position to describe $\rde$. We begin by calling $\omn$ with $\sigma=\sigma_f(\cM)$,
$\alpha=1$, the sorted entropy estimates $\bH$ computed from marginal
empirical distributions $\bPP{\bx_i}$, and the rate vector $\bR =
(0,-1, \ldots, -1)$ indicating that party $1$ starts communicating and
every one else remains quiet. Note that $\bR$ is $(\sigma, \bH,
1)$-valid. A new party $i$ starts communicating when $R_1 \geq H_1 -
H_i + \Delta$. If no error occurs, $\omn$ will terminate when a subset
$A$ attains omniscience. In view of Theorem~\ref{t:recursion}, at this
point $R_A$ should be close to
$\mH_{\sigma_f(A)}\left(A|\bPP{\bx_A}\right)$ and the rates will be
$(\sigout, \bH^{\tt out}, c^\prime_m\alpha)$-valid. Thus, we are in a
similar situation as the first call to $\omn$ except that $\alpha$
must be replaced by $c^\prime_m\alpha$ and the parties in a single
part of $\sigout$ are behaving as if they are collocated. The protocol
proceeds by calling $\omn$ again with these updated parameters. Note
that under $\cE^c$, any party $j \in A$ for $A\in \cO$ can correctly
compute $\bPP{\bx_A}$ and transmit it using $O(\log n)$
bits. Proceeding recursively in this manner, the protocol stops when
parties in $\cM$ attain omniscience, which by
Theorem~\ref{t:recursion} can only happen when the sum-rate $R_\cM$ is
close to $\mH_{\sigma}(\cM|\bPP{\bx_\cM})$ for some partition $\sigma$
of $\cM$. Thus, omniscience will be attained in communication of rate
roughly less than $R_{\tt CO}\left(\cM|\bPP{\bx_\cM}\right)$. We
formally describe $\rde$ in Protocol~\ref{p:main} and
summarise its performance in Theorem
\ref{theorem:for-individual-sequence}.

\begin{protocol}[h]
\caption{$\rde$: The recursive data exchange protocol}
\begin{enumerate}
\item Initialize $\sigma= \sigma_f(\cM)$, $\bR = (0,-1,-1,
  \ldots,-1)$, $k =|\sigma|$, $\alpha=1$.
\item \While{$k>1$}{
\begin{enumerate}
\item[(i)] For $1\leq i \leq k$, a party $j\in \sigma_i$ computes
  $\bPP{\bx_{\sigma_i}}$ and broadcasts it.\\ Each party computes
  $H_{\sigma_i} = H\left(\bPP{\bx_{\sigma_i}}\right)$, $1\leq i\leq
  k$.

\item[(ii)] Let $\bH$ be the sorted version of $(H_{\sigma_i} : 1\leq
  i \leq k)$, $i.e.$, assume $H_{\sigma_1}\geq H_{\sigma_2} \geq
  \cdots \geq H_{\sigma_k}$. \\ Call $\omn(\sigma, \alpha, \bH, \bR)$.
  \\ \uIf{There is no error declared}{let $(\bRout, \cO)$ be its
    output.  }\Else{Terminate.}

\item[(iii)] Let $\sigout = \{\sigma_i:\sigma_i\in \sigma \text{
  s.t. } \sigma_i \not\subset A ~ \forall A \in \cO\}\bigcup\{A: A\in
  \cO\}$.  \\ Update $\bR= \bRout$, $\sigma=\sigout$, $k=|\sigout|$,
  and $\alpha \rightarrow c^\prime_m\alpha$.
\end{enumerate}
}
\end{enumerate}
\label{p:main}
\end{protocol}
We close with the following result claiming the universal rate
optimality of $\rde$ for every IID distribution. Proof is a
simple consequence of Theorem~\ref{t:recursion} and is given in
Section~\ref{s:technical}. Note that while Protocol~\ref{p:main} is a
variable length protocol, its fixed length variant can be obtained
simply by aborting the protocol once the total number of bits
communicated crosses $nR$.

\begin{theorem} \label{theorem:for-individual-sequence}
There exist constants $C_i > 0, i=1, \ldots, 4$ depending only on $m$
and a polynomial $p(n)$ depending on $\cX_i$, $i \in \cM$, such that
for every $\Delta > 0$ and every sequence $\mathbf{x}_\cM$, the
probability of error for Protocol~\ref{p:main} is bounded above by
\[
C_1 \left(\frac{ \log |\cX_\cM|}{\Delta}+m\right) p(n) 2^{-n\Delta}.
\]
Furthermore, if an error does not occur, the number of bits
communicated by the protocol for input $\mathbf{x}_\cM$ is bounded
above by
\begin{align} \label{eq:upper-bound-protocol-length}
n R_{\mathtt{CO}}(\cM|\bPP{\mathbf{x}_\cM}) + n C_2 \Delta +
C_3\left(\frac{\log |\cX_\cM|}{\Delta}+m\right) + C_4 \log n.
\end{align}
\end{theorem}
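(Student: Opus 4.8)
The plan is to run $\rde$ and track two quantities: the total number of bits communicated, and the probability that the ``bad event'' $\cE$ of Theorem~\ref{t:recursion} ever occurs across all the recursive calls to $\omn$. First I would bound the error probability. The protocol makes at most $m-1$ calls to $\omn$ (since $|\sigma|$ strictly decreases after each call that produces a nonempty $\cO$, and the loop halts at $|\sigma|=1$). Within each call to $\omn$, the rates are incremented in steps of size $\Delta/|\sigma_i|$ and the loop in Step 2 runs at most $O\!\left((\log|\cX_\cM|)/\Delta\right)$ times, since rates never exceed $\log|\cX_\cM| + O(\Delta)$ before omniscience among the communicating parties is forced. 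At each such round, each of at most $m$ decoders $\dec(j,\sigma,\bR)$ is invoked, and a decoding error (the event that $\dec$ deviates from $\dec_{\tt id}$ applied with $\Rcod{\cdot}$ in place of $\Rco{\cdot}$) occurs only if some incorrect sequence $\hat\bx_A$ whose type lies outside the current $\Rcod{A}$ nonetheless matches all received hashes. By the $2$-universal hashing property, for a fixed incorrect candidate the collision probability is $2^{-n R_{A\setminus\{j\}}} \le 2^{-n(H(\bx_{A\setminus\{j\}}\mid \bx_{\{j\}}) + |A\setminus\{j\}|\Delta)}$ once we are outside $\Rcod{A}$; summing over the at most $p(n)$ sequences of each type (for a suitable polynomial $p$ in $n$ with degree bounded by $|\cX_\cM|$) and over the polynomially many types gives a bound of the form $p(n)\,2^{-n\Delta}$ per decoder call. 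A union bound over the $O(m)$ decoders, the $O((\log|\cX_\cM|)/\Delta)$ rounds, and the $O(m)$ calls to $\omn$ yields the stated bound $C_1\!\left(\tfrac{\log|\cX_\cM|}{\Delta}+m\right)p(n)\,2^{-n\Delta}$, after renaming the polynomial. The precise definition of $\cE$ in Section~\ref{s:proof_recursion} is what makes ``$\dec$ behaves like $\dec_{\tt id}$ off $\cE$'' rigorous, and I would cite it directly.

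Next I would bound the communication length on the error-free event. Here the workhorse is Theorem~\ref{t:recursion}: off $\cE$, each call $\omn(\sigma,\alpha,\bH,\bRin)$ terminates with a nonempty $\cO$, the output is $(\sigout,\bH^{\tt out},c'_m\alpha)$-valid, and for every $A\in\cO$ the sum-rate satisfies, by summing part~(\ref{item:range-of-Rout}) over $l=1,\dots,c$ and using $\sum_l R^*_{\sigma_{i_l}}(A_\sigma) = \mH_{\sigma_f(A_\sigma)}(A_\sigma) = \mH_{\{\sigma_{i_1}|\cdots|\sigma_{i_c}\}}(A\mid\bPP{\bx_A})$ (the identity \eqref{e:R_star_sum} / Lemma~\ref{l:rstar_char}),
\[
R^{\tt out}_A \le \mH_{\{\sigma_{i_1}|\cdots|\sigma_{i_c}\}}(A\mid\bPP{\bx_A}) + c(m+\alpha)\Delta
\]
for a constant $c=c(m)$. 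Iterating the recursion $t \le m-1$ times, $\alpha$ grows as $\alpha_r = (c'_m)^{r}$, so the accumulated slack is $\sum_{r} c(m+\alpha_r)\Delta \le C_2' \Delta$ with $C_2'$ depending only on $m$ (a geometric sum with finitely many terms, all bounded by constants in $m$). When the loop exits, $\sigma$ has been reduced to the trivial partition $\{\cM\}$, which by definition means $\cM$ itself attained local omniscience in the last call, so the final sum-rate is $R_\cM \le \mH_{\sigma}(\cM\mid\bPP{\bx_\cM}) + C_2'\Delta$ for the partition $\sigma$ of $\cM$ induced by the last-but-one stage. By \eqref{e:rco_formula}, $\mH_\sigma(\cM\mid\bPP{\bx_\cM}) \le R_{\tt CO}(\cM\mid\bPP{\bx_\cM})$, giving the leading two terms $nR_{\tt CO}(\cM\mid\bPP{\bx_\cM}) + nC_2\Delta$ of \eqref{eq:upper-bound-protocol-length}. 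The remaining terms are lower-order overhead: the $C_3\!\left(\tfrac{\log|\cX_\cM|}{\Delta}+m\right)$ term accounts for the ceilings $\lceil n\Delta/|\sigma_i|\rceil$ versus $n\Delta/|\sigma_i|$ incurred once per round per part — there are $O((\log|\cX_\cM|)/\Delta)$ rounds per call and $O(m)$ calls, each ceiling costing at most one bit — together with the $O(1)$ bits of $\nack$/$\ack$ overhead per round; and the $C_4\log n$ term accounts for the $O(m)$ broadcasts of empirical types $\bPP{\bx_{\sigma_i}}$, each of size $O(\log n)$ since there are only polynomially many types on a fixed alphabet.

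The main obstacle is the careful bookkeeping of the slack accumulation in part~(\ref{item:range-of-Rout}) across recursive calls: one must verify that the ``collocation'' picture is consistent, i.e., that after a set $A=\bigcup_l \sigma_{i_l}$ merges, the rate vector restricted to $A$ indeed looks like $R^*_{\sigma_{i_l}}(A_\sigma)$ up to the stated additive errors, \emph{and} that the new validity (with parameter $c'_m\alpha$) feeds correctly into the next call so that Theorem~\ref{t:recursion} applies again — this is exactly what part~(\ref{item:validity-of-Rout}) guarantees, but threading it through $m-1$ levels while keeping all constants functions of $m$ alone (and not of $n$ or $\Delta$) requires care. A secondary subtlety is confirming that the number of incrementing rounds inside each $\omn$ call is genuinely $O((\log|\cX_\cM|)/\Delta)$ rather than larger: this follows because Condition~(\ref{condition:combined-parties}) and Theorem~\ref{t:recursion} force omniscience among the communicating parts once $R_{\sigma_1}$ reaches $\mH_{\sigma_f}$ of those parts, which is at most $\log|\cX_\cM|$, so no rate ever exceeds $\log|\cX_\cM| + O(m\Delta)$ before the call terminates. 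Once these two points are in hand, collecting terms and renaming constants gives the theorem.
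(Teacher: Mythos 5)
Your proposal is correct and follows essentially the same route as the paper's proof: bound the error probability via Lemma~\ref{lemma:decoding-error} combined with the bound on the number of rounds from \eqref{e:L_bound}, and bound the communication length by applying Theorem~\ref{t:recursion}(\ref{item:range-of-Rout}) recursively, observing that $\alpha_j$ stays bounded by a constant in $m$ because the number of calls to $\omn$ is at most $m$, and finishing with \eqref{e:rstar_sum_formula2} and \eqref{e:rco_formula}. The only cosmetic difference is that you describe the slack as ``accumulated'' over calls, whereas the bound in Theorem~\ref{t:recursion}(\ref{item:range-of-Rout}) already gives the cumulative rate directly at each call, so a single application at the final call with $A=\cM$ suffices; both readings give a constant-in-$m$ multiple of $\Delta$, so this does not affect the result.
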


\begin{corollary}\label{col:main} 
For $\Delta = \frac{1}{\sqrt{n}}$ and every distribution
$\bPP{X_\cM}$, Protocol~\ref{p:main} has a probability of error
$\ep_n$ vanishing to $0$ as $n \rightarrow \infty$ and average length
$|\pi|_{\tt av}$ less than\footnote{The constant implied by
  $\cO(\sqrt{n \log n})$ depends on $\bPP{X_\cM}$;
  see~\eqref{eq:taylor} below.}
\begin{align*}
nR_{\tt CO}(\cM | \bPP{X_\cM}) + \cO( \sqrt{n \log n}).
\end{align*}
Furthermore, for a fixed $R>0$, the fixed-length variant of
Protocol~\ref{p:main} has probability of error $\ep_n$ vanishing to
$0$ as $n \rightarrow \infty$ for all distributions $\bPP{X_\cM}$ that
satisfy
\[
R> R_{\tt CO}\left(\cM|\bPP{X_\cM}\right) + \cO\left(\sqrt{n^{-1}\log
  n}\right).
\]
\end{corollary}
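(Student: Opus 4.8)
\noindent\emph{Proof proposal.} The plan is to obtain both assertions from the per-sequence guarantee of Theorem~\ref{theorem:for-individual-sequence}, combined with a concentration estimate for the map $Q\mapsto R_{\tt CO}(\cM|Q)$ evaluated at the empirical type of $X_\cM^n$. First I would fix $\Delta=1/\sqrt n$ and substitute into the error bound of Theorem~\ref{theorem:for-individual-sequence}: the probability of error of Protocol~\ref{p:main} is then at most $C_1\big(\sqrt n\,\log|\cX_\cM|+m\big)\,p(n)\,2^{-\sqrt n}$, which vanishes since $2^{-\sqrt n}$ overwhelms any polynomial, and since this holds for every $\bx_\cM$ it also bounds the error probability under the IID law of $X_\cM^n$. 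This already settles the error claim for Protocol~\ref{p:main}.

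For the average length I would condition on the event $\cG$ that no error occurs, noting $\bPP{\cG^c}$ is of the same exponentially small order. On $\cG$, the bound \eqref{eq:upper-bound-protocol-length} with $\Delta=1/\sqrt n$ gives a number of communicated bits at most $n\,R_{\tt CO}(\cM|\bPP{X_\cM^n})+C_2\sqrt n+C_3(\sqrt n\,\log|\cX_\cM|+m)+C_4\log n = n\,R_{\tt CO}(\cM|\bPP{X_\cM^n})+\cO(\sqrt n)$, where $\bPP{X_\cM^n}$ denotes the random joint type; on $\cG^c$ the length is at most $\cO(n\log|\cX_\cM|)$, since the rates only increase and a decision is forced once they exceed $\log|\cX_\cM|$, so it contributes $o(1)$ to the expected length. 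It therefore remains to prove
\[
\bEE\big[R_{\tt CO}(\cM|\bPP{X_\cM^n})\big]\ \le\ R_{\tt CO}(\cM|\bPP{X_\cM})+\cO\big(\sqrt{n^{-1}\log n}\big),
\]
with implied constant depending on $\bPP{X_\cM}$, as multiplying by $n$ then yields the claimed $n\,R_{\tt CO}(\cM|\bPP{X_\cM})+\cO(\sqrt{n\log n})$.

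The crux, and the step I expect to be the main obstacle, is controlling $R_{\tt CO}(\cM|Q)=\max_{\sigma\in\Sigma(\cM)}\mH_\sigma(\cM|Q)$ for types $Q$ near $\bPP{X_\cM}$. Because $\mH_\sigma(\cM|\cdot)$ need not be concave for $m>2$, a one-line Jensen estimate is unavailable; instead I would Taylor-expand each $\mH_\sigma(\cM|\cdot)$ about $\bPP{X_\cM}$, using that the maximum is over a number of partitions depending only on $m$. Almost surely the empirical type is supported on $\mathrm{supp}(\bPP{X_\cM})$, so all the relevant distributions can be viewed on that sub-simplex, on whose interior entropy, and hence each $\mH_\sigma(\cM|\cdot)$, is smooth with gradient of size $\cO(\log(1/p_{\min}))$ where $p_{\min}$ is the least positive mass of $\bPP{X_\cM}$. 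On the set $\{Q:\KL{Q}{\bPP{X_\cM}}\le c(\log n)/n\}$ Pinsker's inequality gives $\|Q-\bPP{X_\cM}\|_1=\cO(\sqrt{n^{-1}\log n})$, so there $\mH_\sigma(\cM|Q)\le\mH_\sigma(\cM|\bPP{X_\cM})+\cO(\sqrt{n^{-1}\log n})\le R_{\tt CO}(\cM|\bPP{X_\cM})+\cO(\sqrt{n^{-1}\log n})$ for every $\sigma$, hence for the maximum. The method of types bounds the probability of the complementary (atypical) set by $(n+1)^{|\cX_\cM|}2^{-c\log n}$, which is $o(1/n)$ once $c$ is chosen large in terms of $|\cX_\cM|$, and there $R_{\tt CO}(\cM|Q)\le 2\log|\cX_\cM|$ trivially; adding the two contributions gives the displayed inequality. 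The delicate points are the blow-up of the entropy gradient at the simplex boundary, handled by restricting first to $\mathrm{supp}(\bPP{X_\cM})$ and then to a $\sqrt{n^{-1}\log n}$-neighborhood of $\bPP{X_\cM}$, and the passage from each $\mH_\sigma$ to their maximum; the union bound over the polynomially many types is exactly what produces the extra $\sqrt{\log n}$ factor.

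Finally, for the fixed-length variant I would bound its error probability by that of Protocol~\ref{p:main} plus the probability that more than $nR$ bits are needed. On $\cG$ the latter event lies in $\{R_{\tt CO}(\cM|\bPP{X_\cM^n})>R-\cO(1/\sqrt n)\}$, and under the hypothesis $R>R_{\tt CO}(\cM|\bPP{X_\cM})+K\sqrt{n^{-1}\log n}$ with $K$ a large enough constant, inside $\{R_{\tt CO}(\cM|\bPP{X_\cM^n})\ge R_{\tt CO}(\cM|\bPP{X_\cM})+t\}$ with $t=\Omega(\sqrt{n^{-1}\log n})$. Setting $\delta(t)=\min\{\KL{Q}{\bPP{X_\cM}}:R_{\tt CO}(\cM|Q)\ge R_{\tt CO}(\cM|\bPP{X_\cM})+t\}$, the local Lipschitz bound of the previous paragraph together with Pinsker gives $\delta(t)=\Omega(t^2)$, so the method of types bounds this probability by $(n+1)^{|\cX_\cM|}2^{-n\delta(t)}$, which vanishes precisely because $nt^2=\Omega(\log n)$ overwhelms $|\cX_\cM|\log(n+1)$ for $K$ large enough. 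Together with the vanishing error probability of Protocol~\ref{p:main}, this gives the claim.
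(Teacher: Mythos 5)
Your proposal is correct and follows essentially the same route as the paper: apply the per-sequence guarantee of Theorem~\ref{theorem:for-individual-sequence} with $\Delta=1/\sqrt n$, split according to whether the empirical type lies in a shrinking neighborhood of $\bPP{X_\cM}$, use a Taylor (local Lipschitz) estimate for $Q\mapsto R_{\tt CO}(\cM|Q)$ restricted to $\mathrm{supp}(\bPP{X_\cM})$ on the typical set, and bound the atypical contribution by a concentration inequality plus the trivial $\log|\cX_\cM|$ bound. The only substantive difference is your choice of concentration tool: the paper's proof uses a Hoeffding bound directly in variational distance, namely $\Pr\big(\mathtt{type}(X_\cM^n)\notin\cB_\delta\big)\le 2|\cX_\cM|\exp(-2n\delta^2)$ with $\delta=\sqrt{(\log n)/n}$, whereas you pass through a Sanov-style KL bound $(n+1)^{|\cX_\cM|}2^{-c\log n}$ followed by Pinsker's inequality. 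Both yield the required exponent and the same $\sqrt{n\log n}$ excess; yours trades the cleaner Hoeffding statement for the more flexible method-of-types bound but arrives at an identical Taylor-plus-concentration argument. Your explicit handling of the fixed-length variant (reducing its failure event to a large-deviation event on the type and bounding $\delta(t)=\Omega(t^2)$ via the same local Lipschitz estimate and Pinsker) is slightly more detailed than the paper, which leaves that half implicit after establishing the average-length bound, but it is consistent with the paper's intent.
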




\section{Universal secret key agreement}\label{s:problem_description_SK}

Closely related to the omniscience problem is the SK agreement problem
where the parties seek to generate shared random bits which are almost
independent of the communication used to generate them. Specifically,
an $(\ep, \delta)$-SK agreement protocol consists of an interactive
communication protocol $\pi$ with public randomness $U$, private
randomness $U_i$ at Party $i$, and with the output of the $i$th party
$K_i=K_i(X_i^n, U_i, U,\Pi)$ such that there exists a $\cK$-valued
random variable $K$ satisfying the {\it recoverability condition}
\begin{align}
\bPr{K_i=K,\, \forall\, i \in \cM} \geq 1- \ep, \nonumber
\end{align}
and the {\it secrecy condition}\footnote{We assume that the public
  randomness $U$ is available to the eavesdropper.}
\begin{align}
\ttlvrn{\bPP{K\Pi U}}{\bPP{\tt unif}\times \bPP{\Pi U}}\leq \delta,
\nonumber
\end{align}
where $\bPP{\tt unif}$ denotes the uniform distribution on $\cK$.
\begin{definition}[Secret key capacity] For $\ep, \delta\in [0,1)$, a
  rate $R\geq 0$ is an $(\ep, \delta)$-achievable SK rate if there
  exists a $\cK^{(n)}$-valued $(\ep, \delta)$-SK with $\log
  |\cK^{(n)}|\geq nR$ for all $n$ sufficiently large. The supremum
  over all $(\ep,\delta)$-achievable SK rates is called the $(\ep,
  \delta)$-SK capacity, denoted $C_{\ep,
    \delta}(\cM|\bPP{X_\cM})$. The SK capacity for $\bPP{X_\cM}$ is
  given by
\[
C(\cM| \bPP{X_\cM}) = \lim_{\ep+\delta\rightarrow 0} C_{\ep,
  \delta}(\cM|\bPP{X_\cM}).
\]
\end{definition}
\begin{theorem}[\cite{CsiNar04}] \label{theorem:key-capacity}
Given a distribution $\bPP{X_\cM}$,
\[
C\left(\cM|\bPP{X_\cM}\right) = H\left(X_\cM\right) - R_{\tt
  CO}\left(\cM|\bPP{X_\cM}\right).
\]
\end{theorem}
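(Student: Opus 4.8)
\emph{Proof proposal.} This is the Csisz\'ar--Narayan capacity theorem \cite{CsiNar04}; I would prove the two bounds separately, the achievability being exactly the ``omniscience, then privacy amplification'' recipe that $\rde$ is designed to implement universally.

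\textbf{Achievability} $\big(C(\cM|\bPP{X_\cM}) \ge H(X_\cM) - R_{\tt CO}(\cM|\bPP{X_\cM})\big)$. First run any omniscience protocol of sum-rate $R_{\tt CO}(\cM|\bPP{X_\cM})$ --- for instance Protocol~\ref{p:main} with $\Delta = n^{-1/2}$, for which Theorem~\ref{theorem:for-individual-sequence}, together with concentration of the empirical type and continuity of $R_{\tt CO}(\cdot)$, gives vanishing error and transcript length $|\Pi| \le n R_{\tt CO}(\cM|\bPP{X_\cM}) + o(n)$ with high probability. On the success event every party holds $X_\cM^n$, so they jointly apply a $2$-universal hash $\phi$ (with seed taken from the public randomness $U$) of output length $\ell = n H(X_\cM) - |\Pi| - n\gamma_n$, $\gamma_n\to 0$, and output $K_i = \phi(\hbX_i)$; recoverability is then immediate. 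For secrecy, the AEP shows $X_\cM^n$ retains smoothed min-entropy at least $nH(X_\cM) - |\Pi| - o(n)$ given $(\Pi,U)$, so the leftover-hash/privacy-amplification lemma used in \cite{CsiNar04} makes $K = \phi(X_\cM^n)$ within $o(1)$ variational distance of a key uniform on $\{0,1\}^\ell$ and independent of $(\Pi,U)$. Hence rate $\ell/n = H(X_\cM) - R_{\tt CO} - o(1)$ is achievable.

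\textbf{Converse} $\big(C \le H(X_\cM) - R_{\tt CO}\big)$. By \eqref{e:rco_formula}, $H(X_\cM) - R_{\tt CO}(\cM|\bPP{X_\cM}) = \min_{\sigma\in\Sigma(\cM)}\big(H(X_\cM) - \mH_\sigma(\cM|\bPP{X_\cM})\big)$, and for any partition $\sigma$ with parts $\sigma_1,\ldots,\sigma_k$,
\begin{align*}
H(X_\cM) - \mH_\sigma(\cM|\bPP{X_\cM}) \;=\; \frac{1}{k-1}\Big(\sum_{i=1}^k H(X_{\sigma_i}) - H(X_\cM)\Big) \;=\; \frac{1}{k-1}\,\KL{\bPP{X_\cM}}{\textstyle\prod_{i=1}^k \bPP{X_{\sigma_i}}}.
\end{align*}
So it suffices to show, for each fixed $\sigma$ and each $(\epsilon,\delta)$-SK $(K,\Pi)$, that $(k-1)\log|\cK| \le \sum_i H(X_{\sigma_i}^n) - H(X_\cM^n) + o(n)$. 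Since $K$ is $\epsilon$-recoverable at some party in each $\sigma_i$, Fano gives $H(K\mid X_{\sigma_i}^n, U_{\sigma_i}, U, \Pi) \le n\eta_n$ with $\eta_n\to 0$ as $\epsilon\to 0$ (as $\log|\cK| = O(n)$); writing $Z_i := (X_{\sigma_i}^n, U_{\sigma_i})$, summing the resulting bounds on $H(K|U\Pi)$ over $i$ and using $\sum_i H(Z_i|K,U,\Pi) \ge H(Z_1,\ldots,Z_k|K,U,\Pi) \ge H(Z_1,\ldots,Z_k|U,\Pi) - H(K|U,\Pi)$ yields $(k-1)H(K|U\Pi) \le \sum_i H(Z_i|U\Pi) - H(Z_1,\ldots,Z_k|U\Pi) + nk\eta_n$. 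The right side is the multi-information of $(Z_1,\ldots,Z_k)$ conditioned on $(U,\Pi)$, and the crux is that \emph{interaction does not increase it}: induction on the rounds of $\Pi$ --- each transmitted bit being a deterministic function of $(Z_\ell, U, \Pi_{\text{past}})$ for a single $\ell$ --- shows $\CMI{(Z_1,\ldots,Z_k)}{\Pi}{U} \le \sum_i \CMI{Z_i}{\Pi}{U}$, whence, using $U_\cM$ and $U$ being independent of $X_\cM^n$ (so the private-coin terms cancel), the conditional multi-information is at most $\sum_i H(X_{\sigma_i}^n) - H(X_\cM^n)$. Finally $\delta$-secrecy with $\log|\cK| = O(n)$ gives $\log|\cK| \le H(K|U\Pi) + o(n)$; combining and minimising over $\sigma$ gives $C \le H(X_\cM) - R_{\tt CO}$.

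\textbf{Main obstacle.} The combinatorial core --- the telescoping over $\sigma$ and the ``interaction cannot increase conditional multi-information'' step --- is short and robust. The delicate part is the asymptotic bookkeeping: on the converse side, converting $\epsilon$-recoverability and $\delta$-secrecy into the $o(n)$ slacks and correctly neutralising the private randomness $U_\cM$ (it must be folded into the $Z_i$ so that it cancels in the multi-information); on the achievability side, the privacy-amplification step, where one must certify that the hashed key is \emph{jointly} near-uniform and independent of the \emph{entire} transcript (including the hash seed), which requires marrying the smoothed-min-entropy estimate to the high-probability bound on $|\Pi|$.
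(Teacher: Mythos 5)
The paper does not prove Theorem~\ref{theorem:key-capacity}: it cites the result to Csisz\'ar and Narayan \cite{CsiNar04} and only sketches the achievability direction in one sentence (``attain omniscience, then extract a secret key''), which is precisely the recipe Protocol~\ref{p:SK_agreement} universalizes. So there is no in-paper argument to compare yours against; I will just assess your reconstruction. Both directions are correct. Achievability reproduces the scheme the paper later implements and analyzes in Theorem~\ref{t:performance_SK_protocol}, with the (standard) smoothed-min-entropy/AEP bookkeeping in place of the type-based bound used in Section~\ref{s:technical}; either version works for a known $\bPP{X_\cM}$. Your converse is a clean modern route that is somewhat different from the original Csisz\'ar--Narayan argument: rather than working with the LP form of $R_{\tt CO}$, you invoke the partition formula \eqref{e:rco_formula} (which postdates \cite{CsiNar04} and is due to Chan et al.), reduce to bounding $(k-1)H(K\mid U\Pi)$ by the conditional multi-information of $Z_i=(X_{\sigma_i}^n,U_{\sigma_i})$ given $(U,\Pi)$ for each $\sigma$, and close with the ``interaction cannot increase multi-information'' step. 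The per-round induction you describe is correct because each transmitted bit is a deterministic function of a single $Z_\ell$ together with $(U,\Pi_{\rm past})$, so $I(Z_\cM;B_{t+1}\mid U,\Pi_t)=H(B_{t+1}\mid U,\Pi_t)=I(Z_{\ell};B_{t+1}\mid U,\Pi_t)\le\sum_i I(Z_i;B_{t+1}\mid U,\Pi_t)$. Two small points worth making explicit if you write this out: the private-coin cancellation $\sum_i H(U_{\sigma_i})=H(U_\cM)$ requires the $U_i$ to be mutually independent across parties, which the paper leaves tacit but is the intended model; and in the Fano step the estimate at party $j\in\sigma_i$ is a function of $(X_j^n,U_j,U,\Pi)$ and hence of $(Z_i,U,\Pi)$, which is why conditioning on $Z_i$ suffices. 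The $\epsilon,\delta\to o(n)$ slack bookkeeping you flag is indeed where the technical care lies, but nothing breaks.
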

In fact, it was shown in \cite{TyaWat14, TyaWat14ii} that a strong
converse holds and $C_{\ep, \delta}(\cM|\bPP{X_\cM}) =
C(\cM|\bPP{X_\cM})$ for all $\ep+\delta<1$.

The achievability of rate $H\left(X_\cM\right) - R_{\tt
  CO}\left(\cM|\bPP{X_\cM}\right)$ was shown in \cite{CsiNar04} by
establishing a connection between SK agreement and omniscience. In
particular, a SK achieving capacity was generated by first
communicating at rate $R_{\tt CO}\left(\cM|\bPP{X_\cM}\right)$ to
attain omniscience, and then extracting a SK from $X_\cM^n$ which is
almost independent of the communication used for
omniscience. Following the same methodology, we provide a universal SK
agreement protocol which builds upon the universal omniscience
protocol of the previous section.

We consider a slight generalization of the definition of SK above,
which admits variable length SKs. An $(\ep, \delta)$-SK $K$ and its
estimates $K_1, \ldots, K_m$ now take values in $\cK = \{0,1\}^*$, the
set of finite length binary sequences. The recoverability condition
remains as before. However, the secrecy condition needs to be
modified. Specifically, denoting by $T$ the random length of $K$,
which we assume to be available to the eavesdropper, the secrecy
condition now requires
\begin{align}
\sum_{t} \bP T t\ttlvrn{\bPP{K\Pi U|T = t}}{\bPP{{\tt unif}, t}\times
  \bPP{\Pi U| T=t}}\leq \delta, \nonumber
\end{align}
where $\bPP{{\tt unif}, t}$ denotes the uniform distribution on
$\{0,1\}^t$. The average achievable rate and average SK capacity are
defined as above with the worst-case length $\log |\cK|$ replaced by
the average length $\bEE[T]$. Instead of introducing a new notation
for average SK capacity, we note that it equals $C\left(\cM |
\bPP{X_\cM}\right)$ and, with an abuse of notation, use $C\left(\cM |
\bPP{X_\cM}\right)$ to denote both the SK capacity and the average SK
capacity. Indeed, the achievability is the same as above since a fixed
length SK constitutes a variable length SK. For the converse, denoting
\begin{align*}
\ep_t := 1-\bPr{K = K_i, i \in \cM| T = t} \text{ and } \delta_t :=
\ttlvrn{\bPP{K\Pi U|T = t}}{\bPP{{\tt unif}, t}\times \bPP{\Pi U|
    T=t}},
\end{align*}
it follows by applying the converse proof of \cite{CsiNar04} for each
fixed value $T=t$ that
\[
\frac t n \leq C\left(\cM | \bPP{X_\cM}\right) + g_1(\ep_t) +
g_2(\delta_t),
\]
where $g_1$ and $g_2$ are concave, increasing functions satisfying
$g_i(x) \rightarrow 0$ as $x\rightarrow 0$. Thus,
\begin{align*}
\frac {\bEE[T]} n &\leq C\left(\cM | \bPP{X_\cM}\right) +
\bEE[g_1(\ep_T) + g_2(\delta_T)] \\ &\leq C\left(\cM |
\bPP{X_\cM}\right) + g_1(\bEE [\ep_T]) + g_2(\bEE[\delta_T]) \\ &\leq
C\left(\cM | \bPP{X_\cM}\right) + g_1(\ep) + g_2(\delta),
\end{align*}
where the last two inequalities hold since $g_i$, $i=1,2$, are concave
and increasing.

We present a universal SK agreement protocol that generates a SK of
average length $nC(\cM| \bPP{X_\cM}) - \cO(\sqrt{n \log n})$ without
the knowledge of the underlying distribution
$\bPP{X_\cM}$. Specifically, first the parties use
Protocol~\ref{p:main} with $\Delta=1/\sqrt{n}$ to recover
$X_\cM^n$. If no error occurs and the recovered sequence is $\bx_\cM$,
by Theorem~\ref{theorem:for-individual-sequence} the number of bits
communicated is no more than
\[
l(\bx_\cM) = n R_{\mathtt{CO}}(\cM|\bPP{\mathbf{x}_\cM}) +
\cO(\sqrt{n}).
\]
We extract a SK from recovered $\bx_\cM^n$ by randomly
hashing\footnote{The random hash can be replaced by a randomly
  selected member of a $2$-universal hash family.} it to roughly
$nH(\bPP{\bx_\cM})- l(\bx_\cM)$ values. Formal description of the
protocol is given in Protocol~\ref{p:SK_agreement}; the length of the
SK is tuned to the secrecy parameter $\delta$.

\begin{protocol}[h]
\caption{A universal SK agreement protocol} \KwIn{Step size parameter
  $\Delta$ and secrecy parameter $\delta$}
\begin{enumerate}
\item Parties execute Protocol~\ref{p:main} with step-size $\Delta$.
\item \uIf{Protocol~\ref{p:main} completes without declaring an error}
  {
    Protocol~\ref{p:main}.\\ Each party $i\in \cM$ forms an estimate
    $K_i$ of the SK as follows:

(i) Denoting by $\dP^{(i)}$ the type of the estimate $\bx_\cM^{(i)}$
    of $X_\cM^n$ at Party $i$ and by $\ep_n$ the maximum error
    probability of Protocol~\ref{p:main}, set
    $l\left(\dP^{(i)}\right)$ to be the quantity in
    \eqref{eq:upper-bound-protocol-length} for $\dP^{(i)}$ and
\[
k\big(\dP^{(i)}\big) = nH\big(\dP^{(i)}\big) - l\big(\dP^{(i)}\big) -
|\cX_\cM|\log (n+1) - 2 \log \frac 1{\delta-2\ep_n} + 2;
\]

(ii) generate $K_i$ by randomly hashing $\bx_\cM^{(i)}$ to
$k\left(\dP^{(i)}\right)$ bits.  } \Else{Declare an error.}
\end{enumerate}
\label{p:SK_agreement}
\end{protocol}

\begin{theorem}\label{t:performance_SK_protocol}
For $\Delta = \frac{1}{\sqrt{n}}$, $0<\delta<1$, and every
distribution $\bPP{X_\cM}$, Protocol~\ref{p:SK_agreement} generates a
variable length $(\ep_n, \delta)$-SK with $\ep_n$ vanishing to $0$ as
$n \rightarrow \infty$ and average length greater than
\begin{align} \label{eq:lower-bound-average-key-length}
nC(\cM | \bPP{X_\cM}) - \cO( \sqrt{n \log n}).
\end{align}
\end{theorem}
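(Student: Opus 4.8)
The plan is to verify the two requirements of an $(\ep_n, \delta)$-SK separately --- recoverability and secrecy --- and then to compute the average length. For \emph{recoverability}, note that whenever Protocol~\ref{p:main} completes without an error (which it does except on an event of probability $\ep_n \to 0$ by Theorem~\ref{theorem:for-individual-sequence} with $\Delta = 1/\sqrt n$), every party $i$ holds the correct $\bx_\cM^n$, hence the common type $\dP^{(i)} = \bPP{\bx_\cM}$ is the same at all parties, hence $k(\dP^{(i)})$ is the same at all parties, and since the hash function is selected using public randomness $U$, all parties compute the identical $K_i$. Thus $\bPr{K_i = K\ \forall i} \ge 1 - \ep_n$ with $K$ defined as this common value. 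For the \emph{average length}, on the no-error event $\log|\cK| = k(\bPP{\bx_\cM}) = nH(\bPP{\bx_\cM}) - l(\bPP{\bx_\cM}) - |\cX_\cM|\log(n+1) - 2\log\frac1{\delta - 2\ep_n} + 2$; substituting the bound on $l$ from \eqref{eq:upper-bound-protocol-length} gives $\log|\cK| \ge nH(\bPP{\bx_\cM}) - nR_{\tt CO}(\cM|\bPP{\bx_\cM}) - nC_2\Delta - \cO(\log n)$. Taking expectations over $X_\cM^n$ and using that $H(\bPP{\bx_\cM})$ and $R_{\tt CO}(\cM|\bPP{\bx_\cM})$ are functions of the empirical type, a Taylor expansion of the map $\dP \mapsto H(\dP) - R_{\tt CO}(\cM|\dP)$ around $\bPP{X_\cM}$ (this is the step flagged in the footnote after \eqref{eq:taylor}, and requires care because $R_{\tt CO}$ is only piecewise-affine in $\dP$, so one works on each region where the maximizing partition is fixed) together with $\bEE\|\bPP{\bx_\cM} - \bPP{X_\cM}\| = \cO(n^{-1/2})$ yields $\bEE[nH(\bPP{\bx_\cM}) - nR_{\tt CO}(\cM|\bPP{\bx_\cM})] \ge nC(\cM|\bPP{X_\cM}) - \cO(\sqrt{n\log n})$, where the $\sqrt{\log n}$ enters through large-deviations control of the rare atypical types. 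Combining, the average length is at least $nC(\cM|\bPP{X_\cM}) - \cO(\sqrt{n\log n})$, giving \eqref{eq:lower-bound-average-key-length}.

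For \emph{secrecy}, I would invoke a leftover-hash / privacy-amplification argument conditioned on the transcript. Write $\Pi$ for the transcript of Protocol~\ref{p:main} (including the $\cO(\log n)$ type-broadcast bits and any $\nack$ symbols, but not the extraction hash seed, which is part of $U$) and let $L$ denote its length. The key $K$ is obtained by applying a $2$-universal hash (seeded by public $U$) to $X_\cM^n$, mapping to $k(\bPP{\bx_\cM})$ bits, where on the no-error event $k(\bPP{\bx_\cM}) \le nH(\bPP{\bx_\cM}) - L - |\cX_\cM|\log(n+1) - 2\log\frac1{\delta - 2\ep_n}$. The standard leftover hash lemma (in its smooth / average conditional min-entropy form) gives, conditioned on $\{T = t\}$ and on the no-error event,
\[
\ttlvrn{\bPP{K\Pi U | T=t}}{\bPP{{\tt unif},t}\times\bPP{\Pi U|T=t}} \le \tfrac12\sqrt{2^{k - H_{\min}^{\epsilon}(X_\cM^n | \Pi) }} + \epsilon,
\]
so it suffices to show $H_{\min}^{\epsilon}(X_\cM^n \mid \Pi) \gtrsim nH(\bPP{X_\cM^n|\text{type}}) - L - |\cX_\cM|\log(n+1)$ up to the slack already built into $k$. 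Here the type-counting bound $|\{\text{types}\}| \le (n+1)^{|\cX_\cM|}$ converts the type-conditional entropy into an unconditional min-entropy bound, and conditioning on $\Pi$ of length $L$ costs at most $L$ bits of min-entropy. The $-2\log\frac1{\delta - 2\ep_n}$ term in $k$ is exactly the usual privacy-amplification security-parameter slack, chosen so that the leftover-hash bound is $\le (\delta - 2\ep_n)/2$ on the no-error event.

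It then remains to account for the error event and pass to the variable-length secrecy criterion. On the error event (probability $\le \ep_n$) I would simply set $K$ to the empty string, so it contributes at most $2\ep_n$ to the averaged variational distance (a factor $2$ from both the $\bPP{K\Pi U}$ and the product side). Writing $\delta_t$ for the conditional secrecy gap given $T = t$, the above shows $\bEE[\delta_T] \le (\delta - 2\ep_n) + 2\ep_n = \delta$ for $n$ large enough that $\delta - 2\ep_n > 0$, which is precisely the variable-length secrecy condition stated before Definition (Secret key capacity). Combining this with the recoverability bound $\bEE[\ep_T] \le \ep_n \to 0$ and the average-length lower bound established above completes the proof.

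\textbf{Main obstacle.} The delicate point is the Taylor-expansion step in the average-length computation: $R_{\tt CO}(\cM|\dP) = \max_{\sigma}\mH_\sigma(\cM|\dP)$ is a maximum of affine-in-entropies functions and need not be concave (nor smooth) in $\dP$ for $m > 2$, so one cannot directly expand it. The resolution --- partition the type simplex into the finitely many regions on which a single $\sigma$ is dominant, expand $\dP\mapsto H(\dP) - \mH_\sigma(\cM|\dP)$ to second order on each region, and absorb the boundary / atypical-type contributions into an $\cO(\sqrt{n\log n})$ remainder using Sanov-type large-deviation bounds on $\bPr{\|\bPP{\bx_\cM} - \bPP{X_\cM}\| > \epsilon}$ --- is exactly where the extra $\sqrt{\log n}$ factor (relative to the cleaner $\cO(\sqrt n)$ one gets for $m = 2$) is incurred, consistent with the footnote remarks in the introduction.
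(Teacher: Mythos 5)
Your proposal follows the same overall strategy as the paper: recoverability is immediate from the error bound on Protocol~\ref{p:main}, secrecy is a privacy-amplification argument conditioned on the joint type, and the average length is obtained by substituting the bound \eqref{eq:upper-bound-protocol-length} into $k(\cdot)$, using Theorem~\ref{theorem:key-capacity}, and repeating the Taylor-expansion-plus-Hoeffding analysis from the proof of Corollary~\ref{col:main}. The place where your route genuinely differs is how the error event is absorbed in the secrecy step. The paper does \emph{not} redefine $K$; it defines an \emph{expurgated transcript} $\Pi'$ equal to $\Pi$ off the error event and constant otherwise, and then splits the secrecy term by triangle inequality into $\|\bPP{K\Pi U|T=t}-\bPP{K\Pi'U|T=t}\|+\|\bPP{\Pi'U|T=t}-\bPP{\Pi U|T=t}\|+\|\bPP{K\Pi'U|T=t}-\bPP{\tt unif}\times\bPP{\Pi'U|T=t}\|$; the first two terms contribute $2\ep_n$, and the third is handled by the \emph{non-smooth} leftover-hash Lemma~\ref{lemma:leftover-hash} applied pointwise in $(t,u)$, with the side-information alphabet size $|\cV|\le 2^{l(t)}$ plugged directly into the bound $\tfrac12\sqrt{|\cV|\,2^{-H_{\min}}}$. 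You instead expurgate $K$ (set it to empty on error), condition the hash lemma on the no-error event, and invoke a smooth conditional-min-entropy chain rule $H_{\min}^{\epsilon}(X_\cM^n|\Pi)\ge H_{\min}^{\epsilon}(X_\cM^n)-L$ to absorb the leakage. Both are sound, but the paper's is a bit tighter: since $H_{\min}(\bPP{X_\cM^n|T=t})\ge nH(t)-|\cX_\cM|\log(n+1)$ holds \emph{without} smoothing, introducing $\epsilon$-smooth min-entropy adds an extra $+\epsilon$ error term that then has to be budgeted against the $\delta-2\ep_n$ slack, whereas the non-smooth LHL gives the constant exactly (and the $+2$ in $k(\cdot)$ neatly cancels the $\tfrac12$ prefactor so the third term is exactly $\delta-2\ep_n$). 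Your bookkeeping in the ``$\le(\delta-2\ep_n)/2+\epsilon$'' sentence is correspondingly looser than it needs to be, though the final arithmetic $(\delta-2\ep_n)+2\ep_n=\delta$ is the same one the paper uses.
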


\section{Technical results and proofs}\label{s:technical}
This section contains the proofs of our results. We begin by noting a
few properties of the mathematical quantities involved in our proofs.

\subsection{Properties of CO region and related quantities}

With a general subset $A\subseteq \cM$ in the role of $\cM$,
 {we define the notations $\Rstar{i}(A)$ and
  $\oH_{\sigma}(A)$, $\sigma\in \Sigma(A)$ in a similar manner as in
  \eqref{e:R_star_definition} and \eqref{e:H_sigma_definition},
  respectively.}  Our first lemma notes some simple properties of
$\oH_{\sigma}(A)$ and $\Rstar{i}(A)$.
\begin{lemma}\label{l:rstar_char}
For $A\subseteq \cM$ and $\sigma\in \Sigma(A)$, the following
relations hold between $\Rstar {i}(A)$ and $\oH_{\sigma}(A)$:
\begin{align}
\sum_{i\in A}\Rstar {i}(A)&=\oH_{\sigma_f}(A);
\label{e:rstar_sum_formula2}
\\ \Rstar {i}(A) &= \oH_{\sigma_f}(A) - H(X_A|X_i), \quad \forall i
\in A;
\label{e:rstar_formula1}
\\ \sum_{i\in B}\Rstar {i}(A) - H(X_A|X_{A\setminus B})&=
|B|\bigg[\oH_{\sigma_f}(A) - \oH_{\sigma_B}(A) \bigg],
\label{e:rstar_sum_formula1}
\end{align}
where the final equality holds for every $B \subsetneq A$, with the
shorthand $\sigma_B$ for the partition $\sigma_B(A) \in \Sigma(A)$
given by $\{\{A\setminus B\}, \{i\} : i \in B\}$.

Furthermore, $\Rstar {i}(A)$ satisfies the following properties:
\begin{align}
\sum_{j \in A}\Rstar j(A) - \Rstar i (A) &= H(X_A|X_i), \quad
\forall\, i\in A;
\label{e:rstar_sum_formula3}
\\ \Rstar i(A) - \Rstar j (A) &= H(X_i) - H(X_j), \quad \forall\, i,j
\in A.
\label{e:rstar_diff_formula}
\end{align}
Finally, for $A \subseteq \cM$ and $\sigma\in \Sigma(A)$, similar
results holds for $\Rstar{\sigma_i}(A_\sigma)$, $1\leq i \leq
|\sigma|$, with $A_\sigma$ in place of $A$.
\end{lemma}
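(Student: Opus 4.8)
The plan is to derive everything from the defining linear system for $\Rstar{i}(A)$, namely $\sum_{j \in A \setminus \{i\}} \Rstar j(A) = H(X_A \mid X_i)$ for all $i \in A$, which has a unique solution since the $|A| \times |A|$ coefficient matrix $J - I$ (all-ones minus identity) is invertible for $|A| \geq 2$. First I would establish \eqref{e:rstar_sum_formula3} as a trivial restatement: $\sum_{j \in A} \Rstar j(A) - \Rstar i(A) = \sum_{j \neq i} \Rstar j(A) = H(X_A \mid X_i)$. Summing \eqref{e:rstar_sum_formula3} over all $i \in A$ gives $|A| \cdot \oH_{\sigma_f}(A) - \oH_{\sigma_f}(A) = \sum_{i \in A} H(X_A \mid X_i)$ after recognizing that, by the definition in \eqref{e:H_sigma_definition} applied to $A$ with $\sigma = \sigma_f(A)$, we have $\oH_{\sigma_f}(A) = \frac{1}{|A|-1}\sum_{i \in A} H(X_A \mid X_i)$; here $\sum_{i \in A}\sum_{j \in A}\Rstar j(A) = |A| \sum_j \Rstar j(A)$. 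This directly yields \eqref{e:rstar_sum_formula2}, i.e.\ $\sum_{i \in A}\Rstar i(A) = \oH_{\sigma_f}(A)$.

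Next, \eqref{e:rstar_formula1} follows by combining \eqref{e:rstar_sum_formula2} and \eqref{e:rstar_sum_formula3}: $\Rstar i(A) = \sum_{j \in A}\Rstar j(A) - (\sum_{j \in A}\Rstar j(A) - \Rstar i(A)) = \oH_{\sigma_f}(A) - H(X_A \mid X_i)$. Then \eqref{e:rstar_diff_formula} comes from \eqref{e:rstar_formula1} together with the chain-rule identity $H(X_A \mid X_i) = H(X_A) - H(X_i)$ (valid since $i \in A$, so $X_A$ determines $X_i$): $\Rstar i(A) - \Rstar j(A) = H(X_A \mid X_j) - H(X_A \mid X_i) = (H(X_A) - H(X_j)) - (H(X_A) - H(X_i)) = H(X_i) - H(X_j)$. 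For \eqref{e:rstar_sum_formula1}, I would use \eqref{e:rstar_formula1} to write $\sum_{i \in B}\Rstar i(A) = |B|\,\oH_{\sigma_f}(A) - \sum_{i \in B}H(X_A \mid X_i)$, so the left-hand side equals $|B|\,\oH_{\sigma_f}(A) - \sum_{i \in B}H(X_A \mid X_i) - H(X_A \mid X_{A \setminus B})$. It then remains to check that $\oH_{\sigma_B}(A)$, which by definition with $|\sigma_B(A)| = |B| + 1$ equals $\frac{1}{|B|}\big(\sum_{i \in B}H(X_A \mid X_i) + H(X_A \mid X_{A \setminus B})\big)$, makes the bracketed expression $|B|\big[\oH_{\sigma_f}(A) - \oH_{\sigma_B}(A)\big]$ match — which is immediate.

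Finally, the extension to $\Rstar{\sigma_i}(A_\sigma)$ requires no new argument: treating each block $\sigma_i$ as a single ``super-party'' with observation $X_{\sigma_i}$, the set $A_\sigma$ plays the role of $A$, and all the identities above were derived purely formally from the defining linear system and the chain rule, both of which are insensitive to whether the ``parties'' are singletons or blocks. I would state this as a one-line remark that the same computations apply verbatim with $H(X_A \mid X_{\sigma_i})$ in place of $H(X_A \mid X_i)$.

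I do not anticipate a genuine obstacle here; the only point requiring a modicum of care is bookkeeping the cardinalities $|\sigma_f(A)| = |A|$ and $|\sigma_B(A)| = |B|+1$ correctly in the normalizing factors $\frac{1}{|\sigma|-1}$ of $\oH_\sigma$, and making sure the invertibility of $J - I$ (hence uniqueness and well-definedness of $\Rstar{i}(A)$) is noted at the outset so that the manipulations are justified. Everything else is a short, essentially mechanical chain of substitutions.
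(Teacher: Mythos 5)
Your proof is correct and takes essentially the same approach as the paper's: both are elementary algebraic manipulations of the defining linear system $\sum_{j \in A \setminus \{i\}}\Rstar{j}(A) = H(X_A\mid X_i)$, differing only in order (you observe first that \eqref{e:rstar_sum_formula3} is a trivial restatement of that system and then sum over $i$, while the paper sums the constraints first to obtain \eqref{e:rstar_sum_formula2} and then subtracts to get \eqref{e:rstar_formula1}). Your opening remark on the invertibility of the all-ones-minus-identity coefficient matrix, which justifies well-definedness of $\Rstar{i}(A)$, is a sensible addition the paper leaves implicit.
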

\begin{proof}
Since $(R_i^*(A) : i \in A)$ is the solution of
\begin{align*}
\sum_{j \in A \atop j \neq i} R_j = H(X_A|X_i),~i\in A,
\end{align*}
by taking the summation of all the constraints and by dividing by
$|A|-1$, we have \eqref{e:rstar_sum_formula2}.  Then, by subtracting
the constraint for $i$ from \eqref{e:rstar_sum_formula2}, we have
\eqref{e:rstar_formula1}.  From \eqref{e:rstar_formula1}, for every $B
\subsetneq A$ it holds that
\begin{align} \label{eq:proof-e:rstar_sum_formula1}
\sum_{i \in B} R^*_i(A) = |B| \oH_{\sigma_f}(A) - \sum_{i \in B} H(X_A
| X_i).
\end{align}
Also,
\begin{align*}
\oH_{\sigma_B}(A) = \frac{1}{|B|} \left[ \sum_{i \in B} H(X_A | X_i) +
  H(X_A | X_{A\backslash B}) \right],
\end{align*}
which is equivalent to
\begin{align*}
\sum_{i \in B} H(X_A | X_i) = |B| \oH_{\sigma_B}(A) - H(X_A |
X_{A\backslash B}).
\end{align*}
Combining this with \eqref{eq:proof-e:rstar_sum_formula1}, we have
\eqref{e:rstar_sum_formula1}.

By taking the difference of \eqref{e:rstar_sum_formula2} and
\eqref{e:rstar_formula1}, we have \eqref{e:rstar_sum_formula3};
\eqref{e:rstar_diff_formula} also follows from
\eqref{e:rstar_formula1}. The final statement is proved exactly in the
same manner by regarding $X_{\sigma_i}$ as a single random variable.
\end{proof}
Next, we prove another useful relation between $\oH$ and $\Rstar{i}$
showing that the difference $\sum_{i\in B}\Rstar{i}(A) -
\oH_{\sigma_f}(B)$ must have the same sign as
$\oH_{\sigma_{\overline{B}}}(A) - \oH_{\sigma_f}(A)$, where $\overline
B$ denotes $A \backslash B$ and, as before, we have used the shorthand
$\sigma_B$ for the partition $\sigma_B(A)$ of $A$.
\begin{lemma}\label{l:rstar_alt_char}
For every $B \subsetneq A \subseteq \cM$ with $\overline{B}= A
\backslash B$,
\begin{align}
\sum_{i\in B}\Rstar{i}(A) = \oH_{\sigma_f}(B)
+\frac{|B||\overline{B}|}{|B|-1}\bigg[ \oH_{\sigma_{\overline{B}}}(A)
  - \oH_{\sigma_f}(A) \bigg].
\label{eq:rstar_alt_char}
\end{align}
For $A \subseteq \cM$ and $\sigma\in \Sigma(A)$, similar results holds
for $\Rstar{\sigma_i}(A_\sigma)$, $1\leq i \leq |\sigma|$, with $B
\subsetneq A_\sigma$ in place of $B \subsetneq A$.
\end{lemma}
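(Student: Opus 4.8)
The plan is to obtain \eqref{eq:rstar_alt_char} purely by combining identities already established in Lemma~\ref{l:rstar_char}, the one new ingredient being a chain-rule observation that the ``per-party'' rates for $A$ and for $B$ differ by the \emph{same} amount for every $i \in B$. First I would apply \eqref{e:rstar_formula1} to $A$ and to $B$ to write, for each $i \in B$,
\[
\Rstar{i}(A) - \Rstar{i}(B) = \big(\oH_{\sigma_f}(A) - \oH_{\sigma_f}(B)\big) - \big(H(X_A|X_i) - H(X_B|X_i)\big),
\]
and observe that, since $\{i\} \subseteq B \subseteq A$, the bracketed difference of conditional entropies is $H(X_A) - H(X_B) = H(X_A|X_B)$, independent of $i$. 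Summing over $i \in B$ and using \eqref{e:rstar_sum_formula2} applied to $B$ (so that $\sum_{i\in B}\Rstar{i}(B) = \oH_{\sigma_f}(B)$) gives the first relation
\[
\sum_{i\in B}\Rstar{i}(A) = \oH_{\sigma_f}(B) + |B|\big(\oH_{\sigma_f}(A) - \oH_{\sigma_f}(B) - H(X_A|X_B)\big),
\]
which already displays the target up to the unwanted term $|B|\,H(X_A|X_B)$.

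The second step is to eliminate $H(X_A|X_B)$, and this is where $\oH_{\sigma_{\overline B}}(A)$ enters. I would apply \eqref{e:rstar_sum_formula1} \emph{with $\overline B = A\setminus B$ in the role of the subset}: since $A\setminus\overline B = B$ and $\sigma_{\overline B}(A) = \{\{B\},\{i\}:i\in\overline B\}$, that identity reads
\[
\sum_{i\in\overline B}\Rstar{i}(A) - H(X_A|X_B) = |\overline B|\,\big(\oH_{\sigma_f}(A) - \oH_{\sigma_{\overline B}}(A)\big).
\]
Using \eqref{e:rstar_sum_formula2} for $A$ to replace $\sum_{i\in\overline B}\Rstar{i}(A)$ by $\oH_{\sigma_f}(A) - \sum_{i\in B}\Rstar{i}(A)$ turns this into a second linear equation in the unknowns $\sum_{i\in B}\Rstar{i}(A)$ and $H(X_A|X_B)$. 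Solving the resulting $2\times 2$ system (substitute $H(X_A|X_B)$ from the second relation into the first) and collecting terms gives $(|B|-1)\sum_{i\in B}\Rstar{i}(A) = (|B|-1)\,\oH_{\sigma_f}(B) + |B||\overline B|\big(\oH_{\sigma_{\overline B}}(A) - \oH_{\sigma_f}(A)\big)$, which is \eqref{eq:rstar_alt_char} after dividing by $|B|-1$. The $A_\sigma$ statement then follows verbatim by treating each $X_{\sigma_i}$ as a single random variable, exactly as in the last sentence of the proof of Lemma~\ref{l:rstar_char}.

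Since each step is a one-line linear manipulation, there is no genuinely hard part; the only point that needs care is the bookkeeping of how many parts the partitions have, since that fixes the normalizing constants. In particular $\sigma_{\overline B}(A)$ has $|\overline B|+1$ parts, so $\oH_{\sigma_{\overline B}}(A) = \frac{1}{|\overline B|}\big(H(X_A|X_B) + \sum_{i\in\overline B}H(X_A|X_i)\big)$, whereas $\sigma_f(B)$ has $|B|$ parts, so $\oH_{\sigma_f}(B) = \frac{1}{|B|-1}\sum_{i\in B}H(X_B|X_i)$; confusing these would yield a wrong coefficient in front of $\oH_{\sigma_{\overline B}}(A) - \oH_{\sigma_f}(A)$. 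As the $\tfrac{1}{|B|-1}$ factor indicates, the statement is meaningful only for $|B|\geq 2$.
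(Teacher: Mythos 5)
Your proof is correct, and it takes a genuinely different route from the paper's. The paper's argument is a direct computation: it expands $(|B|-1)\big[R^*_B(A) - \oH_{\sigma_f}(B)\big]$ and $|B||\overline B|\big[\oH_{\sigma_{\overline B}}(A) - \oH_{\sigma_f}(A)\big]$ into linear combinations of the primitive quantities $H(X_A|X_i)$ ($i\in\overline B$), $H(X_B|X_i)$ ($i\in B$), and $H(X_A|X_B)$ using \eqref{e:rstar_formula1}, and then verifies term by term that the coefficients coincide. You instead stay at the level of the abstractions $\oH_{\sigma}(\cdot)$ and $R^*_i(\cdot)$: your key new ingredient is the chain-rule identity $H(X_A|X_i) - H(X_B|X_i) = H(X_A|X_B)$ for $\{i\}\subseteq B\subseteq A$, which makes $R^*_i(A) - R^*_i(B)$ independent of $i\in B$; summing and using \eqref{e:rstar_sum_formula2} for $B$ gives a first linear relation, and then applying \eqref{e:rstar_sum_formula1} with $\overline B$ in place of $B$ together with \eqref{e:rstar_sum_formula2} for $A$ gives a second relation that lets you eliminate $H(X_A|X_B)$. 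Both arguments are a few lines of algebra, but yours is more modular: it never descends to raw entropies, it reuses the already-established identities as black boxes, and the only new observation (the constant difference $R^*_i(A)-R^*_i(B)$ over $i\in B$) is a conceptually meaningful fact in its own right, closely echoing the constant-difference property \eqref{e:R_star_diff} that the protocol exploits. Your closing remarks on the part counts for $\sigma_{\overline B}(A)$ and $\sigma_f(B)$, the resulting normalizations, and the implicit requirement $|B|\ge 2$ are all correct.
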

\begin{proof}
First we have
\begin{align}
\lefteqn{ (|B|-1)\left[ R^*_B(A) - \oH_{\sigma_f}(B) \right] }
\nonumber \\ &= (|B|-1) R^*_B(A) - \sum_{i \in B} H(X_B|X_i) \nonumber
\\ &= (|B|-1) \left[ |B| \oH_{\sigma_f}(A) - \sum_{i \in B} H(X_A|X_i)
  \right] - \sum_{i \in B} H(X_B|X_i) \nonumber \\ &=
\frac{|B|-1}{|A|-1} \left[ |B| \sum_{i \in A} H(X_A| X_i) - (|A|-1)
  \sum_{i \in B} H(X_A|X_i) \right] - \sum_{i \in B} H(X_B|X_i)
\nonumber \\ &= \frac{(|B|-1)|B|}{|A|-1} \sum_{i \in \overline{B}}
H(X_A|X_i) + \left( \frac{(|B|-1)|B|}{|A|-1} - |B| \right) \sum_{i \in
  B} H(X_B|X_i) \nonumber \\ &~~~ + \left( \frac{(|B|-1)|B|}{|A|-1} -
(|B|-1) \right) |B| H(X_A | X_B), \label{eq:proof-eq:rstar_alt_char-1}
\end{align}
where we used \eqref{e:rstar_formula1} in the second equality. On the
other hand, we have
\begin{align}
\lefteqn{ |B| |\overline{B}| \left[ \oH_{\sigma_{\overline{B}}}(A) -
    \oH_{\sigma_f}(A) \right] } \nonumber \\ &= \frac{|B|}{|A|-1}
\left[ (|A|-1) \sum_{i \in \overline{B}} H(X_A|X_i) + (|A|-1) H(X_A |
  X_B) - |\overline{B}| \sum_{i\in A} H(X_A|X_i) \right] \nonumber
\\ &= \frac{|B|}{|A|-1} \left[ (|B|-1) \sum_{i \in \overline{B}} H(X_A
  | X_i) + (|A|-1) H(X_A|X_B) - |\overline{B}| \sum_{i\in B} \left(
  H(X_B|X_i) + H(X_A|X_B) \right) \right] \nonumber \\ &=
\frac{(|B|-1) |B|}{|A|-1} \sum_{i \in \overline{B}} H(X_A|X_i) -
\frac{|B| |\overline{B}|}{|A|-1} \sum_{i \in B} H(X_B|X_i) \nonumber
\\ &~~~+ \frac{|B|}{|A|-1} \left( (|A|-1) - |B||\overline{B}| \right)
H(X_A|X_B), \label{eq:proof-eq:rstar_alt_char-2}
\end{align}
where we used $|\overline{B}| = |A| - |B|$ in the second equality.  We
can verify that the coefficient of each term in
\eqref{eq:proof-eq:rstar_alt_char-1} and
\eqref{eq:proof-eq:rstar_alt_char-2} coincides.  Thus, we have
\eqref{eq:rstar_alt_char}.

The second statement is proved exactly in the same manner by regarding
$X_{\sigma_i}$ as a single random variable.
\end{proof}
As $\rde$ proceeds, subsets of parties that have attained local
omniscience start behaving as one. In the next recursive step of the
protocol such sets of parties behaving as one attain omniscience. The
next lemma ensures that when the rate is sufficient for these sets of
parties to attain omniscience, it is sufficient also for the
individual members of these sets to attain omniscience.
\begin{lemma}\label{l:combine_omn}
For a subset $A \subseteq \cM$ and a partition $\sigma\in \Sigma(A)$
with $|\sigma|=k$, suppose that for every $1\leq i \leq k$
\begin{align} \label{eq:condition-combine_omn-1}
\left(R_j: j \in \sigma_i\right) \in \Rcod{\sigma_i},
\end{align}
and
\begin{align} \label{eq:condition-combine_omn-2}
\left(\Rsig i: 1\leq i \leq k\right) \in \Rcod{A_\sigma},
\end{align}
where the elements of the set $A_\sigma$ consist of parts of the
partition $\sigma$ (each part treated as a single element).  Then, it
holds that
\begin{align}
\left(R_i: i \in A\right)\in \Rcod{A}.  \nonumber
\end{align}
\end{lemma}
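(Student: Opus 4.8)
The plan is to verify directly that every constraint defining $\Rcod{A}$ holds for the vector $(R_i : i \in A)$. By definition, we must show that for every $B \subsetneq A$,
\[
R_B \geq H(X_B \mid X_{A \setminus B}) + |B|\Delta.
\]
Fix such a $B$. The natural move is to split $B$ according to how it intersects the parts of $\sigma$: for each $1 \leq i \leq k$ write $B_i := B \cap \sigma_i$, and let $I := \{ i : B_i = \sigma_i\}$ (parts fully inside $B$), $J := \{ i : \emptyset \neq B_i \subsetneq \sigma_i\}$ (parts straddling the boundary of $B$), and the remaining indices have $B_i = \emptyset$. The idea is to account for the rate $R_{B_i}$ inside each straddling part $\sigma_i$ using hypothesis \eqref{eq:condition-combine_omn-1}, and to account for the rate of the fully-contained parts $\sigma_i$, $i \in I$, using the ``combined'' hypothesis \eqref{eq:condition-combine_omn-2} applied to the subset $I \cup J \subseteq \{1,\ldots,k\}$ of $A_\sigma$.

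Concretely, first I would apply \eqref{eq:condition-combine_omn-1} for each $i \in J$ with the subset $B_i \subsetneq \sigma_i$ to get
\[
R_{B_i} \geq H(X_{B_i} \mid X_{\sigma_i \setminus B_i}) + |B_i|\Delta .
\]
Next, from \eqref{eq:condition-combine_omn-2}, using that the subset $\{\sigma_i : i \in I \cup J\}$ of $A_\sigma$ is a proper subset (since $B \subsetneq A$ forces the complement to be nonempty), I would obtain
\[
\sum_{i \in I} R_{\sigma_i} + \sum_{i \in J} R_{\sigma_i} \; \geq \; H\Big( X_{\bigcup_{i \in I \cup J}\sigma_i} \,\Big|\, X_{\bigcup_{i \notin I \cup J}\sigma_i}\Big) + \Big(|I| + |J|\Big)\Delta,
\]
and then subtract off the ``excess'' rate $\sum_{i \in J}(R_{\sigma_i} - R_{B_i})$ sitting in the straddling parts; for this I would again invoke \eqref{eq:condition-combine_omn-1}, but now applied to the proper subset $\sigma_i \setminus B_i \subsetneq \sigma_i$ for $i \in J$, to lower bound $R_{\sigma_i \setminus B_i} = R_{\sigma_i} - R_{B_i}$ from below — wait, that gives a lower bound on the quantity being subtracted, which is the wrong direction, so instead I would simply keep $R_{\sigma_i} \geq R_{B_i} + 0$ trivially (rates are nonnegative) when that suffices, and otherwise carry $R_{\sigma_i}$ intact. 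Adding the $J$-bounds to the combined bound and collecting the $|B_i|\Delta$ and $|I|\cdot(\text{part sizes})\Delta$ terms, the total $\Delta$-slack is at least $\sum_{i} |B_i| \Delta = |B|\Delta$, as required.

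The arithmetic that must close is the entropy side: one needs
\[
\sum_{i \in J} H(X_{B_i} \mid X_{\sigma_i \setminus B_i}) \;+\; H\Big( X_{\bigcup_{i \in I}\sigma_i \,\cup\, \bigcup_{i\in J}\sigma_i} \,\Big|\, X_{A \setminus (\bigcup_{i \in I}\sigma_i \cup \bigcup_{i \in J}\sigma_i)}\Big) \;-\; (\text{terms subtracted}) \;\geq\; H(X_B \mid X_{A \setminus B}),
\]
and this is exactly a chain-rule / conditioning-reduces-entropy computation: conditioning on the larger set $X_{A\setminus B} \supseteq X_{\bigcup_{i \notin I\cup J}\sigma_i}$ only decreases entropy, and the straddling-part terms $H(X_{B_i}\mid X_{\sigma_i\setminus B_i})$ together with the combined conditional entropy reassemble $H(X_B \mid X_{A\setminus B})$ via the chain rule after further conditioning. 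I expect the bookkeeping of which variables are conditioned on — matching $A \setminus B = \big(\bigcup_{i \notin I \cup J}\sigma_i\big) \cup \big(\bigcup_{i \in J}(\sigma_i \setminus B_i)\big)$ — to be the only subtle point; once the index sets $I$, $J$ are fixed the inequalities are routine applications of the chain rule and the monotonicity of conditional entropy, so the main obstacle is purely organizational rather than mathematical. The final statement for $A_\sigma$ with a further partition follows verbatim by treating each $X_{\sigma_i}$ as a single variable, exactly as in the preceding lemmas.
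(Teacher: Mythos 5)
There is a genuine gap in your plan. You propose to apply hypothesis \eqref{eq:condition-combine_omn-2} to the set $\{\sigma_i : i \in I \cup J\}$ and then ``subtract off the excess rate $\sum_{i\in J}(R_{\sigma_i}-R_{B_i})$ sitting in the straddling parts.'' This fails for two independent reasons. First, $\{\sigma_i : i \in I\cup J\}$ need not be a proper subset of $A_\sigma$: take $A=\{1,2,3\}$, $\sigma=\{\{1,2\},\{3\}\}$, $B=\{1,3\}$, so $J=\{1\}$, $I=\{2\}$, and $I\cup J=\{1,2\}$ covers all of $A_\sigma$, leaving no constraint in $\Rcod{A_\sigma}$ to invoke (the constraint $B\subsetneq A$ only forces some part to be \emph{not fully contained} in $B$; it does not force any part to be \emph{disjoint} from $B$). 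Second, and more seriously, the subtraction step genuinely goes the wrong way, as you yourself notice mid-proposal: you would need an \emph{upper} bound on $R_{\sigma_i}-R_{B_i}$, but \eqref{eq:condition-combine_omn-1} only provides lower bounds. Your fallback (``simply keep $R_{\sigma_i}\ge R_{B_i}$ trivially, and otherwise carry $R_{\sigma_i}$ intact'') does not close this gap — you end up with a lower bound on $\sum_{i\in I\cup J}R_{\sigma_i}$, which says nothing useful about $R_B=\sum_{i\in I}R_{\sigma_i}+\sum_{i\in J}R_{B_i}$ because the two differ by an uncontrolled nonnegative quantity. The obstacle here is mathematical, not merely organizational.

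The clean route — which the paper takes — avoids both problems by never subtracting anything. Split $R_B = \sum_{i\in J}R_{B_i} + \sum_{i\in I}R_{\sigma_i}$. For $i\in J$, apply \eqref{eq:condition-combine_omn-1} \emph{directly} to the proper subset $B_i\subsetneq\sigma_i$, getting $R_{B_i}\ge H(X_{B_i}|X_{\sigma_i\setminus B_i})+|B_i|\Delta$. For the fully-contained parts, apply \eqref{eq:condition-combine_omn-2} to $I$ alone, which \emph{is} a proper subset of $A_\sigma$ precisely because $B\subsetneq A$ forces $J\neq\emptyset$ (some $B_i\subsetneq\sigma_i$ must be nonempty, or some $B_i$ must be empty; either way $I\subsetneq\{1,\ldots,k\}$). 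Adding these, the $\Delta$-slacks sum to $|B|\Delta$, and the entropy terms dominate $H(X_B|X_{A\setminus B})$ by the chain rule together with the fact that conditioning reduces entropy — exactly the bookkeeping you anticipated, but without any terms to subtract. Your starting decomposition of $B$ along the parts of $\sigma$ and the idea of using the two hypotheses on the two types of parts is the right instinct; the fix is to bound each $R_{B_i}$ from below directly rather than detouring through $R_{\sigma_i}$.
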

\begin{proof}
We prove that for any $B \subsetneq A$,
\begin{align*}
R_B \ge H(X_B | X_{A\backslash B}) + |B|\Delta.
\end{align*}
Without loss of generality, we can assume
\begin{align*}
B = \left( \bigcup_{i=1}^{k^\prime} B_i \right) \cup \left(
\bigcup_{i=k^\prime+1}^k B_i \right)
\end{align*}
for some $1 \le k^\prime \le k$, where $B_i \subsetneq \sigma_i$ for
$1 \le i \le k^\prime$ and $B_i = \sigma_i$ for $k^\prime +1 \le i \le
k$ ($B_i$ may be empty set for $1 \le i \le k^\prime$). Then, from
\eqref{eq:condition-combine_omn-1} and
\eqref{eq:condition-combine_omn-2}, we have
\begin{align*}
R_B &= \sum_{i =1}^{k^\prime} R_{B_i} + \sum_{i = k^\prime +1}^k
R_{B_i} \\ &\ge \sum_{i = 1}^{k^\prime} H(X_{B_i} | X_{\sigma_i
  \backslash B_i}) + H(X_{\sigma_{k^\prime+1}},\ldots,X_{\sigma_{k}} |
X_{A_\sigma \backslash \{ \sigma_{k^\prime+1},\ldots,\sigma_{k}\}}) +
|B| \Delta \\ &= \sum_{i = 1}^{k^\prime} H(X_{B_i} | X_{\sigma_i
  \backslash B_i}) + H(X_{B_{k^\prime+1}},\ldots,X_{B_{k}} | X_{A
  \backslash \cup_{i=k^\prime+1}^{k} B_i }) + |B| \Delta \\ &\ge
\sum_{i = 1}^{k^\prime} H(X_{B_i} | X_{A \backslash \cup_{j=i}^{k}
  B_j}) + H(X_{B_{k^\prime+1}},\ldots,X_{B_{k}} | X_{A \backslash
  \cup_{i=k^\prime+1}^{k} B_i }) + |B| \Delta \\ &= H(X_B |
X_{A\backslash B}) + |B| \Delta.
\end{align*}
\end{proof}
The next observation helps us to relax the assumptions of the previous
lemma by showing that the collocated parts of a partition will attain
local omniscience even if a collection of (nonempty) subsets of each
part attains local omniscience.
\begin{lemma}\label{l:completion_omn}
For a subset $A \subseteq \cM$ and a partition $\sigma\in \Sigma(A)$
with $|\sigma|=k$, let $B_i \subseteq \sigma_i$ be nonempty for $1\leq
i \leq k$. Suppose that for every $1\leq i \leq k$
\begin{align} \label{eq:condition-completion_omn-1}
\left(R_j: j \in \sigma_i\right) \in \Rcod{\sigma_i},
\end{align}
and
\begin{align} \label{eq:condition-completion_omn-2}
\left(R_j: j \in B_i, 1\leq i \leq k\right) \in \Rcod{\bigcup_{i=1}^k
  B_i}.
\end{align}
Then, it holds that
\begin{align}
\left(R_{\sigma_i}: 1\leq i \leq k\right)\in \Rcod{A_\sigma}.
\nonumber
\end{align}
\end{lemma}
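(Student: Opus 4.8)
\emph{Proof plan.} We may assume $k \geq 2$, since for $k=1$ the set $A_\sigma$ is a singleton and $\Rcod{A_\sigma}$ imposes no constraint. Fix a nonempty proper subset $S \subsetneq \{1,\ldots,k\}$ and write $\sigma_S := \bigcup_{i\in S}\sigma_i$ and $\bar S := \{1,\ldots,k\}\setminus S$; since $A_\sigma$ has $k$ elements, it suffices to verify the single inequality
\begin{align*}
\sum_{i\in S} R_{\sigma_i} \geq H\left(X_{\sigma_S}\mid X_{\sigma_{\bar S}}\right) + |S|\Delta
\end{align*}
for each such $S$. For each $i$ set $\overline{B}_i := \sigma_i\setminus B_i$ and split $R_{\sigma_i} = R_{B_i} + R_{\overline{B}_i}$, so that $\sum_{i\in S}R_{\sigma_i} = \sum_{i\in S}R_{B_i} + \sum_{i\in S}R_{\overline{B}_i}$.

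The plan is to bound the two sums separately, much as in the proof of Lemma~\ref{l:combine_omn}. For $\sum_{i\in S}R_{B_i}$ I would apply hypothesis \eqref{eq:condition-completion_omn-2} to the subset $\bigcup_{i\in S}B_i$ of $\bigcup_{i=1}^{k}B_i$ --- which is a \emph{proper} subset because, for any $j\in\bar S$, $B_j$ is nonempty and lies in $\sigma_j$, hence is disjoint from $\sigma_S$ and from $\bigcup_{i\in S}B_i$ --- obtaining $\sum_{i\in S}R_{B_i}\geq H\big(X_{\cup_{i\in S}B_i}\mid X_{\cup_{i\in\bar S}B_i}\big) + \big|\bigcup_{i\in S}B_i\big|\Delta$. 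For $\sum_{i\in S}R_{\overline{B}_i}$, for each $i\in S$ with $\overline{B}_i\neq\emptyset$ I would apply hypothesis \eqref{eq:condition-completion_omn-1} to the proper subset $\overline{B}_i$ of $\sigma_i$, obtaining $R_{\overline{B}_i}\geq H(X_{\overline{B}_i}\mid X_{B_i}) + |\overline{B}_i|\Delta$ (the bound being trivial when $\overline{B}_i=\emptyset$). Adding the two bounds, the total coefficient of $\Delta$ is $\big|\bigcup_{i\in S}B_i\big| + \sum_{i\in S}|\overline{B}_i| = \sum_{i\in S}|\sigma_i|\geq|S|$, which already supplies the required $|S|\Delta$.

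What remains is the entropy bookkeeping: I must check that
\begin{align*}
H\big(X_{\cup_{i\in S}B_i}\mid X_{\cup_{i\in\bar S}B_i}\big) + \sum_{i\in S}H\big(X_{\overline{B}_i}\mid X_{B_i}\big)\;\geq\; H\big(X_{\sigma_S}\mid X_{\sigma_{\bar S}}\big).
\end{align*}
I would expand the right-hand side by the chain rule along the partition $\sigma_S = \big(\bigcup_{i\in S}B_i\big)\cup\big(\bigcup_{i\in S}\overline{B}_i\big)$; bound the resulting term $H\big(X_{\cup_{i\in S}B_i}\mid X_{\sigma_{\bar S}}\big)$ from above by $H\big(X_{\cup_{i\in S}B_i}\mid X_{\cup_{i\in\bar S}B_i}\big)$ using $\bigcup_{i\in\bar S}B_i\subseteq\sigma_{\bar S}$ and the fact that conditioning reduces entropy; and bound the other term $H\big(X_{\cup_{i\in S}\overline{B}_i}\mid X_{\cup_{i\in S}B_i},X_{\sigma_{\bar S}}\big)$ by $\sum_{i\in S}H(X_{\overline{B}_i}\mid X_{B_i})$ via subadditivity of conditional entropy together with the observation that $X_{B_i}$ occurs among the conditioning variables for each $i\in S$. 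Combining this with the rate bounds of the previous paragraph yields the desired inequality for $S$, and since $S$ was arbitrary we conclude $(R_{\sigma_i}:1\leq i\leq k)\in\Rcod{A_\sigma}$. The only genuinely delicate step is this last entropy inequality, where one has to make sure the conditioning sets are nested in the direction that simultaneously licenses ``conditioning reduces entropy'' and subadditivity; everything else amounts to applying the two hypotheses to the subsets identified above and matching $\Delta$-coefficients.
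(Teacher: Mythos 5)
Your proposal is correct and follows essentially the same route as the paper's proof: both split each $R_{\sigma_i}$ into $R_{B_i}+R_{\sigma_i\setminus B_i}$, lower-bound the $\overline{B}_i$-contributions via hypothesis~\eqref{eq:condition-completion_omn-1}, lower-bound $\sum_{i\in S}R_{B_i}$ via hypothesis~\eqref{eq:condition-completion_omn-2}, and then combine the entropy terms by the chain rule together with ``conditioning reduces entropy'' (the paper simply relabels so that $S=\{1,\dots,c\}$ and writes the chain-rule expansion in the forward direction rather than expanding the target). Your $\Delta$-coefficient accounting ($\sum_{i\in S}|\sigma_i|\ge|S|$) and your observation that $\bigcup_{i\in S}B_i$ is a proper subset of $\bigcup_i B_i$ because each $B_j$ with $j\in\bar S$ is nonempty both match what the paper implicitly uses.
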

\begin{proof}
It suffices to show that for any $C= \cup_{i=1}^c \sigma_i$ with
$1\leq c\leq k$
\begin{align}
R_C \ge H(X_C | X_{A_\sigma \backslash C}) + |C| \Delta.
\end{align}
To that end, we have from \eqref{eq:condition-completion_omn-1} and
\eqref{eq:condition-completion_omn-2} that
\begin{align*}
R_C &= \sum_{i =1}^c R_{\sigma_i} \\ &= \sum_{i =1}^c R_{\sigma_i
  \backslash B_i} + \sum_{i=1}^c R_{B_i} \\ &\ge \sum_{i=1}^c
H(X_{\sigma_i \backslash B_i} | X_{B_i}) + H(X_{B_1},\ldots,X_{B_c} |
X_{B_{c+1}},\ldots, X_{B_k}) + |C| \Delta \\ &\ge \sum_{i=1}^c
H(X_{\sigma_i \backslash B_i} | X_{\cup_{j=1}^{i-1} (\sigma_j
  \backslash B_j)}, X_{B_1},\ldots, X_{B_c}, X_{\sigma_{c+1}},\ldots,
X_{\sigma_k}) \\ &~~~~ + H(X_{B_1},\ldots,X_{B_c} |
X_{\sigma_{c+1}},\ldots,X_{\sigma_k}) + |C| \Delta \\ &= H(X_C |
X_{A_\sigma \backslash C}) + |C| \Delta.
\end{align*}
\end{proof}
We need to show that when each call to $\omn$
terminates, which happens when a subset $A$ attains local omniscience,
the rate of communication used for each party $i\in A$ is
$\Rstar{i}(A)$ (or if $\omn$ was called with a partition $\sigma$ then
the same property holds with $i$ and $A$, respectively, replaced by $\sigma_l$ and
$A_\sigma$, where $A_\sigma$ is the set of parts that comprise
$A$). Recall that $\omn$ ensures that for each
communicating party (or a set consisting of collocated parties) the
difference $R_{i} - \Rstar{i}(A)$ is maintained for every $A\subseteq
\cM$. Therefore, all the parties in $A$ will reach $\Rstar{i}(A)$ at
the same time, and, since before reaching this rate their sum-rate
will not be sufficient for omniscience, it suffices to show that the
rate vector $(\Rstar{i}(A) : i \in A)$ lies the omniscience region for
$A$. The next technical lemma shows that there must be some subset $A$
for which this holds and constitutes the main step in our proof. We
show a slight generalization which holds when the parties in parts of
$\sigma\in \Sigma(\cM)$ are collocated.

\begin{lemma} \label{lemma:induction}
For a partition $\sigma \in \Sigma(\cM)$ and $A \subseteq \cM$ such
that
\begin{align*}
A = \bigcup_{l=1}^c \sigma_{i_l},
\end{align*}
there exists $B \subseteq \{1,\ldots,c\}$ with $|B| \ge 2$ such that
\begin{align} \label{eq:induction}
( R_{\sigma_{i_l}}^*(A_\sigma) + |\sigma_{i_l}| \Delta : l \in B) \in
  \cR_{\mathtt{CO}}^\Delta( \{ \sigma_{i_l}: l \in B\}),
\end{align}
where $A_\sigma$ is the set of parts $\sigma_i$ that comprise $A$,
with each part treated as a single element.
\end{lemma}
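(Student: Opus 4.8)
The plan is to work with the set $A_\sigma$ directly, treating each part $\sigma_{i_l}$ as a single party, so that the statement reduces to the following: for any ground set $A_\sigma$ (of "super-parties"), there is a subset $B$ of size at least $2$ such that the rate vector $(\Rstar{\sigma_{i_l}}(A_\sigma) + |\sigma_{i_l}|\Delta : l \in B)$ lies in $\Rcod{\{\sigma_{i_l} : l \in B\}}$. By Lemma~\ref{l:rstar_char} (the final sentence, giving the analogues of \eqref{e:rstar_diff_formula} and \eqref{e:rstar_sum_formula1} with $A_\sigma$ in place of $A$) and by Lemma~\ref{l:rstar_alt_char} applied to $A_\sigma$, all the relevant sum-rates of the $\Rstar{\sigma_{i_l}}(A_\sigma)$ can be expressed through differences of the $\oH$-quantities $\oH_{\sigma'}(A_\sigma)$ over sub-partitions $\sigma'$. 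The key structural fact I would invoke is the existence of the \emph{finest dominant partition} of $A_\sigma$, i.e.\ the finest maximizer $\tau$ of $\sigma' \mapsto \oH_{\sigma'}(A_\sigma)$ among nontrivial partitions $\sigma'$ of $A_\sigma$ (this is the analogue, for the collocated model on $A_\sigma$, of the FDP discussed after \eqref{e:rco_formula}, guaranteed by \cite[Theorem~5.2]{ChanBEKL15}).

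The candidate for $B$ is then: take $\tau$ the FDP of $A_\sigma$, and set $\{\sigma_{i_l} : l \in B\}$ to be any one part of $\tau$ that has at least two elements — such a part exists because $\tau \neq \sigma_f(A_\sigma)$ is impossible only when the finest partition itself is dominant, and I would handle that degenerate case separately (in that case one can take $B$ to be all of $\{1,\dots,c\}$, since when $\sigma_f(A_\sigma)$ is dominant we have $\oH_{\sigma_f}(A_\sigma) \ge \oH_{\sigma_{\overline{B}}}(A_\sigma)$ for every $B$, and Lemma~\ref{l:rstar_alt_char} then forces $\sum_{l}\Rstar{\sigma_{i_l}}(A_\sigma) \ge \oH_{\sigma_f}(\{\sigma_{i_l}:l\in B\})$, which together with the $|\sigma_{i_l}|\Delta$ slack gives exactly membership in $\Rcod{\cdot}$ after checking all the proper-subset constraints). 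For the main case, fix the part $C$ of $\tau$ with $|C|\ge 2$; I must verify that for every proper nonempty $D \subsetneq C$,
\[
\sum_{\sigma_{i_l} \in D} \big(\Rstar{\sigma_{i_l}}(A_\sigma) + |\sigma_{i_l}|\Delta\big) \;\ge\; H\big(X_{\cup_{\sigma_{i_l}\in D}\sigma_{i_l}} \,\big|\, X_{\cup_{\sigma_{i_l}\in C\setminus D}\sigma_{i_l}}\big) + \Big(\sum_{\sigma_{i_l}\in D}|\sigma_{i_l}|\Big)\Delta.
\]
Cancelling the $\Delta$-terms, this is a statement purely about entropies and the $\Rstar{}$ values; using the decomposition of $\Rstar{}(A_\sigma)$ via Lemma~\ref{l:rstar_char} and Lemma~\ref{l:rstar_alt_char}, the inequality should reduce to $\oH_{\sigma_D}(A_\sigma) \le \oH_{\sigma_f(A_\sigma)}(A_\sigma)$ or to a comparison between $\oH$ restricted to $C$ and the global optimum — both of which hold because $\tau$ is a \emph{dominant} partition (it maximizes $\oH$), and, crucially, because it is the \emph{finest} such, no strict refinement of $C$ inside $A_\sigma$ can do as well, which is what rules out the boundary cases where the inequality would only hold with equality and no slack. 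This "finest dominant" property is exactly what converts the weak inequalities coming from dominance into the strict-in-the-right-direction inequalities needed to absorb into $\Rcod{\cdot}$.

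The main obstacle I anticipate is the last point: showing that the part $C$ of the FDP genuinely satisfies \emph{all} the internal CO-constraints \eqref{eq:induction} rather than merely the sum constraint. The sum constraint is immediate from Lemma~\ref{l:rstar_char} and the definition of $\Rstar{}$; the subtlety is the proper-subset constraints, where I expect to need a submodularity / Dilworth-truncation argument about $\oH$ on sub-lattices — concretely, that the FDP of the restriction of the problem to the part $C$ is the trivial (one-block) partition of $C$, equivalently that $C$ is "indivisible" in the sense of \cite{ChanBEKL15}. Establishing this indivisibility of each block of the fundamental partition, and then translating it through the identities of Lemmas~\ref{l:rstar_char}--\ref{l:rstar_alt_char} into the chain of inequalities above, is the technical heart of the argument; the rest is bookkeeping with the $|\sigma_{i_l}|\Delta$ slack terms, which are harmless since they appear with matching coefficients on both sides.
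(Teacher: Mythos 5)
Your main construction---choosing $B$ to be an arbitrary non-singleton block $C$ of the finest dominant partition (FDP) of $A_\sigma$---is not correct, and the ``indivisibility'' step you flag as the technical heart cannot rescue it. Even granting that each FDP block $C$ is indivisible, your degenerate-case argument applied to the marginal problem on $C$ would only give $(R^*_{\sigma_{i_l}}(C_\sigma) : l \in C) \in \Rco{\{\sigma_{i_l}:l\in C\}}$, with $R^*$ computed \emph{within} $C$. To conclude \eqref{eq:induction} you would additionally need $R^*_{\sigma_{i_l}}(A_\sigma) \ge R^*_{\sigma_{i_l}}(C_\sigma)$ for $l \in C$, and this is in general false. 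Concrete counterexample (with $\sigma = \sigma_f(\cM)$, $m=4$): let $Y_1,Y_2,Z$ be independent uniform bits, $V \sim \mathsf{Ber}(q)$ with $h(q) \in (1/2,1)$, all mutually independent, and set $X_1 = (Y_1,Z)$, $X_2 = (Y_1 \oplus V, Z)$, $X_3 = X_4 = (Y_2,Z)$. One checks that $\{12|34\}$ is the FDP, while $R^*_1(\cM) = R^*_2(\cM) = (1+h(q))/3 < h(q) = H(X_1|X_2) = H(X_2|X_1)$, so $(R^*_1(\cM), R^*_2(\cM)) \notin \Rco{\{1,2\}}$; the FDP block $\{1,2\}$ fails \eqref{eq:induction} while the other block $\{3,4\}$ satisfies it. So the choice of $B$ among FDP blocks matters, and the FDP structure alone does not identify the right one.

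The paper's proof instead inducts on $|A_\sigma|$: if $(R^*_{\sigma_{i_l}}(A_\sigma):l) \notin \Rco{A_\sigma}$, a violated constraint yields a proper $C$ for which \eqref{e:rstar_sum_formula1} and Lemma~\ref{l:rstar_alt_char} give $R^*_{\sigma_{i_l}}(A_\sigma) > R^*_{\sigma_{i_l}}(C_\sigma)$ for all $l \in C$, and the induction hypothesis on $C$ then produces the desired $B \subseteq C$. This explicit strict monotonicity along the descent is exactly the ingredient missing from your FDP-block approach. As a smaller point, even in your degenerate case the argument should go through \eqref{e:rstar_sum_formula1}, not Lemma~\ref{l:rstar_alt_char}: for $D \subsetneq A_\sigma$, \eqref{e:rstar_sum_formula1} and $\oH_{\sigma_f}(A_\sigma) \ge \oH_{\sigma_D}(A_\sigma)$ give $\sum_{l\in D} R^*_{\sigma_{i_l}}(A_\sigma) \ge H(X_D|X_{A\setminus D})$ directly, whereas Lemma~\ref{l:rstar_alt_char} under dominance of $\sigma_f$ yields $\sum_{l\in D} R^*_{\sigma_{i_l}}(A_\sigma) \le \oH_{\sigma_f}(\{\sigma_{i_l}:l\in D\})$---both the direction and the form of the bound are not what the CO constraints require.
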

\begin{proof}
Since \eqref{eq:induction} is equivalent to
\begin{align*}
( R_{\sigma_{i_l}}^*(A_\sigma) : l \in B) \in \cR_{\mathtt{CO}}( \{
  \sigma_{i_l}: l \in B\}),
\end{align*}
we prove the claim for $\Delta = 0$.  We proceed by induction on
$c$. For $c=2$, since
\begin{align*}
R_{\sigma_{i_1}}^*(A_\sigma) &= H(X_{\sigma_{i_1}} |
X_{\sigma_{i_2}}), \\ R_{\sigma_{i_2}}^*(A_\sigma) &=
H(X_{\sigma_{i_2}} | X_{\sigma_{i_1}}),
\end{align*}
$B = \{1,2\}$ satisfies the claim.  Suppose that the claim holds for
all $c \le b$. For $c=b+1$, if
\begin{align*}
(R_{\sigma_{i_l}}^*(A_\sigma) : 1 \le l \le c) \in
  \cR_{\mathtt{CO}}(A_\sigma),
\end{align*}
then $B = \{1,\ldots,c\}$ satisfies the claim. Otherwise, there exists
$\overline{C} \subsetneq \{1,\ldots, c\}$ with $\overline{C} \neq
\emptyset$ such that, with $C = \{1,\ldots,c\}\backslash
\overline{C}$,
\begin{align*}
\sum_{l \in \overline{C}} R_{\sigma_{i_l}}^*(A_\sigma) < H( X_{\cup_{l
    \in \overline{C}} \sigma_{i_l}} | X_{\cup_{l \in C}
  \sigma_{i_l}}).
\end{align*}
Then, by Lemma \ref{l:rstar_alt_char} and \eqref{e:rstar_sum_formula1}
of Lemma \ref{l:rstar_char}, it holds that
\begin{align}
\sum_{l \in C} R_{\sigma_{i_l}}^*(A_\sigma) &>
\mathbb{H}_{\sigma_f}(\{ \sigma_{i_j}: j \in C \}) \nonumber \\ &=
\sum_{l \in C} R_{\sigma_{i_l}}^*(\{ \sigma_{i_j} : j \in C \}).
  \label{eq:proof-induction-1}
\end{align} 
Since
\begin{align*}
R_{\sigma_{i_l}}^*(A_\sigma) - R_{\sigma_{i_{l^\prime}}}^*(A_\sigma)
&= H(X_{\sigma_{i_l}}) - H(X_{\sigma_{i_{l^\prime}}}) \\ &=
R_{\sigma_{i_l}}^*(\{ \sigma_{i_j} : j \in C \}) -
R_{\sigma_{i_{l^\prime}}}^*(\{ \sigma_{i_j} : j \in C \})
\end{align*}
for every $l \neq l^\prime$ ($cf.$~\eqref{e:rstar_diff_formula} of
Lemma \ref{l:rstar_char}), \eqref{eq:proof-induction-1} implies
\begin{align} \label{eq:proof-induction-2}
R_{\sigma_{i_l}}^*(A_\sigma) > R_{\sigma_{i_l}}^*(\{ \sigma_{i_j} : j
\in C \}),~\forall\, l \in C.
\end{align}
Since $|C| \le b$, by the induction hypothesis, there exists $B
\subseteq C$ such that
\begin{align*}
(R_{\sigma_{i_l}}^*(\{ \sigma_{i_j} : j \in C\}) : l \in B) \in
  \cR_{\mathtt{CO}}( \{ \sigma_{i_l} : l \in B\}),
\end{align*}
which together with \eqref{eq:proof-induction-2} implies that $B$
satisfies the claim for $c=b+1$.
\end{proof}
We are now in a position to prove the main results.

\subsection{Proof of Theorem~\ref{t:recursion}}\label{s:proof_recursion}

Before going to the proof of Theorem~\ref{t:recursion}, let us
formally define the error event $\cE$.  We shall show that this event
will happen with vanishing probability.

 {Let $L$ denote the maximum number of rounds of communication for party 1 over 
  all possible values $\bx_\cM$ of the data sequence\footnote{Since party 1 is the first one to communicate
    and continues to communicate till the last round in $\rde$,
    the number of times other parties communicate does not exceed
    $L$.}.}
Since the protocol terminates either correctly or erroneously once the
rate vector enters the omniscience region, and since
\begin{align*}
\big( \log |\cX_{\sigma_i}| + |\sigma_i| \Delta : 1 \le i \le |\sigma|
\big) \in \cR_{\mathtt{CO}}^\Delta(\cM_\sigma | \bPP{\mathbf{x}})
\end{align*}
holds for any $\sigma \in \Sigma(\cM)$ and any sequence $\mathbf{x}$,
$L$ is bounded above as
\begin{align}
L &\le \max_{\sigma \in \Sigma(\cM)} \max_{1\le i \le |\sigma|}
\frac{\log |\cX_{\sigma_i}|}{\Delta} + |\sigma_i| \nonumber \\ &\le
\frac{\log |\cX_\cM|}{\Delta} +m.
\label{e:L_bound}
\end{align}
For a fixed $\mathbf{x} \in \cX_\cM^n$, let $L(\mathbf{x})$ be the
maximum number of rounds of communication when $\mathbf{x}$ is
observed by the parties.  With a slight abuse of notation, denote by
$\mathbf{R}(l)$ the rate of communication after $l$ rounds, $1 \le l
\le L$, if the protocol does not declare an error till then. Also, let
$h_l(\mathbf{x}_i)$ denote the random hash bits sent by the $i$th
party (observing $\bx_i$) in the $l$th round.

For $B \subsetneq A \subseteq \cM$ and $1 \le l \le L(\mathbf{x})$,
let
\begin{align*}\displaystyle
\cT_l^B(\mathbf{x}) = \left\{ \mathbf{x}_A^\prime : (R_i(l) : i \in A)
\in \cR_{\mathtt{CO}}^\Delta(A | \bPP{\mathbf{x}_A^\prime}) \mbox{ and
} \{ i \in A : \mathbf{x}_i^\prime \neq \mathbf{x}_i \} = B \right\}.
\end{align*}
Note that
\begin{align}
|\cT_l^B(\mathbf{x})| &\le p(n) \max_{\bPP{\overline{X}_A} \in
  \cP_n(\cX_A) : \atop (R_i(l) : i \in A) \in
  \cR_{\mathtt{CO}}^\Delta\left(A|\bPP{\overline{X}_A}\right)} |\{
\mathbf{x}_A^{\prime} : \bPP{\mathbf{x}_A^{\prime}} =
\bPP{\overline{X}_A}, \{ i \in A : \mathbf{x}_i^\prime \neq
\mathbf{x}_i \} = B \}| \nonumber \\ &\le p(n)
\max_{\bPP{\overline{X}_A} \in \cP_n(\cX_A) : \atop (R_i(l) : i \in A)
  \in \cR_{\mathtt{CO}}^\Delta\left(A|\bPP{\overline{X}_A}\right)}
2^{n H\left( \overline{X}_A | \overline{X}_{A\backslash B} \right) }
\nonumber \\ &\le p(n) 2^{nR_B(l) - n
  |B|\Delta}, \label{eq:bound-on-typical-set}
\end{align}
where $\cP_n(\cX_A)$ is the set of all types on $\cX_A$ and $p(n)$ is
the number of types and is polynomial in $n$.

For $B \subsetneq A$, denote by $\cE_A(l,B)$ the error event
\begin{align*}
\cE_A(l,B) = \left\{ \exists \mathbf{x}_A^\prime \in
\cT_l^B(\mathbf{x}) \mbox{ s.t. } h_k(\mathbf{x}_j) =
h_k(\mathbf{x}_j^\prime)~\forall j \in B, \forall 1 \le k \le l
\right\}.
\end{align*}
Finally, let
\begin{align*}
\cE_A(l) &= \bigcup\left\{ \cE_A(l,B) : B \neq \emptyset, B \subsetneq
A \right\}, \\ \cE &= \bigcup \left\{ \cE_A(l) : 1 \le l \le
L(\mathbf{x}), A \subseteq \cM \right\}.
\end{align*}

\begin{lemma} \label{lemma:decoding-error}
There exists a constant $C$ depending only on $m$ such that, for every
sequence $\mathbf{x} \in \cX_\cM^n$, the probability of the error
event $\cE = \cE(\mathbf{x})$ defined above is bounded by $C L p(n)
2^{- n \Delta}$.
\end{lemma}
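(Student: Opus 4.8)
The plan is to bound $\mathbb{P}(\cE)$ by a union bound over the (polynomially many in $n$, and at most $L \le \frac{\log|\cX_\cM|}{\Delta}+m$ rounds) events $\cE_A(l,B)$, and for each such event to bound the probability that \emph{some} alternative sequence $\bx_A' \in \cT_l^B(\bx)$ has matching hashes on all coordinates in $B$ across all rounds $1 \le k \le l$. The key input is the $2$-universal property of the hash family used by each party: for a fixed $\bx_j' \neq \bx_j$, the probability (over the public randomness) that $h_k(\bx_j') = h_k(\bx_j)$ for the hash of rate increment $\Delta$ sent in round $k$ is at most $2^{-n\Delta}$ (up to rounding of $n\Delta$ to an integer, which I would absorb into the constant), and these events are independent across rounds since fresh hash bits are drawn each round. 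Hence, for a fixed $\bx_A' \in \cT_l^B(\bx)$, the probability that its hashes match on all of $B$ and all rounds up to $l$ is at most $2^{-n\sum_{k=1}^l \sum_{j \in B}(\text{rate increment of } j \text{ in round } k)} = 2^{-n R_B(l)}$ (where the exponent telescopes to the total rate $R_B(l)$ communicated by parties in $B$ through round $l$; I will need to be slightly careful that $R_j(l)$ is exactly the accumulated increments, which holds by construction of $\omn$).

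Then I would apply the union bound over $\bx_A' \in \cT_l^B(\bx)$: using the cardinality estimate \eqref{eq:bound-on-typical-set}, $|\cT_l^B(\bx)| \le p(n) 2^{nR_B(l) - n|B|\Delta}$, so
\begin{align*}
\mathbb{P}(\cE_A(l,B)) \le |\cT_l^B(\bx)| \cdot 2^{-nR_B(l)} \le p(n)\, 2^{-n|B|\Delta} \le p(n)\, 2^{-n\Delta},
\end{align*}
since $|B| \ge 1$. This is the crucial cancellation: the number of candidate sequences is $\approx 2^{nR_B(l)}$ and each has match-probability $\approx 2^{-nR_B(l)}$, and the $|B|\Delta$ slack baked into $\Rcod{A}$ (the extra $|B|\Delta$ in the definition of $\Rcod{\cdot}$, which is why $\cT_l^B$ is defined with $\Rcod{}$ rather than $\Rco{}$) provides the surviving $2^{-n|B|\Delta}$ factor that gives the final decay. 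Finally, summing over the at most $2^m$ choices of $A$, at most $2^m$ choices of nonempty $B \subsetneq A$, and at most $L$ values of $l$, I get $\mathbb{P}(\cE) \le 4^m L\, p(n)\, 2^{-n\Delta}$, which is the claimed bound with $C = 4^m$ (a constant depending only on $m$).

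The main obstacle, though largely bookkeeping, is justifying \eqref{eq:bound-on-typical-set} cleanly together with the exponent telescoping: one must verify that for every type $\bPP{\overline X_A}$ appearing in $\cT_l^B(\bx)$ the constraint $(R_i(l):i\in A)\in\Rcod{A|\bPP{\overline X_A}}$ forces, for $B = \{i \in A: \bx_i'\neq\bx_i\}$, the inequality $R_B(l) \ge H(\overline X_B \mid \overline X_{A\setminus B}) + |B|\Delta$, so that the number of sequences $\bx_A'$ with the prescribed type and differing from $\bx$ exactly on $B$ is at most $2^{nH(\overline X_B\mid \overline X_{A\setminus B})} \le 2^{nR_B(l)-n|B|\Delta}$ — here the coordinates outside $B$ are pinned to $\bx_{A\setminus B}$, so only the $B$-coordinates are free, and within a conditional type class their number is bounded by the conditional-entropy exponent (times a polynomial type-counting factor, already absorbed into $p(n)$). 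A secondary point requiring care is the integer-rounding convention for $\lceil n\Delta/|\sigma_i|\rceil$ hash bits per party per round: the accumulated number of hash bits from parties in $B$ through round $l$ may slightly exceed $nR_B(l)$, which only \emph{helps} the match-probability bound, so I would simply note this and proceed. Everything else is a routine union bound.
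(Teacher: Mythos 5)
Your proof is correct and follows essentially the same route as the paper's: bound $\Pr(\cE_A(l,B))$ by the collision probability $2^{-nR_B(l)}$ for a fixed alternative times the cardinality bound $|\cT_l^B(\bx)|\le p(n)2^{nR_B(l)-n|B|\Delta}$, then union-bound over $B\subsetneq A$, $A\subseteq\cM$, and $l\le L$ to arrive at $4^m L\,p(n)2^{-n\Delta}$. The only difference is that you spell out the telescoping of per-round hash increments and the ceiling-rounding issue, which the paper leaves implicit; both are correct and indeed only help the collision bound.
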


\begin{remark} \label{remark:error-event}
Suppose for a sequence $\mathbf{x} \in \cX_\cM^n$, error $\cE$ does
not occur. By definitions of $\cE$ and $\dec$, when $\mathrm{OMN}$
terminates, a set $A$ belongs to $\cO$ if and only if the final rates
of communication $\mathbf{R}$ satisfy $(R_i: i \in A) \in
\cR_{\mathtt{CO}}^\Delta(A|\bPP{\mathbf{x}})$.
\end{remark}

\begin{proof}
By noting the bound in \eqref{eq:bound-on-typical-set}, we have
\begin{align*}
\Pr( \cE_A(l,B) \mid X_\cM^n = \mathbf{x}) &\le
\frac{1}{2^{{nR_B(l)}}} |\cT_l^B(\mathbf{x})| \\ &\le p(n) 2^{-n
  \Delta},
\end{align*}
where the first inequality uses the bound for probability of collision event $h_k(\bx_j) = h_k(\bx_j^\prime)$ which holds for a random hash. (In fact, the same bound holds for a randomly selected member of a $2$-universal hash family.) 
Thus,
\begin{align*}
\Pr(\cE \mid X_\cM^n = \mathbf{x}) &\le 2^m \cdot L \cdot \max_{A
  \subseteq \cM, 1 \le l \le L(\mathbf{x})} \Pr( \cE_A(l) \mid X_\cM^n
= \mathbf{x}) \\ &\le 4^m \cdot L \cdot \max_{A \subseteq \cM, 1 \le l
  \le L(\mathbf{x}), B \subsetneq A} \Pr( \cE_A(l,B) \mid X_\cM^n =
\mathbf{x}) \\ &\le 4^m \cdot L \cdot p(n) \cdot 2^{-n\Delta}.
\end{align*}
\end{proof}

{\it Proof of Theorem~\ref{t:recursion}.} We prove each statement of
Theorem~\ref{t:recursion} separately.

\paragraph*{Proof of (\ref{item:decomposition-of-A})}
Denoting $B_i = A \cap \sigma_i$, let $\{i_1,\ldots,i_c\}$ be those
indices $i \in \{1,\ldots,k\}$ for which $B_i \neq \emptyset$.  Since
$\mathbf{R}^{\mathtt{in}}$ is $(\sigma,\mathbf{H},\alpha)$-valid, it
satisfies
\begin{align*}
(R_j^{\mathtt{in}}: j \in \sigma_i) \in
  \cR_{\mathtt{CO}}^\Delta(\sigma_i),~\forall\, 1 \le i \le k \mbox{
    s.t. } |\sigma_i| \ge 2,
\end{align*}
and therefore, so does $\mathbf{R}^{\mathtt{out}}$, i.e.,
\begin{align*}
(R_j^{\mathtt{out}}: j \in \sigma_i) \in
  \cR_{\mathtt{CO}}^\Delta(\sigma_i),~\forall\, 1 \le i \le k \mbox{
    s.t. } |\sigma_i| \ge 2.
\end{align*}
Furthermore, since an error does not occur and the parties in $A$
attain omniscience, by Remark \ref{remark:error-event}
\begin{align*}
(R_j^{\mathtt{out}}: j \in A) \in \cR_{\mathtt{CO}}^\Delta(A).
\end{align*}
Thus, by Lemma \ref{l:completion_omn} and Lemma \ref{l:combine_omn},
\begin{align*}
(R_j^{\mathtt{out}} : j \in \sigma_{i_l}, 1 \le l \le c ) \in
  \cR_{\mathtt{CO}}^\Delta\left( \bigcup_{l=1}^c \sigma_{i_l} \right).
\end{align*}
Therefore, since no error has occurred, by Remark
\ref{remark:error-event} the parties in $\cup_{l=1}^c \sigma_{i_l}$
must attain omniscience. But by the definition of $\cO$ the set $A$ is
a maximal set attaining omniscience and $A \subseteq \cup_{l=1}^c
\sigma_{i_l}$.  Hence, $A$ must be $\cup_{l=1}^c \sigma_{i_l}$.

\paragraph*{Proof of (\ref{item:range-of-Rout})}
As a preparation of our proof, we first show that for $1 \le j,l \le
c$, the difference between $(R_{\sigma_{i_j}}^{\mathtt{out}} -
R_{\sigma_{i_j}}^*(A_\sigma))$ and $(R_{\sigma_{i_l}}^{\mathtt{out}} -
R_{\sigma_{i_l}}^*(A_\sigma))$ is bounded above by $(2 \alpha +
1)\Delta$. Indeed, for $1 \le j,l \le c$, the parties in
$\sigma_{i_j}$ and $\sigma_{i_l}$ are communicating when
$\mathrm{OMN}$ terminates. In fact, defining
\[
i_l \le s^\mathtt{in} := \max\{  {i :
  R_{\sigma_i}^{\mathtt{in}}} \ge 0\},
\]
the parties in $\sigma_{i_l}$ were communicating even when
$\mathrm{OMN}$ was initiated, and therefore,
\begin{align*}
R_{\sigma_1}^{\mathtt{out}} - R_{\sigma_{i_l}}^{\mathtt{out}} =
R_{\sigma_1}^{\mathtt{in}} - R_{\sigma_{i_l}}^{\mathtt{in}},
\end{align*} 
which by the assumption that $\mathbf{R}^{\mathtt{in}}$ is
$(\sigma,\mathbf{H},\alpha)$-valid yields
\begin{align} \label{eq:initially-communicating-parties}
H_{\sigma_1} - H_{\sigma_{i_l}} - \alpha \Delta \le
R_{\sigma_1}^{\mathtt{out}} - R_{\sigma_{i_l}}^{\mathtt{out}} \le
H_{\sigma_1} - H_{\sigma_{i_l}} + \alpha \Delta.
\end{align}
On the other hand, if $i_l > s^{\mathtt{in}}$, then the parties in
$\sigma_{i_l}$ start communicating when
\begin{align*}
R_{\sigma_1} = \lceil H_{\sigma_1} - H_{\sigma_{i_l}} + \alpha \Delta
\rceil_\Delta,
\end{align*}
where $\lceil a \rceil_\Delta := \min\{ i \Delta: i \in \mathbb{N},
i\Delta \ge a \}$.  Thereafter, the parties in $\sigma_1$ as well as
$\sigma_{i_l}$ communicate at sum-rate $\Delta$ per round. Thus, in
this case,
\begin{align} \label{eq:started-communicating-parties}
R_{\sigma_1}^{\mathtt{out}} - R_{\sigma_{i_l}}^{\mathtt{out}} = \lceil
H_{\sigma_1} - H_{\sigma_{i_l}} + \alpha \Delta \rceil_\Delta.
\end{align}
Upon combining \eqref{eq:initially-communicating-parties} and
\eqref{eq:started-communicating-parties}, we get that for every $1 \le
j,l \le c$,
\begin{align}
R_{\sigma_{i_j}}^{\mathtt{out}} - R_{\sigma_{i_l}}^{\mathtt{out}} &\le
H_{\sigma_{i_j}} - H_{\sigma_{i_l}} + (2\alpha +1) \Delta \nonumber
\\ &= R_{\sigma_{i_j}}^*(A_\sigma) - R_{\sigma_{i_l}}^*(A_\sigma) +
(2\alpha+1)\Delta,
 \label{eq:difference-within-2alpha+1}
\end{align} 
where the previous equation is by \eqref{e:rstar_diff_formula}.

Now, we prove the lower bound in (\ref{item:range-of-Rout}). Suppose
that there exists a $j \in \{1,\ldots,c\}$ such that
\begin{align*}
R_{\sigma_{i_j}}^{\mathtt{out}} < R_{\sigma_{i_j}}^*(A_\sigma) +
(|\sigma_{i_j}| - 2\alpha -1) \Delta.
\end{align*}
It follows from \eqref{eq:difference-within-2alpha+1} that
\begin{align}
\sum_{l=1}^c R_{\sigma_{i_l}}^{\mathtt{out}} &< \sum_{l=1}^c
R_{\sigma_{i_l}}^*(A_\sigma) + \sum_{l=1}^c |\sigma_{i_l}| \Delta
\nonumber \\ &= \sum_{l=1}^c R_{\sigma_{i_l}}^*(A_\sigma) + |A|\Delta
\nonumber \\ &= \mathbb{H}_{\sigma_f}(A_\sigma) + |A|
\Delta, \label{eq:upper-bound-for-contradiction}
\end{align}
where the previous equation is by \eqref{e:rstar_sum_formula2}. Also,
since no error occurs and parties in $A$ attain omniscience, by Remark
\ref{remark:error-event},
\begin{align*}
(R_j^{\mathtt{out}}: j \in A) \in \cR_{\mathtt{CO}}^\Delta(A),
\end{align*}
which in turn implies that
\begin{align*}
\sum_{l=1}^c R_{\sigma_{i_l}}^{\mathtt{out}} &= \frac{1}{c-1}
\sum_{l=1}^c \sum_{j=1 \atop j \neq l}^c
R_{\sigma_{i_j}}^{\mathtt{out}} \\ &\ge \frac{1}{c-1} \sum_{l=1}^c
\left[ H(X_A|X_{\sigma_{i_l}}) + (|A|- |\sigma_{i_l}|) \Delta \right]
\\ &= \mathbb{H}_{\sigma_f}(A_\sigma) + |A| \Delta,
\end{align*}
which contradicts \eqref{eq:upper-bound-for-contradiction}. Thus, for
every $1 \le l \le c$,
\begin{align*}
R_{\sigma_{i_l}}^{\mathtt{out}} &\ge R_{\sigma_{i_l}}^*(A_\sigma) +
(|\sigma_{i_l}| - 2 \alpha -1) \Delta \\ &\ge
R_{\sigma_{i_l}}^*(A_\sigma) - 2\alpha \Delta.
\end{align*}

Moving to the proof of the upper bound in (\ref{item:range-of-Rout}),
suppose that there exists an $l$ such that
\begin{align} \label{eq:assumption-contradiction}
R_{\sigma_{i_l}}^{\mathtt{out}} > R_{\sigma_{i_l}}^*(A_\sigma) + (m+
2\alpha +1)\Delta.
\end{align}
From Lemma \ref{lemma:induction}, there exists $B \subseteq
\{1,\ldots, c\}$ with $|B| \ge 2$ such that
\begin{align} \label{eq:exist-B}
(R_{\sigma_{i_l}}^*(A_\sigma) + |\sigma_{i_l}| \Delta : l \in B) \in
  \cR_{\mathtt{CO}}^\Delta(\{ \sigma_{i_l} : l \in B\}).
\end{align}
Then, \eqref{eq:difference-within-2alpha+1},
\eqref{eq:assumption-contradiction} and \eqref{eq:exist-B} imply (by
noting $|\sigma_{i_l}| < m$) that
\begin{align} \label{eq:Rout-included-in-RCO}
(R_{\sigma_{i_l}}^{\mathtt{out}} - \Delta : l \in B) \in
  \cR_{\mathtt{CO}}^\Delta(\{ \sigma_{i_l} : l \in B\}).
\end{align} 
Also, note that for every $j \in \sigma_{i_l}$, $1 \le l \le c$ with
$|\sigma_{i_l}| \ge 2$,
\begin{align} \label{eq:at-least-one-communication}
R_j^{\mathtt{out}} - R_j^{\mathtt{in}} \ge
\frac{\Delta}{|\sigma_{i_l}|},
\end{align}
since otherwise there is no communication in the execution of
$\mathrm{OMN}$, which in turn by Remark \ref{remark:error-event}
contradicts the assumption that $\mathbf{R}^{\mathtt{in}}$ is
$(\sigma, \mathbf{H},\alpha)$-valid. Upon combining
\eqref{eq:at-least-one-communication} with
\eqref{e:part_omn_condition}, we get
\begin{align*}
\left( R_j^{\mathtt{out}} - \frac{\Delta}{|\sigma_{i_l}|} : j \in
\sigma_{i_l} \right) \in \cR_{\mathtt{CO}}^\Delta(\sigma_{i_l})
\end{align*}
for every $1 \le l \le c$ with $|\sigma_{i_l}| \ge 2$, which together
with \eqref{eq:Rout-included-in-RCO} and Lemma \ref{l:combine_omn}
yields
\begin{align*}
\left( R_j^{\mathtt{out}} - \frac{\Delta}{|\sigma_{i_l}|} : j \in
\sigma_{i_l}, l \in B \right) \in \cR_{\mathtt{CO}}^\Delta\left(
\bigcup_{l \in B} \sigma_{i_l} \right).
\end{align*}
But then by Remark \ref{remark:error-event} the parties in $\cup_{l
  \in B} \sigma_{i_l}$ attain omniscience one round before
$\mathrm{OMN}$ terminates, which is a contradiction since no error has
occurred and $\mathrm{OMN}$ must terminate as soon as a subset in
$\cO$ is recognized.

\paragraph*{Proof of (\ref{item:validity-of-Rout})}
For each $\sigma_i^{\mathtt{out}} \in \sigma^{\mathtt{out}}$ either
$\sigma_i^{\mathtt{out}} \in \sigma$ or $\sigma_i^{\mathtt{out}} \in
\cO$; in the latter case, by (\ref{item:decomposition-of-A}),
$\sigma_i^{\mathtt{out}}$ must equal a union of parts of
$\sigma$. Note that, by the argument leading to
\eqref{eq:difference-within-2alpha+1}, for every $\sigma_i,\sigma_j
\in \sigma$ such that $R_{\sigma_i}^{\mathtt{out}} \ge 0$ and
$R_{\sigma_j}^{\mathtt{out}} \ge 0$
\begin{align} \label{eq:difference-sigma-i-sigma-j-out}
R_{\sigma_i}^{\mathtt{out}} - R_{\sigma_j}^{\mathtt{out}} \le
H_{\sigma_i} - H_{\sigma_j} + (2\alpha + 1)\Delta.
\end{align}
Also, for $\sigma_i^{\mathtt{out}} = \cup_{l=1}^c \sigma_{i_l} \in
\cO$, note that\footnote{We show the argument for $\sum_{l=2}^c
  R_{\sigma_{i_l}}^{\mathtt{out}}$; the same argument extends to
  $\sum_{l=1, l \neq i}^c R_{\sigma_{i_l}}^{\mathtt{out}}$ for every
  $i\in \{2, \ldots, c\}$.} by (\ref{item:range-of-Rout})
\begin{align*}
\sum_{l=2}^c R_{\sigma_{i_l}}^*(\{ \sigma_{i_1},\ldots,\sigma_{i_c}
\}) - 2 \alpha c \Delta &\le \sum_{l=2}^c
R_{\sigma_{i_l}}^{\mathtt{out}} \nonumber \\ &\le \sum_{l=2}^c
R_{\sigma_{i_l}}^*(\{ \sigma_{i_1},\ldots,\sigma_{i_c} \}) +
(mc+2\alpha c) \Delta,
\end{align*}
which by \eqref{e:rstar_sum_formula3} is the same as
\begin{align}
H(X_{\sigma_i^{\mathtt{out}}} | X_{\sigma_{i_1}}) - 2 \alpha c \Delta
&\le \sum_{l=2}^c R_{\sigma_{i_l}}^{\mathtt{out}} \nonumber \\ &\le
H(X_{\sigma_i^{\mathtt{out}}} | X_{\sigma_{i_1}}) + (mc + 2\alpha c)
\Delta.
  \label{eq:Rout-2-c-bound}
\end{align}

To prove condition (\ref{condition:constant-difference}) in the
definition of a valid rate vector ($cf.$~Definition \ref{d:valid}),
consider $\sigma_i^{\mathtt{out}}$ and $\sigma_j^{\mathtt{out}}$ such
that $R_{\sigma_i^{\mathtt{out}}}^{\mathtt{out}} \ge 0$ and
$R_{\sigma_j^{\mathtt{out}}}^{\mathtt{out}} \ge 0$. The following
three cases are possible:
\begin{itemize} 
\item Case $\sigma_i^{\mathtt{out}}, \sigma_j^{\mathtt{out}} \in
  \sigma$: In this case, the claim follows from
  \eqref{eq:difference-sigma-i-sigma-j-out}.

\item Case $\sigma_i^{\mathtt{out}} \in \cO$, $\sigma_j^{\mathtt{out}}
  \in \sigma$: Let $\sigma_i^{\mathtt{out}} = \cup_{l=1}^c
  \sigma_{i_l}$. Then, by \eqref{eq:difference-sigma-i-sigma-j-out}
  and \eqref{eq:Rout-2-c-bound}
\begin{align*}
R_{\sigma_i^{\mathtt{out}}}^{\mathtt{out}} -
R_{\sigma_j^{\mathtt{out}}}^{\mathtt{out}} &= \sum_{l=2}^c
R_{\sigma_{i_l}}^{\mathtt{out}} + R_{\sigma_{i_1}}^{\mathtt{out}} -
R_{\sigma_j^{\mathtt{out}}}^{\mathtt{out}} \\ &\le
H(X_{\sigma_i^{\mathtt{out}}} | X_{\sigma_{i_1}}) +
H(X_{\sigma_{i_1}}) - H(X_{\sigma_j^{\mathtt{out}}}) + (mc+2\alpha c +
2\alpha + 1) \Delta \\ &= H(X_{\sigma_i^{\mathtt{out}}}) -
H(X_{\sigma_j^{\mathtt{out}}}) + (mc+2\alpha c + 2\alpha + 1) \Delta,
\end{align*}
and similarly,
\begin{align*}
R_{\sigma_j^{\mathtt{out}}}^{\mathtt{out}} -
R_{\sigma_i^{\mathtt{out}}}^{\mathtt{out}} \le
H(X_{\sigma_j^{\mathtt{out}}}) - H(X_{\sigma_i^{\mathtt{out}}}) + (2
\alpha c + 2\alpha + 1)\Delta.
\end{align*} 

\item Case $\sigma_i^{\mathtt{out}}, \sigma_j^{\mathtt{out}} \in \cO$:
  Using argument similar to the previous case, we can show
\begin{align*}
R_{\sigma_i^{\mathtt{out}}}^{\mathtt{out}} -
R_{\sigma_j^{\mathtt{out}}}^{\mathtt{out}} \le
H(X_{\sigma_i^{\mathtt{out}}}) - H(X_{\sigma_j^{\mathtt{out}}}) + (mc
+ 4 \alpha c + 2\alpha + 1) \Delta.
\end{align*}
\end{itemize} 

Condition (\ref{condition:noncommunicating-parties}) can be proved
similarly by considering two cases: $\sigma_1^{\mathtt{out}} \in
\sigma$ and $\sigma_1^{\mathtt{out}} \in \cO$. Specifically, let
$s^\prime = \max\{ i : R_{\sigma_i}^{\mathtt{out}} \ge 0\}$.  If
$\sigma_1^{\mathtt{out}} \in \sigma$, then $\sigma_1^{\mathtt{out}} =
\sigma_1$ and condition (\ref{condition:noncommunicating-parties})
holds since the party\footnote{It can be seen that the parts of
  $\sigma$ which did not start communicating must be singleton.}
$\sigma_{s^\prime+1}^{\mathtt{out}}$ did not start communicating. On
the other hand, if $\sigma_1^{\mathtt{out}} = \cup_{l=1}^c
\sigma_{i_l} \in \cO$, then
\begin{align*}
R_{\sigma_1^{\mathtt{out}}}^{\mathtt{out}} &=
R_{\sigma_1}^{\mathtt{out}} + R_{\sigma_{i_1}}^{\mathtt{out}} -
R_{\sigma_1}^{\mathtt{out}} + \sum_{l=2}^c
R_{\sigma_{i_l}}^{\mathtt{out}} \\ &< H_{\sigma_1} -
H_{\sigma_{s^\prime +1}^{\mathtt{out}}} + H_{\sigma_{i_1}} -
H_{\sigma_1} + H(X_{\sigma_1^{\mathtt{out}}} | X_{\sigma_{i_1}}) + (mc
+ 2 \alpha c + 3\alpha +1) \Delta \\ &= H_{\sigma_1^{\mathtt{out}}} -
H_{\sigma_{s^\prime +1}^{\mathtt{out}}} + (mc + 2 \alpha c + 3\alpha
+1) \Delta,
\end{align*}
where the strict inequality is by
\eqref{eq:difference-sigma-i-sigma-j-out} and
\eqref{eq:Rout-2-c-bound}, since the party
$\sigma_{s^\prime+1}^{\mathtt{out}}$ did not start communicating.

For condition (\ref{condition:combined-parties}), if
$\sigma_i^{\mathtt{out}}$ equals to a part of $\sigma$, then
$(R_j^{\mathtt{out}} : j \in \sigma_i^{\mathtt{out}}) \in
\cR_{\mathtt{CO}}^\Delta(\sigma_i^{\mathtt{out}})$ since
$(R_j^{\mathtt{in}} : j \in \sigma_i^{\mathtt{out}}) \in
\cR_{\mathtt{CO}}^\Delta(\sigma_i^{\mathtt{out}})$ and
$R_j^{\mathtt{out}} \ge R_j^{\mathtt{in}}$ for every $1 \le j \le
m$. On the other hand, if $\sigma_i^{\mathtt{out}}$ belongs to $\cO$,
then $(R_j^{\mathtt{out}} : j \in \sigma_i^{\mathtt{out}}) \in
\cR_{\mathtt{CO}}^\Delta(\sigma_i^{\mathtt{out}})$ by Remark
\ref{remark:error-event}.

Finally, for condition (\ref{condition:separate-parts}), if there
exists $A \subseteq \{1,\ldots, |\sigma^{\mathtt{out}}| \}$, $|A| \ge
2$, such that
\begin{align*}
(R_j^{\mathtt{out}} : j \in \sigma_i^{\mathtt{out}}, i \in A) \in
  \cR_{\mathtt{CO}}^\Delta\left( \bigcup_{i\in A}
  \sigma_i^{\mathtt{out}} \right),
\end{align*} 
then by Remark \ref{remark:error-event} $\cup_{i \in A}
\sigma_i^{\mathtt{out}} \in \cO$, which further implies that $\cup_{i
  \in A} \sigma_i^{\mathtt{out}} \in \cO$ is a part of
$\sigma^{\mathtt{out}}$, a contradiction. Thus, condition
(\ref{condition:separate-parts}) must hold for
$\mathbf{R}^{\mathtt{out}}$. \qed


\subsection{Proofs of Theorem~\ref{theorem:for-individual-sequence}  and Corollary \ref{col:main}}

For Theorem~\ref{theorem:for-individual-sequence}, by Lemma
\ref{lemma:decoding-error} the probability of the error event $\cE =
\cE(\mathbf{x})$ is bounded above by $C_1 L p(n) 2^{-n \Delta}$ for
some constant $C_1$, where $L$ is the maximum number of rounds and is
bounded above by $\frac{\log |\cX_{\cM}|}{\Delta}+m$ using
\eqref{e:L_bound}. Under the assumption that the error event $\cE$ did
not occur, at the end of the $j$th call to $\omn$ with input partition
$\sigma$, Theorem \ref{t:recursion} guarantees that the total number
of bits sent by each subset $A \in \cO$ is bounded above
by\footnote{The $\log n$ term corresponds to the bits communicated to
  share types of the locally recovered observations. Additional $C_3
  L$ bits are added to account for the overhead arising from
  rounding-off the required number of bits to an integer and ACK/NACK
  bits for each round.}
\begin{align} \label{eq:total-number-each-call-OMN}
n \mH_{\sigma_f(A_\sigma)}(A_\sigma |\bPP{\bx_A}) + n c (m+2 \alpha_j)
\Delta + C_3 L + C_4 \log n,
\end{align}
for some constants $C_3,C_4 > 0$, where $\alpha_j$ is recursively
defined by setting $\alpha_1 =1$ and $\alpha_{j+1} = c_m^\prime
\alpha_j$ with $c_m^\prime$ given in
Theorem~\ref{t:recursion}-(\ref{item:validity-of-Rout}). Since the
size of partition $\sigma$ strictly decreases in each execution of
$\omn$, the number of calls to $\omn$ is at most $m$ and $\alpha_j$
remains bounded above by a constant that depends only on $m$.
Theorem~\ref{theorem:for-individual-sequence} follows upon using
\eqref{eq:total-number-each-call-OMN} for $A = \cM$, and noting that
\begin{align*}
\mH_{\sigma_f(\cM_\sigma)}(\cM_\sigma |\bPP{\bx_\cM}) &=
\mH_{\sigma}(\cM | \bPP{\bx_\cM}) \\ &\le
R_{\mathtt{CO}}(\cM|\bPP{\mathbf{x}_\cM}),
\end{align*}
where the inequality is by \eqref{e:rco_formula}.

Corollary~\ref{col:main} is obtained as a consequence of
Theorem~\ref{theorem:for-individual-sequence} as follows.  First, note
that under the error event $\cE$, which occur with probability less
than $C_1 p(n) L 2^{-n \Delta}$, the number of communicated bits is
bounded above by $C_5 n$ for some constant $C_5>0$.  Next, by the
Taylor approximation of the entropy function around $\bPP{X_\cM}$, for
$\bQQ{X_\cM}$ satisfying $\| \bPP{X_\cM} - \bQQ{X_\cM} \| \le \delta$
and $\mathtt{supp}(\bQQ{X_\cM}) \subset \mathtt{supp}(\bPP{X_\cM})$,
we have
\begin{align} \label{eq:taylor}
\big| R_{\mathtt{CO}}(\cM| \bPP{X_\cM}) - R_{\mathtt{CO}}(\cM|
\bQQ{X_\cM}) \big| \le C_6 \delta
\end{align}
for a sufficiently small $\delta$, where $C_6 >0$ is a constant that
depends\footnote{The dependence of $C_6$ on $\bPP{X_\cM}$ can be
  omitted by replacing $\delta$ with $\delta\log
  \frac{|X_\cM|}{\delta}$.} on $\bPP{X_\cM}$.  Denoting
\begin{align*}
\cB_\delta(\bPP{X_\cM}) := \{ \bQQ{X_\cM} : \| \bPP{X_\cM} -
\bQQ{X_\cM} \| \le \delta,\, \mathtt{supp}(\bQQ{X_\cM}) \subset
\mathtt{supp}(\bPP{X_\cM}) \},
\end{align*}
Theorem~\ref{theorem:for-individual-sequence} implies that, when $\cE$
does not occur, the number of bits communicated is no more than
\begin{align*}
&nR_{\mathtt{CO}}(\cM|\bPP{X_\cM}) + C_6 n \delta + \Pr\big(
  \mathtt{type}(X_\cM^n) \notin \cB_\delta(\bPP{X_\cM}) \big) n \log
  |\cX_\cM| + C_2 n\Delta + C_3 L + C_4 \log n \\ &\leq n
  R_{\mathtt{CO}}(\cM|\bPP{X_\cM}) + C_6 n \delta + 2|\cX_\cM|
  \exp(-2n\delta^2) n \log |\cX_\cM| + C_2 n \Delta + C_3 L + C_4 \log
  n,
\end{align*}
where the inequality uses the Hoeffding bound
\begin{align*}
\Pr\big( \mathtt{type}(X_\cM^n) \notin \cB_\delta(\bPP{X_\cM}) \big)
\le 2 |\cX_\cM| \exp(-2n\delta^2).
\end{align*}
The claimed upper bound for the expected number of bits communicated
follows by combining the bounds under $\cE$ and $\cE^c$ and setting
$\delta = \sqrt{\frac{ \log n}{n}}$, $\Delta =
\frac{1}{\sqrt{n}}$. \qed

\subsection{Proof of Theorem~\ref{t:performance_SK_protocol}}

We first recall the leftover hash lemma ($cf.$~\cite{Ren05}); a proof
of the version stated below is given in, for instance, \cite[Appendix
  B]{HayTW14}.

\begin{lemma}[Leftover Hash] \label{lemma:leftover-hash}
Consider random variables $X$ and $V$ taking values in finite sets
$\cX$ and $\cV$, respectively. Let $S$ be a random seed such that
$f_S$ is uniformly distributed over a $2$-universal hash family. Then,
for $K = f_S(X)$, we have
\begin{align*}
\| \bPP{K V S} - \bPP{\mathtt{unif}} \times \bPP{V} \times \bPP{S}
\|_1 \le \frac{1}{2} \sqrt{ |\cV| 2^{- H_{\min}(\bPP{X})}},
\end{align*}
where $\bPP{\mathtt{unif}}$ is the uniform distribution on $\cK$ and
\begin{align*}
H_{\min}(\bPP{X}) = - \log \max_x \bP{X}{x}.
\end{align*}
\end{lemma}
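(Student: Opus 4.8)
The plan is to carry out the classical leftover‑hash argument: reduce the $\ell_1$ distance to an $\ell_2$ (collision‑probability) quantity via Cauchy--Schwarz, move the average over the random seed $S$ inside the square root by Jensen's inequality, and then use the $2$‑universal property to bound the resulting expected collision probability by $2^{-H_{\min}(\bPP{X})}$. First I would note that, because $S$ is drawn independently of $(X,V)$, the product reference distribution $\bPP{\mathtt{unif}}\times\bPP{V}\times\bPP{S}$, restricted to a value $S=s$, is exactly $\bPP{\mathtt{unif}}\times\bPP{V}$, so that
\[
\big\| \bPP{KVS} - \bPP{\mathtt{unif}}\times\bPP{V}\times\bPP{S} \big\|_1 = \sum_s \bP{S}{s}\, \big\| \bPP{KV|S=s} - \bPP{\mathtt{unif}}\times\bPP{V} \big\|_1 .
\]
Bounding each inner term by Cauchy--Schwarz over the support $\cK\times\cV$ and then applying Jensen's inequality to the concave function $t\mapsto\sqrt{t}$,
\[
\big\| \bPP{KVS} - \bPP{\mathtt{unif}}\times\bPP{V}\times\bPP{S} \big\|_1 \le \frac12 \sqrt{\, |\cK|\,|\cV| \cdot \bEE_S\!\left[ \big\| \bPP{KV|S} - \bPP{\mathtt{unif}}\times\bPP{V} \big\|_2^2 \right] }\,,
\]
the factor $\tfrac12$ absorbing the normalization of $\|\cdot\|_1$.

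The main computation is that of the expected squared $\ell_2$ deviation. Using $\bPP{KV|S=s}(k,v) = \sum_{x:\, f_s(x)=k} \bPP{XV}(x,v)$, expanding the square, and using $\sum_k \bPP{KV|S=s}(k,v) = \bP{V}{v}$ to evaluate the cross and uniform terms, one gets
\[
\bEE_S\!\left[ \big\| \bPP{KV|S} - \bPP{\mathtt{unif}}\times\bPP{V} \big\|_2^2 \right] = \sum_{v}\sum_{x,x'} \Pr\big(f_S(x)=f_S(x')\big)\, \bPP{XV}(x,v)\,\bPP{XV}(x',v) \;-\; \frac{1}{|\cK|}\sum_{v}\bP{V}{v}^2 .
\]
I would then split the sum over $(x,x')$ into the diagonal $x=x'$, which contributes $\sum_{x,v}\bPP{XV}(x,v)^2$, and the off‑diagonal $x\neq x'$, where the $2$‑universal property gives $\Pr(f_S(x)=f_S(x'))\le 1/|\cK|$, so that this part is at most $\tfrac{1}{|\cK|}\sum_v\big(\sum_x\bPP{XV}(x,v)\big)^2 = \tfrac{1}{|\cK|}\sum_v \bP{V}{v}^2$. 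This bound cancels the subtracted term exactly, leaving $\bEE_S[\,\|\cdot\|_2^2\,] \le \sum_{x,v}\bPP{XV}(x,v)^2$.

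To finish, I would bound the remaining collision probability by the largest atom: since $\bPP{XV}(x,v)\le\bP{X}{x}$,
\[
\sum_{x,v}\bPP{XV}(x,v)^2 \le \Big(\max_{x,v}\bPP{XV}(x,v)\Big)\sum_{x,v}\bPP{XV}(x,v) \le \max_x \bP{X}{x} = 2^{-H_{\min}(\bPP{X})},
\]
and substitute into the displayed bound of the first paragraph to obtain $\tfrac12\sqrt{|\cK|\,|\cV|\,2^{-H_{\min}(\bPP{X})}}$, the claimed estimate. There is no deep obstacle here, but two bookkeeping points require care: one must genuinely use the independence of $S$ from $(X,V)$ so that the conditional reference in the first step is $\bPP{\mathtt{unif}}\times\bPP{V}$ (and not something $s$‑dependent); and the $2$‑universal inequality may be invoked only for \emph{distinct} inputs $x\neq x'$, and it is precisely the off‑diagonal term it produces that must be recognized as cancelling the cross/uniform terms in the $\ell_2$ expansion, so that the right‑hand side comes out clean rather than carrying a spurious extra $1/|\cK|$.
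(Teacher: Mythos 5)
Your argument is the standard leftover-hash derivation (convert $\ell_1$ to $\ell_2$ via Cauchy--Schwarz, push the seed average inside the square root by Jensen, expand the collision probability and use $2$-universality to kill the off-diagonal term), and each step is carried out correctly. The paper itself does not prove this lemma; it cites \cite[Appendix B]{HayTW14}, so there is no internal proof to compare against, but your proof is the same classical argument that reference uses. Your handling of the two delicate points you flag --- that the conditional reference distribution at fixed $S=s$ is still $\bPP{\mathtt{unif}}\times\bPP{V}$ by independence, and that $2$-universality applies only for $x\neq x'$ so the off-diagonal bound exactly cancels the cross/uniform terms --- is right.

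One thing worth noting: the bound you correctly derive is
$\tfrac12\sqrt{|\cK|\,|\cV|\,2^{-H_{\min}(\bPP{X})}}$, carrying a factor $|\cK|$. The lemma as printed in the paper reads $\tfrac12\sqrt{|\cV|\,2^{-H_{\min}(\bPP{X})}}$ with the $|\cK|$ missing; this appears to be a typo, since the paper's own application in the proof of Theorem~\ref{t:performance_SK_protocol} plugs in $|\cK|=2^{k(t)}$ (it bounds the third term by $\tfrac12\sqrt{2^{l(t)+k(t)}(n+1)^{|\cX_\cM|}2^{-nH(t)}}$, which is exactly $\tfrac12\sqrt{|\cK|\,|\cV|\,2^{-H_{\min}}}$ after using the min-entropy lower bound). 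So your computation is not in error; it recovers the bound the paper actually uses, with $|\cK|$ restored.
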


We assume that the public randomness $U$ used in Protocol~\ref{p:main}
is available to the eavesdropper.
Denote by $\cE$ the error event of Protocol~\ref{p:main}, which is
determined by $(X_\cM^n,U)$, and by $\Pi^\prime$ an expurgated
transcript defined as
\begin{align*}
\Pi^\prime = \left\{
\begin{array}{ll}
\Pi, & \mbox{if } (X_\cM^n,U) \notin \cE, \\ \mathtt{constant}, &
\mbox{otherwise.}
\end{array}
\right.
\end{align*}
Our security analysis will show that $\Pi^\prime$ reveals negligible
information about the SK and then use the large probability of
agreement between $\Pi$ and $\Pi^\prime$ to claim the security of the
SK. Note that when the joint type of $X_\cM^n$ is
$\bPP{\overline{X}_\cM}$ and an error did not occur in
Protocol~\ref{p:main}, the length of the transcript $\Pi$ is bounded
by $l(\bPP{\overline{X}_\cM})$; thereby the length of $\Pi^\prime$ is
bounded by $l(\bPP{\overline{X}_\cM})$ as well.

For each realization $X_\cM^n = \mathbf{x}$, we generate a SK of
length $k(\bPP{\mathbf{x}})$ by randomly hashing $\mathbf{x}$ to
$k(\bPP{\mathbf{x}})$ bits\footnote{Specifically, we use a seeded
  extractor for each fixed joint type $\bPP{\mathbf{x}}$. For ease of
  presentation, we omit the dependence on seed from our
  notation.}. Clearly, the recoverability condition is satisfied with
$1-\ep_n$, where $\ep_n$ is the error probability of
Protocol~\ref{p:main}. Without loss of generality, we assume that the
eavesdropper has access to the joint type $\bPP{\mathbf{x}}$ of
$\mathbf{x}$. Note that such an eavesdropper has potentially more
information than that available to the actual eavesdropper in our
protocol. Thus, security against this stronger eavesdropper implies
security against the actual eavesdropper. Denoting by $T=t$ a fixed
realization of the random type, triangular inequality yields
\begin{align*}
& \sum_{t \in \cP_n(\cX_\cM)} P_T(t) \| \bPP{K \Pi U |T=t} -
  \bPP{\mathtt{unif},t} \times \bPP{\Pi U|T=t} \| \\ &~~~\le \sum_{t
    \in \cP_n(\cX_\cM)} P_T(t) \bigg[ \| \bPP{K\Pi U |T=t} - \bPP{K
      \Pi^\prime U |T=t} \| + \| \bPP{\Pi^\prime U|T=t} - \bPP{\Pi
      U|T=t} \| \\
&~~~~~~+ \| \bPP{K\Pi^\prime U|T=t} - \bPP{\mathtt{unif}} \times
    \bPP{\Pi^\prime U|T=t} \| \bigg],
\end{align*}
where $\bPP{\mathtt{unif},t}$ is the uniform distribution on
$\{0,1\}^{k(t)}$.  The first two terms on the right-side above are
each bounded above by $\Pr\big( \Pi \neq \Pi^\prime \big)$. Also, by
Lemma \ref{lemma:leftover-hash} applied for each fixed $(t,u)$, the
third term is bounded above by
\[
 \sum_{t \in \cP_n(\cX_\cM)} P_T(t) \frac{1}{2} \sqrt{ 2^{l(t) + k(t)}
   (n+1)^{|\cX_\cM|} 2^{- H(t)}},
\]
where we have used the independence of $U$ and $X_\cM^n$ and the
observation that
\begin{align*}
H_{\min}(\bPP{X_\cM^n|T=t,U=u}) &= H_{\min}(\bPP{X_\cM^n|T=t}) \\ &\ge
n H(t) - |\cX_\cM| \log (n+1).
\end{align*}
Thus, by combining the bounds above, we get
\begin{align*}
\lefteqn{ \sum_{t \in \cP_n(\cX_\cM)} P_T(t) \| \bPP{K \Pi U |T=t} -
  \bPP{\mathtt{unif},t} \times \bPP{\Pi U|T=t} \| } \\ &\le 2 \Pr\big(
\Pi \neq \Pi^\prime \big) + \sum_{t \in \cP_n(\cX_\cM)} P_T(t)
\frac{1}{2} \sqrt{ 2^{l(t) + k(t)} (n+1)^{|\cX_\cM|} 2^{- H(t)}}
\\ &\le \delta,
\end{align*}
where the previous inequality uses $\bPr{\Pi \neq \Pi^\prime} \leq
\ep_n$ and the definitions of $l(t)$, $k(t)$, and $\delta$.  The
average length $ \sum_{t \in \cP_n(\cX_\cM)} P_T(t) k(t)$ is lower
bounded by \eqref{eq:lower-bound-average-key-length} using
Theorem~\ref{theorem:key-capacity}, in a similar manner as the proof
of Corollary \ref{col:main}. \qed



\section*{Acknowledgment}

SW is supported in part by the JSPS KAKENHI under grant 16H06091. HT
is supported in part by the Defense Research and Development
Organisation (DRDO), India under grant DRDO0649. Authors thank Navin
Kashyap for pointing to \cite[Theorem 5.2]{ChanBEKL15} and
\cite{ChanB15}.

\bibliography{IEEEabrv,references} \bibliographystyle{IEEEtranS}

\end{document}